\newcommand{\ie}{{\em i.e.\,}}
\newcommand{\eg} {{\em e.g.\,}}
\newcommand{\st}{\emph{s.t.}\,}
\newcommand{\round}{R}
\newcommand{\rmL}{\mathrm{L}}
\newcommand{\rmI}{\mathrm{I}}
\newcommand{\rmM}{\mathrm{M}}
\newcommand{\calA}{{\cal A}}
\newcommand{\calC}{{\cal C}}
\newcommand{\calE}{{\cal E}}
\newcommand{\calG}{{\cal G}}
\newcommand{\calP}{{\cal P}}
\newcommand{\calR}{{\cal R}}
\newcommand{\calS}{{\cal S}}
\newcommand{\calX}{{\cal X}}
\newcommand{\calL}{{\cal L}}
\newcommand{\onlyline}[5]{
\refstepcounter{#1}
\label{#2}
${#3}_{\ref{#2}}$ & #4 & $\longrightarrow$ & #5
\rule[-0.05in]{0in}{0in}
\\}
\newcommand{\firstline}[5]{
\refstepcounter{#1}
\label{#2}
${#3}_{\ref{#2}}$ & #4 & $\longrightarrow$ & #5 \\}
\newcommand{\lastline}[2]{\rule[-0.05in]{0in}{0in}
& #1 && #2 \\}
\newcommand{\substituteline}[5]{
\refstepcounter{#1}
\label{#2}
${#3}_{\ref{#2}}$ & \hspace{-0.2cm} #4 &\hspace{-0.3cm} $\longleftarrow$ &\hspace{-0.3cm} #5
\rule[-0.05in]{0in}{0in}
\\}
\newcommand{\lfirstline}[3]{\firstline{aep}{#1}{\rmL}{#2}{#3}}
\newcommand{\lonlyline}[3]{\onlyline{aep}{#1}{\rmL}{#2}{#3}}
\newcommand{\refl}[1]{\rmL_{\ref{aep:#1}}}
\newcommand{\isubstitute}[3]{\substituteline{init}{#1}{\rmI}{#2}{#3}}
\newcommand{\refi}[1]{\rmI_{\ref{init:#1}}}
\newcommand{\msubstitute}[3]{\substituteline{merge}{#1}{\rmM}{#2}{#3}}
\newcommand{\refm}[1]{\rmM_{\ref{merge:#1}}}
\newtheorem{theorem}{Theorem}
\newtheorem{lemma}{Lemma}
\theoremstyle{definition}
\newcommand{\clr}{\mathtt{cl}}
\newcommand{\reset}{\mathtt{rst}}
\newcommand{\ipair}{\mathit{Illegal\_Pair}}
\newcommand{\topdown}{\mathit{Down\_OK}}
\newcommand{\bottomup}{\mathit{Up\_OK}}
\newcommand{\varparent}{\mathtt{parent}}
\newcommand{\parent}{\mathit{Par}}
\newcommand{\child}{\mathit{Chi}}
\newcommand{\Dist}{\mathit{Dist}}
\newcommand{\dist}{\mathtt{dist}}
\newcommand{\border}{\mathtt{border}}
\newcommand{\far}{\mathtt{far}}
\newcommand{\groupdist}{\mathtt{groupD}}
\newcommand{\mergedist}{\mathtt{mergeD}}
\newcommand{\Mergedist}{\mathit{MergeDist}}
\newcommand{\Target}{\mathit{Target}}
\newcommand{\target}{\mathtt{target}}
\newcommand{\Candidates}{\mathit{Cand}}
\newcommand{\prior}{\mathtt{prior}}
\newcommand{\Prior}{\mathit{Prior}}
\newcommand{\Minprocess}{\mathbf{Min}}
\newcommand{\Share}{\mathbf{Share}}
\newcommand{\Distance}{\mathbf{Distance}}
\newcommand{\stampon}{\mathtt{stampON}}
\newcommand{\Stampone}{\mathit{Stamp}_1}
\newcommand{\stampone}{\mathtt{stamp1}}
\newcommand{\stamptwo}{\mathtt{stamp2}}
\newcommand{\Stampdist}{\mathit{Stamp}_D}
\newcommand{\stampdist}{\mathtt{stampD}}
\newcommand{\Detector}{\mathit{Detector}}
\newcommand{\Saturated}{\mathit{Saturated}}
\newcommand{\Groupok}{\mathit{GrpOK}}
\newcommand{\Groupsok}{\mathit{GrpsOK}}
\newcommand{\Stampok}{\mathit{StampOK}}
\newcommand{\Groupdistok}{\mathit{GrpDistOK}}
\newcommand{\vfind}{v_{\mathit{find}}}
\newcommand{\vfound}{v_{\mathit{found}}}
\newcommand{\numbergroup}{\#_g}
\newcommand{\numberblack}{\#_b}
\newcommand{\numberprior}{\#_p}
\newcommand{\total}{\#}
\newcommand{\diam}{D}
\newcommand{\id}{\mathit{id}}
\newcommand{\ID}{\mathit{ID}}
\newcommand{\bfsenabled}{\BFS\text{--Enabled}}
\newcommand{\penabled}{\calP\text{--Enabled}}
\newcommand{\aenabled}{\calA\text{--Enabled}}
\newcommand{\xenabled}{\calX\text{--Enabled}}
\newcommand{\mode}{\mathtt{mode}}
\newcommand{\Domain}{\mathit{Domain}}
\newcommand{\domain}{\mathtt{domain}}
\newcommand{\initgroup}{\mathtt{initGroup}}
\newcommand{\Initgroup}{\mathit{InitGroup}}
\newcommand{\nfalse}{n_{\mathit{false}}}
\newcommand{\Height}{{\it Height\/}}
\newcommand{\height}{\mathtt{height}}
\newcommand{\BFS}{\mathrm{BFS}}
\newcommand{\dmax}{k}
\newcommand{\textif}{\textbf{if}\ }
\newcommand{\otherwise}{\textbf{otherwise}}
\newcounter{tbl}
\newcommand{\ta}{T_{\calA}}
\newcommand{\ra}{R_{\calA}}
\newcommand{\loopa}{L_{\calA}}
\newcommand{\tp}{T_{\calP}}
\newcommand{\gsub}{G}
\newcommand{\group}{\mathtt{group}}
\newcommand{\Group}{\mathit{Group}}
\newcommand{\groups}{\mathtt{groups}}
\newcommand{\Merging}{\mathit{Merging}}
\newcommand{\merging}{\mathtt{merging}}
\newcommand{\same}[1]{S_{#1}}
\newcommand{\tr}{\mathbf{true}}
\newcommand{\fl}{\mathbf{false}}
\newcommand{\leg}{\calL_{k}}
\newcommand{\legBFS}{\calL_{\mathrm{BFS}}}
\newcommand{\cfinal}{\calC_{\mathrm{fin}}}
\newcommand{\safea}{\calS_{\calA}}
\newcommand{\safep}{\calS_{\calP}}
\newcommand{\czerofour}{\calC_{\{0,4\}}}
\newcommand{\cgoal}{\calC_{\mathrm{goal}}}
\newcommand{\errorA}{\calE}
\newcommand{\errorM}{\calE}
\newcommand{\errA}{E}
\newcommand{\inputA}{I_\calA}
\newcommand{\inputP}{I_\calP}
\newcommand{\outputA}{O_\calA}
\newcommand{\outputP}{O_\calP}
\newcommand{\outputM}{O_{Merge}}
\newcommand{\invar}[1]{\overline{#1}}
\newcommand{\init}{\mathit{Init}}
\newcommand{\merge}{\mathit{Merge}}
\newcommand{\aep}{\mathrm{Loop}(\calA,\errA,\calP)}
\newcommand{\errmerge}{E}
\newcommand{\mei}{\mathrm{Loop}(\merge,\errmerge,\init)}
\newcommand{\rmcopy}{\mathrm{copy}}
\newcommand{\la}{\leftarrow}
\newcommand{\tH}{\textsuperscript{th}\xspace}
\newcommand{\sT}{\textsuperscript{st}\xspace}
\title{A Self-Stabilizing Minimal k-Grouping Algorithm\thanks{This is a revised version of the conference paper \cite{DLMS17},
which appears in the proceedings of the 18th International Conference on Distributed Computing and Networking (ICDCN), ACM, 2017.
This revised version slightly generalize Theorem \ref{theorem:aep}.}}
\date{}
\author[1]{Ajoy K.\, Datta}
\author[1]{Lawrence L.\, Larmore}
\author[2]{Toshimitsu Masuzawa}
\author[1]{Yuichi Sudo}
\affil[1]{Department of Computer Science University of Nevada, Las Vegas}
\affil[2]{Graduate School of Information Science and Technology, Osaka University}
\begin{document}
\maketitle

%

\begin{abstract}
We consider the minimal $\dmax$-grouping problem:
given a graph $G=(V,E)$ and a constant $\dmax$,
partition $G$ into subgraphs of diameter no greater than $\dmax$,
such that the union of any two subgraphs has diameter greater than $\dmax$.
We give a silent self-stabilizing asynchronous distributed algorithm
for this problem in the composite atomicity model of computation,
assuming the network has unique process identifiers. 
Our algorithm works under the \emph{weakly-fair daemon}.
The time complexity (\ie the number of rounds
to reach a legitimate configuration) of our algorithm is
$O\left(\frac{n\diam}{\dmax}\right)$
where $n$ is the number of processes in the network and
$\diam$ is the diameter of the network.
The space complexity of each process is $O((n +\nfalse)\log n)$
where $\nfalse$ is the number of false identifiers, i.e., identifiers
that do not match the identifier of any process,
but which are stored in the local memory of at least one process
at the initial configuration.
Our algorithm guarantees that
the number of groups is at most $2n/\dmax+1$ after convergence.
We also give a novel composition technique
to concatenate a silent algorithm repeatedly, which we
call \emph{loop composition}.
\end{abstract}


\section{Introduction}\label{sec: introduction}
Modern networks or distributed systems generally
consist of numerous computers (or processes).
Therefore, it is important, in some applications,
to partition such a system into
a set of groups, among each of which
processes communicate with each other without much delay.
This problem has been formalized as $\dmax$-clustering.
In the literature, 
the following similar but different two definitions exist
for $\dmax$-clustering:
given a graph $G(V,E)$ and a constant integer $\dmax$,
one is to find a partition of $V$ into $\{V_1,\dots,V_s\}$
such that every subgraph $\gsub(V_i)$ induced by $V_i$
has radius no greater than $\dmax$,
and the other one is to find a partition of $V$ into
$\{V_1,\dots,V_s\}$ such that
every subgraph $\gsub(V_i)$ has diameter no greater than $\dmax$.
In the former case,
it is also required to designate one process of each subgraph $G(V_i)$
as a cluster-head such that all processes of $G(V_i)$
are located within $\dmax$ hops from the cluster head.
In this paper, we call the former \emph{asymmetric $\dmax$-clustering}
and call the latter 
\emph{$\dmax$-grouping}.

\subsection*{Our Contributions}
This paper considers $\dmax$-grouping.
We aim to construct a minimal $\dmax$-grouping,
since finding the minimum $\dmax$-grouping is known to be NP-hard
for $\dmax > 0$~\cite{Deogun97}.
Specifically, a partition of $V$ into $V_1,\dots,V_s$
is said to be a minimal $k$-grouping
if every two distinct subgraphs $G(V_i)$ and $G(V_j)$
are unmergeable, that is,
$G(V_i \cup V_j)$ has diameter greater than $\dmax$.
We give a self-stabilizing minimal $\dmax$-grouping algorithm
for any undirected graph with unique process-identifiers.
The time complexity of our algorithm is $O(n\diam/\dmax)$ rounds,
where $n = |V|$ and $\diam$ is the diameter of the network, while
the space complexity is $O((n+\nfalse)\log n)$ bits per process,
where $\nfalse$ is the number of false identifiers, i.e., identifiers
stored in the memory of processes in the initial configuration
which do not match the identifier of any process.
Our algorithm also guarantees that the number of groups is at most $2n/k+1$.

We introduce a novel composition technique, \emph{loop composition},
to define our algorithm.
To the best of our knowledge,
this is the first composition technique
that enables the same algorithm (called the \emph{base algorithm})
to be executed arbitrarily many
times repeatedly.
Specifically,
every time an execution of the base algorithm terminates,
a new execution of the base algorithm starts
after the values of all output variables
are copied to the corresponding input variables,
unless a specific condition holds.
This composition technique reuses the same variables
of the base algorithm repeatedly.
It helps us design an algorithm
with small space complexity.
Moreover, it also helps us eliminate certain assumptions.
For example, our algorithm repeats at most $n/\dmax$ executions
of its base algorithm, but it does not need to know
even an upper bound of $n$.

\subsection*{Related Work}
Distributed $\dmax$-grouping algorithms 
are given in~$\cite{Fernandess02,Ducourthial10}$.
Fernandess and Malkhi~\cite{Fernandess02}
give a non self-stabilizing $\dmax$-grouping algorithm in a unit disk graph.
Their algorithm does not guarantee minimality,
but does guarantee an $O(\dmax)$-approxi{-\allowbreak}mation, meaning that the number
of groups it produces is within an $O(k)$ factor of the optimum number
of groups.
Its time complexity is $O(n)$, and its space complexity
is $O(k \log n)$ bits per process.
Ducourthial and Khalfallah~\cite{Ducourthial10}
give a self-stabilizing $\dmax$-grouping algorithm
in a unit disk graph.
Their algorithm guarantees minimality of groups.
However, no upper bound on the number of groups is given.
The upper bound of the time complexity is also not given.

Finding the minimum asymmetric
$\dmax$-clustering is also $NP$-hard~\cite{Amis00}.
Two self-stabilizing asymmetric $\dmax$-clustering algorithms
exist in the literature~\cite{Datta12,Datta13};
The algorithm of \cite{Datta12} is $O(\dmax)$-competitive
while that of \cite{Datta13} guarantees
that clusters it produces are minimal
and the number of the clusters is at most $(k+1)/n$.

There are a variety of techniques to compose two or more
self-stabilizing algorithms
in the literature~\cite{Dolev00,Dolev99,Datta00,Herman92,Datta13,Yamauchi09,
Yamauchi10}, such as fair-composition~\cite{Dolev99}
and parallel composition~\cite{Dolev00},
but none of them enables an unbounded number of repetitions
of the same algorithm, such as ours.


\section{Preliminaries}\label{sec: preliminaries}
A connected
undirected network $G=(V,E)$ of $n=|V|$ processes is given where $n\ge 2$.
Each process $v$ has a unique identifier $v.\id$
chosen from a set $\ID$ of non-negative integers.
We assume that $|\ID| \le O(n^c)$ holds for some constant $c$,
thus, a process can store an identifier in $O(\log n)$ space.
By an abuse of notation, we will identify each process with its identifier,
and vice versa, whenever convenient.
We call a member of $\ID$ a \emph{false identifier}
if it does not match the identifier of any process in $V$.

We use the locally shared memory model~\cite{D74}.
A process is modeled by a state machine
and its state is defined by the values of its variables.
A process can read the values of its own
and its neighbors' variables simultaneously,
but can update only its own variables.
An algorithm of each process $v$ is defined to be
a finite set of actions of the following form:
$<label> <guard>\ \longrightarrow \ <statement>$. 
The \emph{label} of each action is used for reference.
The \emph{guard} is a predicate on the variables and identifiers of $v$ and
its neighbors.
The \emph{statement} is an assignment which
updates the state of $v$.
An action can be executed only if it is \emph{enabled}, \ie
its guard evaluates to true, and
a process is \emph{enabled}
if at least one of its actions is enabled.
The evaluation of a guard and the execution of the corresponding statement
are presumed to take place 
in one atomic step,
according to the composite atomicity model~\cite{Dolev00}.

A \emph{configuration} of the network is
a vector consisting of a state for each process.
We denote by $\gamma(v).x$ the value
of variable $x$ of process $v$ in configuration $\gamma$.
Each transition from
a configuration to another, called a {\em step\/} of the algorithm,
is driven by a {\em daemon}.
We assume the \emph{distributed daemon} in this paper;
at each step, the distributed daemon selects one or more enabled processes
to execute an action.
If a selected process has two or more enabled actions,
it executes the action with the smallest label number.
We write $\gamma \mapsto_{\calA} \gamma'$
if configuration $\gamma$ can change to $\gamma'$ by one step of
algorithm $\calA$.
We define an {\em execution\/} of algorithm $\calA$ to be a
sequence of configurations
$\gamma_0,\gamma_1, \cdots$ such that $\gamma_i \mapsto_{\calA} \gamma_{i+1}$
for all $i \ge 0$.
We assume the daemon to be \emph{weakly-fair}, meaning that a
continuously enabled process must be selected eventually.

An execution is \emph{maximal}
if it is infinite, or it terminates at a {\em final\/} configuration,
\ie a configuration at which no process
is enabled.
Let $\calL$ be a predicate on configurations.
We say that a configuration $\gamma$ of $\calA$ is \emph{safe} for $\calL$
if every execution $\gamma_0,\gamma_1,\dots$ of $\calA$ starting from $\gamma$
(\ie $\gamma_0 = \gamma$) always satisfies $\calL$, that is,
$\calL(\gamma_i)$ holds for all $i \ge 0$.
Algorithm $\calA$ is said to be $\emph{self-stabilizing}$ for $\calL$ 
if there exists a set $\calC$ of configurations of $\calA$ 
such that every configuration in $\calC$ is safe
and  every maximal execution $\gamma_0,\gamma_1,\dots$
of $\calA$ reaches a configuration in $\calC$,
\ie $\gamma_i \in \calC$ holds for some $i \ge 0$.
We also say that $\calA$ is {\em silent\/} if every execution of $\calA$
is finite.
Thus, a silent algorithm $\calA$ is self-stabilizing for predicate $\calL$ if and only if
every final configuration satisfies $\calL$.
In the remainder of this paper, we say ``$\gamma \in \calL$"
or ``configuration $\gamma$ in $\calL$" to mean that $\calL(\gamma) = \tr$.

We measure time complexity of an execution in \emph{rounds}~\cite{Dolev00}. 
We say that process $v$ is \emph{neutralized} at step
$\gamma_i \mapsto \gamma_{i+1}$ if $v$ is enabled at $\gamma_i$
and not at $\gamma_{i+1}$.
We define the first round of an execution
$\varrho=\gamma_0,\gamma_1,\dots$
to be the minimum prefix  
$\gamma_0\ldots \gamma_s$
during which every process enabled at $\gamma_0$
executes an action or is neutralized.
The second round of $\varrho$ is defined to be the first round
of the execution $\gamma_s, \gamma_{s+1}, \ldots$, and so forth.
We evaluate the number of rounds of $\rho$,
denoted by $\round(\varrho)$,
as the \emph{time} of $\varrho$.

Let $N_v$ denote the neighbors of a process $v$,
\ie $N_v = \{u \mid \{u,v\}\in E\}$.
The distance $d(u,v)$ between processes $u$ and $v$ is defined to
be the smallest length of any path between them, where length is
defined to be the number of edges. 
Define $N_v^{i}=\{u \in V \mid d(u,v)\le i\}$, the \emph{$i$-neighborhood}
of $v$. Note that $N_v^{1} = N_v \cup \{v\}$. 
Graph $\gsub(V')$ denotes 
the subgraph of $G=(V,E)$ induced by the set of processes $V' \subseteq V$.
Define $d_{V'}$, or $d_{\gsub(V')}$, to be the distance in $\gsub(V')$.
If there is no path in $\gsub(V')$ between $u$ and $v$, we say
$d_{V'}(u,v) = \infty$.
We denote the diameter of $\gsub(V')$ by $\diam(\gsub(V'))$,
defined to be $\max\{d_{V'}(u,v)\mid u,v\in V'\}$.
We write $\diam=\diam(G)$.
(Note that $\diam(G') = \infty$ if $G'$ is not connected.)

By an abuse of notation,
we sometimes regard a predicate on configurations
as the set of configurations throughout the paper.
For example,
we write $\gamma \in \calL_1 \cap \calL_2$
when a configuration $\gamma$
satisfies both predicates $\calL_1$ and $\calL_2$.

\subsection*{Problem Specification}\label{par: problem}
Given an integer $\dmax$,
our goal is to find a {\em minimal\/} partition\footnotemark{}
$\{V_1,\dots,V_s\}$
with diameter-bound $\dmax$,
which means
(i) $V=\bigcup_{i=1}^s V_i$,
(ii) $V_i \cap V_j = \emptyset$ for any $i\ne j$,
(iii) $\diam(\gsub(V_i)) \le \dmax$ for any $i$,
and $\diam(\gsub(V_i \cup V_j)) > \dmax$ holds for $i \ne j$.
We assume that each process $v$ has 
variable $v.\group \in \ID$. 
We define 
predicate $\leg(\gamma)$
as follows: $\leg(\gamma) = \tr$
if and only if, in configuration $\gamma$,
$\{V(i) \neq \emptyset \mid i \in V\}$
is a minimal partition with diameter-bound $\dmax$
where $V(i) = \{u \in V \mid u.\group = i\}$.
Our goal is to develop a silent self-stabilizing
algorithm for $\leg$.
\footnotetext{
This definition of minimality, which is the same as \cite{Ducourthial10},
does not allow
any pair of groups to be mergeable,
but it allows some three or more groups to be mergeable.
Intuitively, it seems quite costly (perhaps requires exponential time),
even in centralized computing, to
find a ``strict minimal'' solution where no set of groups is mergeable.
}

\section{Loop composition}
\label{sec:composition}
In this section, we give a novel composition
technique to develop a new algorithm.
Given two algorithms $\calA$ and $\calP$
and a predicate $\errA$,
this technique generates 
a silent self-stabilizing algorithm $\aep$
for predicate $\calL$,
whose time complexity is
$O(n+\tp + \ra + \loopa \diam)$
rounds.
As we shall see later,
$\tp$ is an upper bound on the number of rounds
of any maximal execution of  $\calP$,
$\loopa$ is an upper bound on the number of iterations
of $\calA$'s executions,
$\ra$ is an upper bound on the total number of rounds of those (iterated) executions in $\calA$.

\subsection{Preliminaries}
Algorithm $\calA$ is the base algorithm of $\aep$,
and is executed repeatedly
during one execution of $\aep$. 
Algorithm $\calA$ assumes that
the network always stays at a configuration
satisfying a specific condition.
Predicate $\errA$ is used to detect that the network
deviates from that condition, and algorithm $\calP$
is invoked to initialize the network
when such a deviation is detected. 
We define $\outputA$ (resp.~$\outputP$) as the set of variables
whose values can be updated by an action of $\calA$ (resp.~$\calP$),
and $\inputA$ (resp.~$\inputP$) as the set of variables
whose value is never updated and only read
by an action of $\calA$ (resp.~$\calP$).
We assume $\outputA \cap \outputP = \emptyset$
and $\inputP = \emptyset$.
The error detecting predicate $\errA(v)$
is evaluated by process $v \in V$,
and its value depends only on
variables of $\inputA \cup \outputP$ of the processes of $N_v^{1}$.
Let $\errorA$ be a predicate on configurations
such that $\errorA(\gamma)$ holds if and only if
$\bigvee_{v\in V}\errA(v)$ holds in configuration $\gamma$.
We assume that algorithm $\calA$ has a \emph{copying variable}
$\invar{x} \in \inputA$
for every variable $x \in \outputA$.
We define $\gamma^\rmcopy$ as the configuration
obtained by replacing the value of $v.\invar{x}$
with the value of $v.x$ for every process $v$ and every
variable $x \in \outputA$ in configuration $\gamma$.
We define predicate $\cgoal(\calA,E)$ as follows:
configuration $\gamma$ satisfies $\cgoal(\calA,E)$
if and only if
$\gamma \in \lnot \errorA$,
$\gamma^\rmcopy = \gamma$,
and no action of $\calA$ is enabled in any process.
In the rest of this section,
we simply denote $\cgoal(\calA,E)$ by $\cgoal$.
We assume that
$\calA$ satisfies the following three requirements:
\begin{description}
 \item[Shiftable Convergence] 
Every maximal execution of $\calA$
that starts from a configuration in $\lnot\errorA$
terminates 
at a configuration $\gamma$
such that $\gamma^\rmcopy \in \lnot\errorA$.
\item[Loop Convergence]
If $\varrho_0, \varrho_1 \dots$ 
is an infinite sequence of maximal executions of $\calA$
where $\varrho_i =\gamma_{i,0},\gamma_{i,1},\dots,\gamma_{i,s_i}$,
$\gamma_{0,0} \in \lnot \errorA$,
and $\gamma_{i+1,0} = \gamma_{i,s_i}^\rmcopy$ for each $i \ge 0$,
then $\gamma_{j,s_j} \in \cgoal$
and
$\round(\varrho_0)+\round(\varrho_1)+\dots\round(\varrho_j) < \ra$
hold for some $j < \loopa$, and
\item[Correctness]
$\gamma \in \cgoal \Rightarrow \gamma \in \calL$
holds for every configuration $\gamma$.
\end{description}
Algorithm $\calP$ is used to initialize a network
when the network stays at a configuration in $\errorA$. 
We assume that every maximal execution of $\calP$
terminates at a configuration in $\lnot \errorA$ within $\tp$ rounds.

\subsection{Algorithm {\large ${\aep}$}}
In this subsection,
we present an algorithm
$\aep$ given $\calA$, $\errA$, and $\calP$
satisfying the above requirements.
Our algorithm is a silent self-stabilizing
for $\calL$ and its time complexity is
$O(n+\tp + \ra + \loopa \diam)$ rounds.

Our strategy to implement $\aep$ is simple:
The network executes $\calA$ repeatedly while
it stays at configurations in $\lnot \errorA$.
When an execution of $\calA$ terminates,
each process $v$ copies the value of $v.x$
to $v.\invar{x}$ for all $x \in \outputA$,
and the network restarts a new execution of $\calA$,
unless it reaches a configuration in $\cgoal$.
The network executes $\calP$ when
it stays in $\errorA$.
The assumptions of $\calA$ and $\calP$
guarantee that
the network eventually reaches
a configuration in $\calL$.

We give the actions of $\aep$ in Table \ref{tbl:loop6}.
In what follows, we denote by $\xenabled(v)$ that some action of
algorithm $\calX$ is enabled at process $v$.

We use the algorithm of \cite{Datta11},
(denoted by BFS in this paper)
as a module to construct a BFS tree ($\rmL_1$).
The tree is used as a communication backbone
for $\aep$.
Algorithm BFS is silent and self-stabilizing
and constructs a BFS tree:
each process $v$ has variable $v.\varparent \in N_v \cup \{\bot\}$
where $\bot$ means \emph{undefined} or \emph{null value},
and every maximal execution of BFS terminates 
at a configuration in $\legBFS$
where $r.\varparent = \bot$
for some $r \in V$
and the set of edges $\{(v,v.\varparent) \mid v \neq r\}$
spans a BFS tree rooted at $r$.
We define $\parent(v) = \{v.\varparent\}$
and $\child(v) = \{u \in V \mid u.\varparent = v\}$. 
We regard $\{\bot\}$ as the empty set,
hence $\parent(r) = \emptyset$
in a configuration satisfying $\legBFS$.
Thus, a predicate such as ``$\forall u \in \parent(r):\dots $''
always holds, and a predicate
such as ``$\exists u \in \parent(r):\dots $''
never holds.
This algorithm terminates within $O(n)$ rounds
and uses $O(\log n)$ space per process.
We discuss the following part
assuming that an execution of BFS already terminated
and the network stays at configurations in $\legBFS$.

The difficulty we face is to detect termination
of each execution of $\calA$ and $\calP$,
and prevent any two processes from performing different executions
at the same time
(\eg executions of $\calA$ and $\calP$
or the $i$\tH execution and $(i+1)$\sT execution of $\calA$).
We overcome it with a \emph{color wave} mechanism.
Specifically, each process $v$ has three variables
$v.\clr \in \{0,1,2,3,4\}$, $v.\mode \in \{A,P\}$
and $v.\reset \in \{0,1\}$.
Variable $v.\clr$ is the {\em color\/} of process $v$.
Colors 0, 1, and 2 mean that the current execution of $\calA$
or $\calP$ may not have terminated yet,
Color 3 means that the execution has already terminated,
and Color 4 means that the network is now in transition to the next execution.
Variable $v.\mode$ indicates
which algorithm is currently executing.
As we shall see later,
all processes must have the same mode
unless some process has color 4.
Process $v$ executes $\calA$ (resp.~$\calP$)
if $\aenabled(v)$ (resp.~$\penabled(v)$)
and every $u \in N_v^1$ satisfies
$u.\mode = A$ (resp.~$u.\mode = P$)
and $u.\clr \neq 4$
($\refl{aenabled}$-$\refl{penabled}$).
A \emph{reset flag}
is used to prevent any process from having color 3 or 4
until the current execution of $\calA$ or $\calP$ terminates.
Process $v$ raises a reset flag (\ie $v.\reset \la 1$)
every time it executes an action of $\calA$
or $\calP$ ($\refl{aenabled}$--$\refl{penabled}$).
The raised flag is propagated from $v$ to the root $r$
through the BFS tree
($\refl{propagate_reset}$),
resulting in changing $r$'s color to 0
unless it is already 3 or 4 ($\refl{color_reset}$).
The reset flags are eventually dropped 
in order from $v$ to $r$ ($\refl{del_reset}$).
Once $r$'s color is reset to 0,
all other processes will also change their colors to 0
because a process changes its color to 0
when its parent has color 0 ($\refl{color_init}$--$\refl{color_init34}$).

Color waves are propagated through the BFS tree (Figure \ref{fig:color-waves}).
Suppose that all processes have color 0 now.
When $r.\reset = 0$,
the root $r$ begins a top-down color-1-wave
changing all processes' colors from $0$ to $1$.
Specifically, each $v \in V$ changes its color from 0 to 1
when its parent has color 1 ($r$ ignores this condition),
all its children have color 0,
and $v.\reset = 0$ ($\refl{down}$).
After the color-1-wave reaches all leaves,
a bottom-up color-2-wave begins from leaves to $r$
changing all processes' color from $1$ to $2$.
Specifically, each $v \in V$ changes its color from 1 to 2
when its parent has color 1 ($r$ ignores this condition),
and all its children have color 2 ($\refl{to2}$).
As we will prove later, 
the current execution of $\calA$ or $\calP$ has already
terminated when $r$ receives color-2-wave
(Lemma \ref{lem:at_root2}).
Thereafter,
$r$ begins a top-down color-3-wave in the same way
as a color-1-wave ($\refl{down}$).
A process changes its color from 3 to 4
if the network should shift to the next execution
of $\calA$.
Specifically, process $v$ with mode $A$ changes its color to 4
if some process has $\invar{x}\neq x$
for some variable $x \in \outputA$,
and it performs $v.\invar{x} \la v.x$
for all variables $x \in \outputA$
for the next execution of $\calA$ ($\refl{to4_A}$). 
Note that this color-4-wave is propagated by simple flooding
(not through the $\BFS$ tree). 
A process with mode $P$ changes its color to 4 without this condition,
and changes its mode to $A$ ($\refl{to4_P}$).
In both cases, a process changes its color to 4
only after all its children change their colors from 2 to 3
($\refl{to4_A}$, $\refl{to4_P}$);
Otherwise, a color-3-wave may stop at the process.
Finally, a bottom-up Color-0-wave moves from the leaves
to $r$ changing all colors from 4 to 0.
Specifically, a process 
changes its color from 4 to 0 
if its parent has color 4 ($r$ ignores this condition),
all its children have color 0,
and every neighboring process has color 0 or 4 ($\refl{to0}$).
The last condition is needed to prevent
the process from executing the next execution of $\calA$
before all its neighbors finish their copy procedure
($v.\invar{x} \la v.x$).
All processes eventually return to color 0,
and the network starts a new execution of $\calA$.

 \begin{table*}[t]
 \refstepcounter{tbl}
  \label{tbl:loop6}
 \center
 \vspace{-0.5cm}
 \begin{tabular}{llll}
  \toprule
  \multicolumn{4}{c}
  {\textbf{Table \ref{tbl:loop6}: $\aep$}} \\
  \midrule
  \multicolumn{4}{l}{\textbf{[Notations]}}\\

  \multicolumn{4}{l}{
  $\ipair = \{(1,3),(1,4),(2,0),(2,1),(2,3),(2,4),(3,0),(3,1),(4,1),(4,2)\}$
  }\\

  \multicolumn{4}{l}{
  $\topdown(v) \equiv (\forall u \in \parent(v): u.\clr = v.\clr+1) \wedge (\forall u \in \child(v): u.\clr = v.\clr)$
  }\\
  
  \multicolumn{4}{l}{
  $\bottomup(v) \equiv (\forall u \in \parent(v): u.\clr = v.\clr) \wedge (\forall u \in \child(v): u.\clr \equiv v.\clr+1 \pmod{5})$
  }\\
  
  \midrule
  \multicolumn{4}{l}{\textbf{[Actions of process $v$]}}\\
  
  \lonlyline{aep:bfs}{$\bfsenabled(v)$}{execute BFS}
  
  \lonlyline{aep:color_reset}
  {$(v.\clr \notin \{0,3,4\})\wedge(v.\reset = 1)$}
  {$v.\clr \la 0$}
  
  \lonlyline{aep:color_init}
  {$(v.\clr \in \{1,2\})\wedge
 (\exists u \in \parent(v): u.\clr = 0)$}
  {$v.\clr \la 0$}
  
  \lonlyline{aep:color_init34}
  {$(v.\clr \in \{3,4\})\wedge
 (\exists u \in \parent(v): u.\clr = 0)$}
  {$v.\clr \la 0$, $v.\reset \la 1$}
  
  \lonlyline{aep:error}
  {$(v.\mode = A)\wedge\errA(v)
\wedge(\forall u \in N_v^{1}:u.\clr \neq 4)$}
  {$v.\mode \la P$, $v.\reset \la 1$}
  \lonlyline{aep:afindsp}
  {$(v.\mode = A) \wedge (v.\clr \neq 4)
 \wedge (\exists u \in N_v: u.\mode = P)$}
  {$v.\mode \la P$, $v.\reset \la 1$}
  
  \lonlyline{aep:aenabled}
  {$\aenabled(v) \wedge
 (\forall u \in N_v^{1}:u.\mode = A
 \wedge u.\clr \neq 4)$}
  {execute $\calA$, $v.\reset \la 1$}
  
  \lonlyline{aep:penabled}
  {$\penabled(v) \wedge
 (\forall u \in N_v^{1}:u.\mode = P
 \wedge u.\clr \neq 4)$}
  {execute $\calP$, $v.\reset \la 1$}
  
  \lonlyline{aep:illegal}
  {$\exists u \in \child(v):(v.\clr,u.\clr) \in \ipair$}
  {$v.\clr \la 0$, $v.\reset \la 1$}

  \lonlyline{aep:propagate_reset}
  {$(v.\reset = 0) \wedge(\exists u \in \child(v): u.\reset = 1)$}
  {$v.\reset \la 1$}
  
  \lfirstline{aep:del_reset}
  {$(v.\reset = 1)  \wedge (\forall u \in \parent(v): u.\reset = 1)$}
  {$v.\reset \la 0$}
  \lastline
  {\hfill $\wedge (\forall u \in \child(v): u.\reset = 0)$}
  {}  
  
  \lonlyline{aep:down}
  {$(v.\clr \in \{0,2\})\wedge \topdown(v) \wedge (v.\reset = 0)$}
  {$v.\clr \la v.\clr+1$}
  
  \lonlyline{aep:to2}
  {$(v.\clr = 1)\wedge \bottomup(v)$}
  {$v.\clr \la 2$}

  \lfirstline{aep:to4_A}
  {$(v.\clr = 3)\wedge (v.\mode = A)\wedge(\forall u\in \child(v):u.\clr \neq 2)$}
  {$v.\clr \la 4$, $v.\invar{x} \la v.x$}
  \lastline
  {\hfill $\wedge (\exists x \in \outputA: v.\invar{x} \neq  v.x \vee \exists u \in N_v: u.\clr = 4)$}
  {\hfill for all $x \in \outputA$}

  \lonlyline{aep:to4_P}
  {$(v.\clr = 3)\wedge (v.\mode = P)\wedge(\forall u\in \child(v):u.\clr \neq 2)$}
  {$v.\clr \la 4$, $v.\mode \la A$}  
  
  \lfirstline{aep:to0}
  {$(v.\clr = 4) \wedge \bottomup(v) \wedge (\forall u \in N_v:u.\clr \in\{0,4\})$}
  {$v.\clr \la 0$}

  \bottomrule
 \end{tabular}
  \vspace{-0.2cm}
 \end{table*}

 \begin{figure}
  \centering
  \includegraphics[width=\hsize,clip]{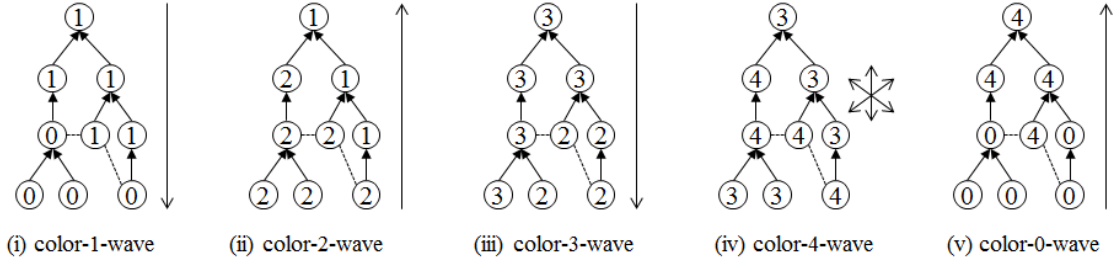}
  \caption{Color-waves in $\aep$.
  Numbers in circles represent colors of processes.
  Arrows represent edges of the BFS tree,
  and dashed lines represent other edges.}
  \label{fig:color-waves}
  \end{figure}

The states of the processes
may be incoherent
in an arbitrary initial configuration,
but $\aep$ resolves the incoherence.
Process $v$ changes
its mode from $A$ to $P$ when $v$ finds
$\errA(v) = \tr$ or 
some neighboring process has mode $P$
($\refl{error}$-$\refl{afindsp}$).
At this time,
$v$ ignores $\errA(v) = \tr$
when some process $u \in N_v^{1}$ has color 4 ($\refl{error}$),
and ignores a neighboring process with mode $P$
when $v.\clr = 4$ ($\refl{afindsp}$).
These exceptions are needed because,
during a color-0-wave,
$\errA(v)$ cannot be evaluated correctly
and a process with mode $A$ can have a neighbor
with mode $P$.
Color-incoherence preventing color waves
is removed as follows:
process $v$ changes its color to 0 if
some $u \in \child(v)$ satisfies
\begin{align*}
 (v.\clr,u.\clr) \in \ipair = &\{(1,3), (1,4), (2,0), (2,1),
 (2,3),\\ &(2,4), (3,0), (3,1), (4,1), (4,2)\}.
\end{align*}
A reset flag is raised every time incoherence is detected and solved
($\refl{color_init34}$, $\refl{error}$, $\refl{afindsp}$, or
$\refl{illegal}$) except for $\refl{color_init}$.
This exception exists only to simplify the proof, as we shall see later.
 
An execution of $\aep$ terminates when
every process $v$
satisfies $v.\mode = A$, $v.\clr =3$,
$v.\reset = 0$,
$\errA(v) = \fl$, $\aenabled(v) = \fl$, and
$v.x = v.\invar{x}$ for all $x \in \outputA$.
We denote the set of such configurations
by $\cfinal$.
Note that $\cfinal \subseteq \cgoal \subseteq \calL$.

\subsection{Correctness}
In this subsection,
we prove that every maximal execution of $\aep$
terminates at a configuration in $\cfinal$
within
$O(n+\tp + \ra + \loopa \diam)$ rounds
regardless of its initial configuration.
For $i = 0,1,2,3,4$, we define $\calR_{i}$ as
the set of configurations in $\legBFS$
where the root $r$ has color $i$.
We define $\calC_S$ as
the set of configurations in $\legBFS$ where
every process has
a color in $S\subseteq\{0,1,2,3,4\}$.

\begin{lemma}
 \label{lem:to0orfinal}
 Let $\varrho = \gamma_0,\gamma_1,\dots$
 be a maximal execution of $\aep$
 starting from $\gamma_0 \in \legBFS$.
 Then, $\varrho$ reaches a configuration in $\calR_{0}$
 or $\cfinal$
 within $O(\diam)$ rounds. 
\end{lemma}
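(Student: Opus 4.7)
The plan is to do a case analysis on $c := \gamma_0(r).\clr$, the initial color of the root, and in each case bound by $O(\diam)$ the number of rounds before $\varrho$ enters $\calR_0 \cup \cfinal$. The high-level idea is that each color wave (top-down $0 \to 1$ and $2 \to 3$, bottom-up $1 \to 2$ and $4 \to 0$, and the $3 \to 4$ flooding) is a traversal of the BFS tree of depth at most $\diam$, so each wave completes in $O(\diam)$ rounds. From any initial root color, the root either returns to $0$ after at most four such waves or the configuration has stabilized to $\cfinal$.

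The case $c = 0$ is trivial. For $c \in \{1,2\}$, I would split on whether any reset flag ever appears in the tree. Reset flags present in $\gamma_0$, or generated in the first round by any of $\refl{illegal}$, $\refl{color_init34}$, $\refl{error}$, $\refl{afindsp}$, $\refl{aenabled}$, or $\refl{penabled}$, propagate to $r$ via $\refl{propagate_reset}$ in $O(\diam)$ rounds, after which $\refl{color_reset}$ immediately changes $r.\clr$ to $0$. If no reset flag ever appears, the colors are consistent and $r$ progresses $1 \to 2$ via $\refl{to2}$ and then $2 \to 3 \to 4 \to 0$ through the subsequent waves, each lasting $O(\diam)$ rounds. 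For $c = 4$, the bottom-up $\refl{to0}$-wave reaches $r$ in $O(\diam)$ rounds. For $c = 3$, $r$ transitions $3 \to 4$ via $\refl{to4_A}$ or $\refl{to4_P}$ once its subtree has drained of color-$2$ nodes, then $4 \to 0$; the alternative is that a $3 \to 4$ transition is never enabled at $r$, and I would argue that this forces the whole configuration into $\cfinal$. Throughout these wave arguments I would use induction on BFS-tree depth, paying attention to the full-closed-neighborhood conditions in $\refl{to0}$, $\refl{aenabled}$, and $\refl{penabled}$, which require stray non-tree-neighbor colors to be cleaned up first by $\refl{color_init}$ or $\refl{color_init34}$.

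The main obstacle is the ``stuck at color $3$'' scenario. Assume $r.\mode = A$ and $\refl{to4_A}$ is never enabled at $r$; then $r.\invar{x} = r.x$ for every $x \in \outputA$ and no neighbor of $r$ ever attains color $4$. I would need to show that \emph{every} process $v$ satisfies the $\cfinal$ conditions $v.\mode = A$, $v.\clr = 3$, $v.\reset = 0$, $\errA(v) = \fl$, $\aenabled(v) = \fl$, and $v.x = v.\invar{x}$ for all $x \in \outputA$. Any violation at some process would eventually produce a reset flag through one of $\refl{error}$, $\refl{afindsp}$, $\refl{aenabled}$, $\refl{penabled}$, or $\refl{illegal}$, which by $\refl{propagate_reset}$ would reach $r$ in $O(\diam)$ rounds and reduce us to Case $c \in \{1,2\}$ with $r.\clr$ returning to $0$. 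Carefully stitching these observations together, while tracking the interplay between colors, reset flags, and modes (and the asymmetry introduced by the reset-flag exception of $\refl{color_init}$), is the heart of the argument.
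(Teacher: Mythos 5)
Your overall strategy coincides with the paper's: the paper proves this lemma by chaining five sub-lemmas, one per root color ($\calR_1$ reaches $\calR_0$ or $\calR_2$; $\calR_2$ reaches $\calR_0$ or $\calR_3$; $\calR_3$ reaches $\calR_0$, $\calR_4$, or $\cfinal$; $\calR_4$ reaches $\calR_0$ or $\calR_4\cap\czerofour$; and $\calR_4\cap\czerofour$ reaches $\calR_0$), each phase costing $O(\diam)$ rounds. That is exactly your case analysis on $\gamma_0(r).\clr$ with one wave per phase, including the reset-flag shortcut for $c\in\{1,2\}$ and the ``stuck at $3$ implies $\cfinal$'' endgame.

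There is, however, a concrete wrong step in your handling of $c=3$ (and implicitly $c=4$): you claim that any violation raises a reset flag which propagates to $r$ by $\refl{propagate_reset}$, ``after which $\refl{color_reset}$ immediately changes $r.\clr$ to $0$,'' reducing to the $c\in\{1,2\}$ case. The guard of $\refl{color_reset}$ is $(v.\clr\notin\{0,3,4\})\wedge(v.\reset=1)$, so a reset flag arriving at a root of color $3$ or $4$ does \emph{not} reset its color --- deliberately so, since legitimate executions of $\calA$ raise reset flags via $\refl{aenabled}$ and must not abort the transition to the next iteration. The actual escape routes from $\calR_3$, which the paper uses, are different: a color-$3$ process with a child of color $0$ or $1$ resets by $\refl{illegal}$ (since $(3,0),(3,1)\in\ipair$) and these resets cascade up to $r$; a color-$3$ process with a child of color $4$ itself moves to color $4$ by $\refl{to4_A}$ or $\refl{to4_P}$ once its color-$2$ children have drained (the paper isolates this draining as a separate helper, Lemma \ref{lem:2disappear}); hence within $O(\diam)$ rounds one is in $\calR_0$, $\calR_4$, or $\calC_{\{3\}}$, and in the last case any violation of the $\cfinal$ conditions forces a color $4$ to appear somewhere, which then propagates to the root. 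Similarly for $c=4$ the paper must first eliminate colors $1,2,3$ from the whole network (by induction on BFS level, landing in $\calR_4\cap\czerofour$) before the $\refl{to0}$ wave can complete; again a reset flag cannot reset the root there. Your proof needs to be repaired along these lines; the rest of it matches the paper's argument.
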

\begin{proof}
 Obtained
 from the following Lemmas \ref{lem:from1},
 \ref{lem:from2}, \ref{lem:from3},
 \ref{lem:from4}, and \ref{lem:from0or4}.
\end{proof}

\begin{lemma}
 \label{lem:from1}
 Let $\varrho = \gamma_0,\gamma_1,\dots$
 be a maximal execution of $\aep$ starting from
 $\gamma_0 \in \calR_{1}$.
 Then, $\varrho$
 reaches a configuration in $\calR_{0}$ or $\calR_{2}$
 within $O(\diam)$ rounds.
\end{lemma}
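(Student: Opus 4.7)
The plan is to determine which actions can take $r.\clr$ away from $1$ and show that one of them must fire within $O(\diam)$ rounds. Since $\parent(r) = \emptyset$, the $\bottomup$ and $\topdown$ clauses at $r$ reduce to conditions on its children alone. The only actions that change $r.\clr$ starting from $1$ are $\refl{color_reset}$ (to $0$, enabled when $r.\reset = 1$), $\refl{illegal}$ (to $0$, enabled when some child of $r$ has color $3$ or $4$), and $\refl{to2}$ (to $2$, enabled when every child of $r$ has color $2$). Moreover, each of $\refl{error}$, $\refl{afindsp}$, $\refl{aenabled}$, $\refl{penabled}$ sets $r.\reset \la 1$ whenever fired at $r$, after which $\refl{color_reset}$, which has the smallest label among all plausibly enabled actions at $r$ (BFS being silent in $\legBFS$), must fire in the next step at $r$ and send $r.\clr$ to $0$.

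I would first handle the \emph{reset case}. The actions $\refl{color_init34}$, $\refl{error}$, $\refl{afindsp}$, $\refl{aenabled}$, $\refl{penabled}$, $\refl{illegal}$, and $\refl{propagate_reset}$ each raise the executing process's $\reset$ flag, and once raised the flag can only be cleared by $\refl{del_reset}$, which demands the parent's flag to already be raised. Hence any reset flag strictly propagates up the BFS tree at a rate of at least one level per round via $\refl{propagate_reset}$. Under weak fairness, a flag raised anywhere in the tree reaches $r$ within $O(\diam)$ rounds, producing $r.\reset = 1$ and forcing $\refl{color_reset}$ in the following step; alternatively, if $\refl{illegal}$ becomes enabled at $r$ without a smaller-label action being simultaneously enabled, $\refl{illegal}$ itself fires and $r.\clr \la 0$.

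In the complementary \emph{wave case}, no reset flag is raised anywhere in the tree during the first $\Theta(\diam)$ rounds. In particular no child of $r$ ever has color $3$ or $4$ (any such configuration would enable $\refl{illegal}$ at $r$, contradicting the no-reset hypothesis), and only the wave actions $\refl{color_init}$, $\refl{down}$, $\refl{to2}$, $\refl{to4_A}$, $\refl{to4_P}$, and $\refl{to0}$ can fire in the tree (the last three only at vertices already having color $3$ or $4$, whose set cannot grow without raising a reset via $\refl{color_init34}$ or $\refl{illegal}$). I would then argue by induction on BFS depth $d$ that, within $O(d)$ rounds, every vertex at depth $d$ stably holds color $1$ via $\refl{down}$; once the wavefront reaches the leaves, a bottom-up sweep of $\refl{to2}$ brings every child of $r$ to color $2$ within $O(\diam)$ additional rounds, at which point $\refl{to2}$ fires at $r$ and $r.\clr \la 2$.

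The principal obstacle is controlling $\refl{color_init}$, which can revert a vertex from color $1$ or $2$ back to $0$ without raising a reset flag and so is not caught by the reset-propagation argument. The saving observation is that $\refl{color_init}$ at a vertex requires its parent to have color $0$, whereas $\refl{down}$ promotes a vertex only when its parent has color $1$; thus once a vertex's parent stabilizes at color $1$, the vertex becomes permanently immune to $\refl{color_init}$. Because $r$ itself keeps color $1$ throughout the wave case, this monotonicity propagates down the tree layer by layer and underwrites the depth induction, yielding the desired $O(\diam)$ bound.
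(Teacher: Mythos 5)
Your proposal is correct and follows essentially the same route as the paper's proof: reduce to the case where no reset flag is raised (since a raised flag reaches the root and forces its color to $0$ within $O(\diam)$ rounds), observe that colors $3$ and $4$ are then excluded, and track the top-down color-$1$-wave followed by the bottom-up color-$2$-wave, each taking $O(\diam)$ rounds. Your extra discussion of why $\refl{color_init}$ cannot undo the color-$1$-wave is a detail the paper leaves implicit, but it does not change the argument's structure.
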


\begin{proof}
 We assume that no process raises a reset flag
 until $r$'s color changes to 0 or 2.
 This assumption does not lose generality
 since $r$'s color changes to 0 within $O(\diam)$ rounds
 after some process raises a reset flag.
 By $\refl{color_init34}$ and $\refl{illegal}$,
 this assumption guarantees
 $\gamma_0 \in \calC_{\{0,1,2\}}$
 and 
 $\gamma_0(u).\clr = 2 \Rightarrow \gamma_0(v).\clr = 2$
 for all $u,v \in V$ such that $v \in \child(u)$.
 Note that if a process with color 0
 has a child with color 1 or 2,
 the child changes its color to 0
 within one round by $\refl{color_init}$.
 Hence, all processes with color 0
 change their colors to 1 within $O(\diam)$ rounds
 by a color-1-wave ($\refl{down}$),
 and all processes with color 1 including $r$
 change their colors to 2 within the next $O(\diam)$ rounds
 by a color-2-wave ($\refl{to2}$).
\end{proof}

\begin{lemma}
 \label{lem:from2}
 Let $\varrho = \gamma_0,\gamma_1,\dots$
 be a maximal execution of $\aep$ starting from
 $\gamma_0 \in \calR_{2}$.
 Then, $\varrho$
 reaches a configuration in $\calR_{0}$ or $\calR_{3}$
 within $O(1)$ rounds.
\end{lemma}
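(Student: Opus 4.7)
My plan is to establish that the root $r$ is always enabled in some color-changing action while $r.\clr = 2$, and then to carry out a short case split on which action $r$ fires first.

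First I would show that whenever $\gamma \in \calR_2$, the root $r$ has at least one enabled action. If $r.\reset = 1$ then $\refl{color_reset}$ is enabled, since $r.\clr = 2 \notin \{0,3,4\}$. If $r.\reset = 0$ and every child has color $2$, then $\refl{down}$ is enabled, using that the $\parent$ clause in $\topdown$ is vacuous at $r$. Otherwise $r.\reset = 0$ and some child $u$ has color in $\{0,1,3,4\}$, so $(2,u.\clr) \in \ipair$, enabling $\refl{illegal}$. Because $r.\clr$ can change only when $r$ itself acts, these conditions persist as long as $r.\clr = 2$; in particular, $r$ is continuously enabled throughout any prefix of $\varrho$ on which $r.\clr$ stays at $2$, so by the definition of a round, $r$ must fire some action during the first round.

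Next I would split on which action $r$ fires. If it is $\refl{color_reset}$, $\refl{illegal}$, or $\refl{down}$, then $r.\clr$ becomes $0$ or $3$ in a single step, placing the configuration in $\calR_0 \cup \calR_3$. Otherwise $r$ executes one of $\refl{error}$, $\refl{afindsp}$, $\refl{aenabled}$, $\refl{penabled}$, or $\refl{propagate_reset}$, each of which leaves $r.\clr = 2$ while setting $r.\reset \la 1$. In the resulting configuration $\refl{color_reset}$ is enabled at $r$, and since its guard depends only on $r$'s own variables, no other process can disable it. Noting further that $\refl{bfs}$ is disabled throughout (we are in $\legBFS$) and that $\refl{color_init}$ and $\refl{color_init34}$ both require a parent, which $r$ lacks, $\refl{color_reset}$ carries the smallest label among all actions that could ever be enabled at $r$ while $r.\clr = 2$. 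Hence the next time $r$ is selected it must execute $\refl{color_reset}$, setting $r.\clr \la 0$ within the second round.

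The step I expect to be the main obstacle is this label-ordering argument: one must carefully inventory every rule whose guard could evaluate to true at $r$ in the intermediate configuration and verify that none carries a smaller label than $\refl{color_reset}$, so that $r$ cannot divert into some other action (for instance re-executing $\calA$ or $\calP$) and thereby stretch the round count. Once this bookkeeping is in place, the bound of at most two rounds follows immediately.
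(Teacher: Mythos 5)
Your proof is correct and takes essentially the same route as the paper's: a case analysis at the root showing that $\refl{illegal}$ fires when some child has a color in $\{0,1,3,4\}$ (sending $r$ to $\calR_0$), that $\refl{down}$ fires when all children have color $2$ (sending $r$ to $\calR_3$), and that a raised reset flag simply forces $r.\clr \la 0$ via $\refl{color_reset}$. Your label-ordering bookkeeping just makes explicit the paper's one-line remark that the reset flag ``just changes $r$'s color to 0.''
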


\begin{proof}
 The root $r$ changes its color to 0
 by $\refl{illegal}$
 if it has a child of color
 0, 1, 3, or 4.
 If all children of $r$ have color 2,
 $r$ changes its color to 3.
 We do not consider a reset flag above,
 but it just changes $r$'s color to 0.
\end{proof}

\begin{lemma}
 \label{lem:2disappear}
 Let $\gamma$ be a configuration in $\legBFS$
 where $u \in V$ has color $3$ and
 $v \in \child(u)$ has color $2$.
 The color of $v$ changes to $0$ or $3$ within $O(1)$ rounds
 in every maximal execution $\varrho$
 of $\aep$ starting from $\gamma$.
\end{lemma}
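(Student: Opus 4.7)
My plan is a case analysis on $v.\reset$ and on the colors of the children of $v$, verifying that within one or two rounds the next action $v$ executes assigns $v.\clr$ a value in $\{0,3\}$. I would first inventory, among the actions of Table~\ref{tbl:loop6}, those whose guard is compatible with $v.\clr = 2$: the color-changing candidates are $\refl{color_reset}$, $\refl{color_init}$, and $\refl{illegal}$ (each sends $v.\clr$ to $0$) and $\refl{down}$ (sends $v.\clr$ to $3$); the actions $\refl{error}$, $\refl{afindsp}$, $\refl{aenabled}$, $\refl{penabled}$, $\refl{propagate_reset}$ leave $v.\clr$ unchanged but set $v.\reset \la 1$. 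Since $\gamma \in \legBFS$, $\refl{bfs}$ is never enabled during $\varrho$.

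If $v.\reset = 1$, then $\refl{color_reset}$ is enabled at $v$ and, because only $v$ can modify $v.\reset$ or $v.\clr$, its guard cannot be falsified without $v$ first executing. Its label is smaller than every other enabled action of $v$ (the smaller-labeled $\refl{color_init}$ and $\refl{color_init34}$ are disabled by the hypotheses $u.\clr = 3$ and $v.\clr = 2$), so within one round $v$ executes $\refl{color_reset}$ and $v.\clr$ becomes $0$. If instead $v.\reset = 0$, I split further. When every child of $v$ has color $2$, $\topdown(v)$ holds (since $u \in \parent(v)$ has color $3 = v.\clr+1$), so $\refl{down}$ is enabled. When some child $w$ of $v$ has $w.\clr \in \{0,1,3,4\}$, the pair $(v.\clr, w.\clr) \in \ipair$ makes $\refl{illegal}$ enabled. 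In either subcase, any smaller-label preempting action at $v$ is either $\refl{color_init}$ (which itself sets $v.\clr \la 0$) or one of $\refl{error}, \refl{afindsp}, \refl{aenabled}, \refl{penabled}, \refl{propagate_reset}$ (each of which sets $v.\reset \la 1$ without changing $v.\clr$, reducing to the previous case and resolving in one further round). Hence $v.\clr \in \{0,3\}$ within at most two rounds.

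It remains to confirm that the parent $u$ cannot transition to a state that invalidates the analysis before $v$ moves. Inspecting the actions whose guard admits $u.\clr = 3$: $\refl{color_init34}$ and $\refl{illegal}$ both send $u$ to $0$, while $\refl{to4_A}$ and $\refl{to4_P}$ require no child of $u$ to have color $2$, which is violated because $v \in \child(u)$ already has color $2$. So $u.\clr$ can only remain $3$ or become $0$; in the latter event $\refl{color_init}$ becomes enabled at $v$ and fires within one further round, still yielding $v.\clr = 0$.

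The main obstacle is the interleaving bookkeeping under the distributed daemon: I must certify that, over any admissible schedule within the $O(1)$-round window, every preemption at $v$ either directly achieves $v.\clr \in \{0,3\}$ or forces $v.\reset = 1$, and that $u$'s concurrent transitions remain compatible. These checks reduce to mechanical inspection of the guards in Table~\ref{tbl:loop6}, so there is no deep mathematical difficulty — only the care needed to enumerate every preempting possibility and confirm that no exotic interleaving can drive $v.\clr$ outside the predicted trajectory.
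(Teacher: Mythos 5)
Your proof is correct and follows essentially the same route as the paper's: show that $v$ is continuously enabled (via $\refl{color_reset}$, $\refl{color_init}$, $\refl{illegal}$, or $\refl{down}$ depending on $v.\reset$, the parent's color, and the children's colors), that the parent can only stay at color $3$ or drop to $0$, and that any preempting action either sets $v.\clr$ to $0$ or $3$ directly or raises the reset flag, after which $\refl{color_reset}$ finishes the job in the next round. The only slip is the parenthetical calling $\refl{color_init}$ and $\refl{color_init34}$ ``smaller-labeled'' than $\refl{color_reset}$ (they are $\rmL_{3}$ and $\rmL_{4}$ versus $\rmL_{2}$), which is harmless since they are disabled anyway and $\refl{color_reset}$ in fact has the smallest label among the non-BFS actions.
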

\begin{proof}
 Process $u$ has color $0$ or $3$
 as long as $v$ has color $2$.
 Hence, $v$ always stays enabled
 unless $v$ executes some action.
 Therefore, $v$ is selected by the scheduler in the first round.
 At this time, $v$ changes its color to $0$
 by $\refl{color_reset}$, $\refl{color_init}$,
 or $\refl{illegal}$, changes its color to $3$
 by $\refl{down}$,
 or raises a reset flag
 by $\refl{error}$, $\refl{afindsp}$,
 $\refl{aenabled}$, $\refl{penabled}$,
 $\refl{propagate_reset}$.
 In the last case, $v$ changes its color to $0$
 by $\refl{color_reset}$ the next time
 the scheduler selects $v$.
\end{proof}

\begin{lemma}
 \label{lem:from3}
 Let $\varrho = \gamma_0,\gamma_1,\dots$
 be a maximal execution of $\aep$ starting from
 $\gamma_0 \in \calR_{3}$.
 Then, $\varrho$ reaches a configuration in
 $\calR_{0}$, $\calR_{4}$, or $\cfinal$
 within $O(\diam)$ rounds. 
\end{lemma}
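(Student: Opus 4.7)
The plan is to show that the root $r$ either leaves color 3 within $O(\diam)$ rounds (reaching $\calR_{0}$ or $\calR_{4}$), or else the network settles into $\cfinal$ within $O(\diam)$ rounds. The first step is to enumerate how $r$ can transition away from color 3: since $r$ has no parent, $\refl{color_init34}$ is inapplicable, and since $r.\clr = 3$, $\refl{color_reset}$ is also inapplicable. Hence $r$ transitions from color 3 only via $\refl{illegal}$ (to color 0), $\refl{to4_A}$, or $\refl{to4_P}$ (both to color 4). If any of these fires within $O(\diam)$ rounds, we reach $\calR_{0}$ or $\calR_{4}$ and are done.

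The main case is when none of those three actions fires at $r$ throughout the window. I would then examine the children of $r$. Under weak fairness, no child can remain color 0 or 1 for more than $O(1)$ rounds (else $\refl{illegal}$ would fire at $r$); by Lemma \ref{lem:2disappear}, no color-2 child persists either, and it must resolve to color 3 (resolving to 0 would force $\refl{illegal}$ at $r$). If $r$ has mode $P$, then once no child is color 2, $\refl{to4_P}$ fires at $r$, a contradiction; so $r$ has mode $A$. If any neighbor of $r$ (in particular any child) has color 4, then $\refl{to4_A}$ fires at $r$, another contradiction. So all of $r$'s neighbors have colors in $\{0,1,2,3\}$, and by the previous argument all its children converge to color 3. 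Finally, for $\refl{to4_A}$ to remain disabled at $r$ once no child has color 2, we need $r.\invar{x} = r.x$ for every $x \in \outputA$.

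Next, I would lift these constraints inductively along the BFS tree. For a child $c$ of $r$, any deviation (going to color 0 via $\refl{illegal}$, to color 4 via $\refl{to4_A}$ or $\refl{to4_P}$, or switching to mode $P$) would propagate to $r$ within $O(1)$ rounds and force $r$ to leave color 3, contradicting our assumption. Hence $c$ satisfies the same properties as $r$: color 3, mode $A$, $c.\invar{x} = c.x$, and no neighbor of $c$ has color 4. Iterating down the tree, within $O(\diam)$ rounds every process $v$ has color 3, mode $A$, and $v.\invar{x} = v.x$. In addition, $\errA(v) = \fl$ throughout (otherwise $\refl{error}$ would convert $v$'s mode to $P$, which then propagates upward via $\refl{afindsp}$ and reaches $r$ within $O(\diam)$ rounds), and $\aenabled(v) = \fl$ (otherwise $\refl{aenabled}$ executes $\calA$, eventually producing some $v.x \neq v.\invar{x}$ that triggers $\refl{to4_A}$ and sends a color-4 wave to $r$). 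Finally, reset flags die out from the leaves via $\refl{del_reset}$ within $O(\diam)$ rounds once no action continues to raise them. Together these give $\gamma \in \cfinal$.

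The main obstacle is making this inductive propagation precise: each local deviation (a color change, a mode change, or an execution of $\calA$) must be shown to reach $r$ in $O(\diam)$ rounds and force one of the forbidden transitions, so that ruling out those transitions implies every process already satisfies the $\cfinal$ conditions. Tracking reset flags simultaneously — they can be raised by $\refl{aenabled}$, $\refl{penabled}$, $\refl{error}$, $\refl{afindsp}$, $\refl{illegal}$, and $\refl{color_init34}$ but cannot by themselves push $r$ out of color 3 — while analyzing the color-4 flooding wave and the upward mode-$P$ propagation is the most delicate part of the argument.
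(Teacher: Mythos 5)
Your proof is correct and follows essentially the same route as the paper's: both rely on Lemma \ref{lem:2disappear} together with the illegal pairs $(3,0),(3,1)$ and actions $\refl{to4_A}$/$\refl{to4_P}$ to show, by induction down the BFS tree, that within $O(\diam)$ rounds either $r$ leaves color $3$ (giving $\calR_0$ or $\calR_4$) or every process settles at color $3$, and that any remaining deviation from $\cfinal$ launches a color-$4$ wave (or mode-$P$ propagation) reaching $r$. Your contrapositive phrasing and explicit bookkeeping of modes, $\invar{x}=x$, $\errA$, and reset flags merely spell out what the paper compresses into the claim that a configuration in $\calC_{\{3\}}\setminus\cfinal$ produces a color-$4$ process within $O(1)$ rounds.
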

\begin{proof}
 A process with color 3 changes its color to 0
 by $\refl{illegal}$
 within $O(1)$ rounds if it has a child with
 color 0 or 1.
 A process $v$ with color 3 changes its color to 0 or 4
 by $\refl{illegal}$, $\refl{to4_A}$, or $\refl{to4_P}$
 within $O(1)$ rounds if it has a child with color 4
 because all its children with color 2
 change their colors to 0 or 3 within $O(1)$ rounds
 (Lemma \ref{lem:2disappear}).
 Thus, the network reaches
 a configuration $\gamma_i$ in
 $\calR_0$, $\calR_{4}$, or $\calC_{\{3\}}$
 within $O(\diam)$ rounds.
 When $\gamma_i \in \calC_{\{3\}}\setminus \cfinal$,
 some process's color
 changes to 4 within $O(1)$ rounds,
which changes $r$'s color to 4 within $O(\diam)$
 rounds.
\end{proof}
 
\begin{lemma}
 \label{lem:from4}
 Let $\varrho = \gamma_0,\gamma_1,\dots$
 be a maximal execution of $\aep$ starting from 
 $\gamma_0 \in \calR_{4}$.
 Then, $\varrho$
 reaches a configuration in
 $\calR_{4} \cap \czerofour$
 or $\calR_{0}$ within $O(\diam)$ rounds. 
\end{lemma}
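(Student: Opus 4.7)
The target is to show that starting from $\gamma_0 \in \calR_4$ (root has color $4$), in $O(\diam)$ rounds the execution reaches either $\calR_0$ or $\calR_4 \cap \czerofour$. Mirroring the proofs of Lemmas~\ref{lem:from1}--\ref{lem:from3}, the first reduction is to assume that no reset flag is raised from the initial configuration onwards; this is without loss of generality because, as observed in the proof of Lemma~\ref{lem:from1}, once any reset flag appears, it propagates to $r$ along the BFS tree and forces $r.\clr \la 0$ via $\refl{color_reset}$ within $O(\diam)$ rounds, giving case $\calR_0$. Under this assumption, the only way $r$'s color can change is through $\refl{illegal}$ (if $r$ acquires a child with color $1$ or $2$) or through $\refl{to0}$ (which lands it in $\calR_0$).

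Next I eliminate colors $1$ and $2$ from the tree. Observe that every ``illegal parent-child pair'' that can arise with a color-$4$ parent is exactly $(4,1)$ and $(4,2)$, and these are in $\ipair$. I would argue by induction on depth along the BFS tree: consider the topmost process $v$ whose color lies in $\{1,2\}$. Its parent has color in $\{0,3,4\}$ (by minimality of $v$'s depth). If the parent has color $4$, action $\refl{illegal}$ turns the parent to $0$ in one round, after which the color $0$ propagates up through intermediate color-$3$ or color-$4$ ancestors via $\refl{color_init34}$ and $\refl{illegal}$ (since $(3,0) \in \ipair$ and a color-$4$ ancestor with a color-$0$ child eventually satisfies $\refl{to0}$ or simply lets $r$ reach case $\calR_0$). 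If the parent has color $0$, action $\refl{color_init}$ resets $v$ to $0$. If the parent has color $3$, the color-$2$ case is handled by Lemma~\ref{lem:2disappear} in $O(1)$ rounds, and the color-$1$ case yields an illegal pair $(1,3) \in \ipair$ at the grandparent or a bottom-up completion via $\refl{to2}$. In every branch, within $O(\diam)$ rounds either we are in $\calR_0$, or every process has color in $\{0,3,4\}$.

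It then remains to clear color $3$. Every process currently at color $3$ has mode $A$ or $P$; in either case the guard of $\refl{to4_A}$ or $\refl{to4_P}$ is enabled once it has no color-$2$ child (already ensured) and, for mode $A$, once it has at least one neighbor at color $4$ (which holds for neighbors of $r$, and then floods outward). Thus, by induction on distance from $r$ in $G$, every color-$3$ process transitions to color $4$ in $O(\diam)$ rounds, unless it first transitions to color $0$ via $\refl{color_init34}$ or $\refl{illegal}$; in the latter sub-case, the color $0$ again propagates to $r$ and we land in $\calR_0$. Otherwise the flooding reaches all processes and we arrive in $\calR_4 \cap \czerofour$, noting that $r$ itself stays at color $4$ throughout because none of $\refl{illegal}$, $\refl{to0}$, or $\refl{color_init34}$ becomes enabled at $r$ under the current hypotheses.

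The main technical obstacle I expect is the interaction between the mode variable and the color-$3$-to-$4$ flooding: a process in mode $P$ can only switch to color $4$ via $\refl{to4_P}$, which simultaneously resets its mode to $A$, while a neighbor in mode $A$ depends on the ``$\exists u \in N_v: u.\clr=4$'' clause in $\refl{to4_A}$, and mode changes themselves raise reset flags via $\refl{afindsp}$. Handling this cleanly requires showing that during the flooding no new mode disagreement persists long enough to raise a reset flag that reaches $r$ before flooding completes; otherwise the ``no reset'' reduction at the start forces us into $\calR_0$, which is acceptable for the statement.
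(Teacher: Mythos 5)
Your proof is correct and follows essentially the same route as the paper's: eliminate colors $1$ and $2$ using Lemma~\ref{lem:2disappear} and the illegal pairs, then propagate membership in $\{0,4\}$ down the BFS tree by induction, with any raised reset flag funnelling the execution into $\calR_0$. (One small slip: the illegal pair triggered by a color-$1$ child under a color-$3$ parent is $(3,1)$ detected at the parent, not $(1,3)$ at the grandparent, but this does not affect the argument.)
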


\begin{proof} 
 Lemma \ref{lem:2disappear}
 and $(4,2) \in \ipair$
 guarantee that,
 within $O(D)$ rounds,
 color $2$ disappears from all processes
 as long as $r.\clr = 4$.
 After that,
 every process
 whose parent has color 0 or 4 also has color 0 or 4 within $O(1)$ rounds
 and thereafter never has color 1, 2, or 3
 as long as $r.\clr = 4$.
 Hence, the lemma is proven
 by induction on levels
 (\ie the distance from the root $r$) of processes in the BFS tree. 
 \end{proof}

 \begin{lemma}
 \label{lem:from0or4}
 Let $\varrho = \gamma_0,\gamma_1,\dots$
 be a maximal execution of $\aep$ starting from
 $\gamma_0 \in \calR_{4} \cap \czerofour$.
 Then, $\varrho$
 reaches a configuration in $\calR_{0}$ within $O(\diam)$ rounds.
\end{lemma}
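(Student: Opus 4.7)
The plan is to drive a bottom-up color-$0$ wave across the BFS tree, showing by induction on subtree-height that the wave reaches the root within $O(\diam)$ rounds.

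First I would verify that $\calR_{4} \cap \czerofour$ is invariant until the root's color changes. Inspection of the actions shows that the only color-changing actions applicable to a color-$4$ process are $\refl{to0}$ and $\refl{color_init34}$, both setting the color to $0$: the others require colors outside $\{0,4\}$, and $\refl{illegal}$ is inapplicable because $\ipair$ contains no pair entirely within $\{0,4\}$. Symmetrically, a color-$0$ process cannot change color, because $\refl{down}$ requires the parent to have color $1$, which does not occur in $\czerofour$. Hence colors stay in $\{0,4\}$, and the color-$0$ set grows monotonically throughout this phase.

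Second, I would prove by induction on the subtree-height $h(v)$ that $v.\clr = 0$ within $O(h(v)+1)$ rounds. For a leaf $v$ the child condition of $\bottomup(v)$ is vacuous, so either $\refl{to0}$ (if $\varparent$ of $v$ has color $4$) or $\refl{color_init34}$ (if color $0$) is continuously enabled; a color-$4$ leaf has no enabled action that raises its own reset flag (every such action demands a child, a color other than $4$, or a neighborhood free of color $4$, all of which fail), so any initial reset drains in one round via $\refl{del_reset}$, after which the color-changing action fires. In the inductive step, the IH makes all descendants of $v$ color $0$ by round $O(h(v))$; the same case analysis on $v$ then yields $v.\clr = 0$ within $O(1)$ further rounds. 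Applying this to $r$, whose BFS height is at most $\diam$, gives $\varrho \in \calR_{0}$ in $O(\diam)$ rounds.

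The main obstacle I anticipate is bounding the delay caused by the smaller-label actions $\refl{propagate_reset}$ and $\refl{del_reset}$, which take priority over $\refl{to0}$ at an internal node. The critical observation is that any tree-child $c$ of a color-$4$ process $v$ has $v \in N_c^1$ with color $4$, which disables $\refl{error}$, $\refl{aenabled}$, and $\refl{penabled}$ at $c$; hence the only source of new reset flags at $c$ is at most one $\refl{afindsp}$-firing plus upward propagation from below. A careful tracking of the alternation of $\refl{propagate_reset}$ and $\refl{del_reset}$ on $v$ then shows that regardless of these few reset events the conditions "$v.\reset = 0$ and some child has $\reset = 1$" and "$v.\reset = 1$ and all children have $\reset = 0$ and parent has $\reset = 1$" cannot both be re-enabled in consecutive rounds indefinitely, so $\refl{to0}$ fires on $v$ within $O(1)$ rounds after its children have stabilized at color $0$, giving the per-level constant needed for the $O(\diam)$ bound.
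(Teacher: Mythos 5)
Your proposal follows essentially the same route as the paper's (two-sentence) proof: first observe that in $\calR_{4} \cap \czerofour$ no process can acquire color 1, 2, or 3 (so only $4\to 0$ transitions occur and the color-0 set grows monotonically), and then let the bottom-up color-0 wave driven by $\refl{to0}$ (with $\refl{color_init34}$ as the top-down shortcut) carry the root to color 0 within $O(\diam)$ rounds. Your induction on subtree height and your discussion of the priority of $\refl{propagate_reset}$ and $\refl{del_reset}$ over $\refl{to0}$ only make explicit details that the paper's proof leaves implicit.
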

\begin{proof}
 No process changes its color to 1, 2, or 3
 in a configuration of $\calR_{4} \cap \czerofour$.
 Hence, a color-0-wave by $\refl{to0}$
 (or a reset flag)
 changes $r$'s color to 0
 within $O(\diam)$ rounds.
\end{proof}

\begin{lemma}
 \label{lem:at_root2}
 Let $\varrho = \gamma_0,\gamma_1,\dots$
 be a maximal execution of $\aep$
 where $\gamma_0 \in \calR_0$.
 If $i > 0$ is the smallest integer such that
 $\gamma_i \in \calR_2$, 
 then every process $v$
 has the same mode $\calX$ ($\calA$ or $\calP$)
 and satisfies
 $\lnot \xenabled(v)$,
 $v.\clr = 2$ and $v.\reset = 0$
 at configuration $\gamma_i$. 
\end{lemma}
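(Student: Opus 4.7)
The plan is to establish the four conjuncts at $\gamma_i$ in turn: (i) $v.\clr = 2$ for all $v$, (ii) $v.\reset = 0$ for all $v$, (iii) all $v$ share a common mode $\calX$, and (iv) $\lnot\xenabled(v)$ for all $v$.

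First, a structural observation. Since $\gamma_0\in\calR_0$, $\gamma_i\in\calR_2$, and no intermediate configuration is in $\calR_2$, the root $r$ must execute $\refl{to2}$ at step $\gamma_{i-1}\mapsto\gamma_i$, because this is the only action that sets a color to $2$. Its precondition $\bottomup(r)$ forces every $u\in\child(r)$ to satisfy $u.\clr=2$ at $\gamma_{i-1}$. Moreover, since $\refl{color_reset}$ has a smaller label than $\refl{to2}$, the former would preempt the latter whenever both are enabled; hence $r.\reset=0$ at $\gamma_{i-1}$.

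For (i), I would induct on BFS depth $d$ to show every process at depth $d$ has color $2$ at $\gamma_i$. The base case $d=0$ is the structural observation. For the inductive step at depth $d+1$, let $v$ have parent $u$ at depth $d$ with $u.\clr=2$ at $\gamma_i$. Let $j\le i$ be the most recent step at which $u$ executed $\refl{to2}$, so $\bottomup(u)$ guarantees $v.\clr=2$ at $\gamma_j$. The only exit transitions from color $2$ are $\refl{color_reset}$, $\refl{color_init}$ (which requires $u.\clr=0$), and $\refl{illegal}$; each would land $v$ in a color in $\{0,1,3,4\}$. At that moment $(2,v.\clr)\in\ipair$, so $u$ would fire $\refl{illegal}$ and leave color $2$. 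Since $u$ could only return to color $2$ via another $\refl{to2}$ execution (contradicting the maximality of $j$), we conclude $v.\clr = 2$ throughout $[\gamma_j,\gamma_i]$.

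For (ii)--(iv), the key fact is that every action that could break these conclusions --- namely $\refl{error}$, $\refl{afindsp}$, $\refl{aenabled}$, and $\refl{penabled}$ --- raises $\reset = 1$ when executed. The action $\refl{propagate_reset}$ has a smaller label than the color-advancing actions $\refl{down}$, $\refl{to2}$, $\refl{to4_A}$, $\refl{to4_P}$, and $\refl{to0}$, so a reset flag cascades up the BFS tree with priority over color advancement. If any $v$ held $v.\reset=1$ at any step in $[\gamma_0,\gamma_i]$, within at most $\diam$ rounds the flag would reach $r$, where $\refl{color_reset}$ (priority over $\refl{to2}$ and $\refl{down}$) would drive $r.\clr$ to $0$, contradicting $\gamma_i\in\calR_2$ together with the minimality of $i$. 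This gives (ii). Combined with the fact that any enabledness of the four algorithm/error/findsp actions during the wave would, by weak fairness, have eventually fired and raised $\reset$, no such action is enabled at $\gamma_i$; in particular the modes agree (otherwise $\refl{afindsp}$ is enabled) and $\lnot\xenabled(v)$ holds for the common mode $\calX$.

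The main obstacle is claim (i): the inductive cascade argument must carefully rule out the scenario in which a descendant of $u$ loses and then regains color $2$ between $\gamma_j$ and $\gamma_i$. The crux is the maximality of $j$ combined with the illegal-pair cascade, which ensures that any loss of color $2$ at $v$ propagates upward to force $u$ out of color $2$ permanently, contradicting the inductive hypothesis $u.\clr = 2$ at $\gamma_i$.
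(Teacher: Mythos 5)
Your decomposition differs from the paper's (the paper anchors everything at the root's last $0\to 1$ color transition, at some step $j<i$, and argues globally that the color-1 and color-2 waves cannot complete if anything goes wrong), but the per-process induction you substitute for it has a genuine gap at its crux. In the inductive step for (i) you write that once $v$ drops out of color $2$, ``$(2,v.\clr)\in\ipair$, so $u$ would fire $\refl{illegal}$ and leave color $2$.'' Under the weakly-fair distributed daemon, $u$ being \emph{enabled} for $\refl{illegal}$ only guarantees that $u$ executes \emph{eventually}; nothing forces $u$ to be selected before step $i$. So the configuration $\gamma_i$ with $u.\clr=2$, $v.\clr=0$, and $\refl{illegal}$ enabled but not yet fired at $u$ is not excluded, and your contradiction with the maximality of $j$ never materializes. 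The same ``enabled $\Rightarrow$ fired by $\gamma_i$'' slip reappears in (iv) (``would, by weak fairness, have eventually fired'') and in (ii), where ``within at most $\diam$ rounds the flag would reach $r$'' may well place that event after $\gamma_i$.

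There is also a mis-anchored contradiction in (ii): driving $r.\clr$ to $0$ at some step before $i$ contradicts neither $\gamma_i\in\calR_2$ nor the minimality of $i$ --- the root already has color $0$ at $\gamma_0$. The contradiction the paper actually uses is with the \emph{maximality} of the root's last $0\to 1$ transition before step $i$: since only $\refl{to2}$ produces color $2$ and it requires color $1$, any later return of $r$ to color $0$ would force another $0\to 1$ transition. Moreover, the reason a raised flag must reach $r$ \emph{before} $r$ can turn $2$ is not a round-counting bound but a blocking argument: $\refl{down}$ requires $\reset=0$ and $\refl{color_reset}$ preempts $\refl{to2}$ by label order, so the color-$1$ and color-$2$ waves cannot pass a flagged process until the flag has propagated up to $r$ (resetting it to color $0$) and been dropped again. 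Your structural observation that $r$ executes $\refl{to2}$ at step $i-1\mapsto i$ with $r.\reset=0$ is correct and is the right starting point, but to close the proof you need the paper's global wave argument (or an equivalent invariant showing that no flag is ever raised and no process has color $3$ or $4$ during $\gamma_j,\dots,\gamma_i$), not the local illegal-pair cascade.
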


\begin{proof}
  Let $j\ (0 < j < i)$ be the largest integer such that
 $\gamma_{j-1}(r).\clr = 0$ and $\gamma_{j}(r).\clr = 1$.
 A color-1-wave beginning at $\gamma_j$ reaches every leaf,
 and thereafter a color-2-wave is initiated at leaves
 and reaches $r$
 in $\gamma_j,\dots,\gamma_i$.
 No process has color 3 or 4
 in $\gamma_j$; Otherwise, such process $v$ changes its color
 from 3 or 4 to 0 and raises a reset flag
 before $v$ receives a color-1-wave,
 which changes
 $r$'s color to 0 again;
 This is a contradiction to $j$'s definition.
 Since no process changes its color to 3 during
 $\gamma_j,\dots,\gamma_i$, every process has color 0, 1, or 2
 during $\gamma_j,\dots,\gamma_i$.
 In the same way, we can prove that
 all processes have the same mode $\calX$ ($\calA$ or $\calP$),
 and $\lnot \xenabled(v)$ and $v.\reset = 0$ hold
 during $\gamma_j,\dots,\gamma_i$,
 hence no process changes its color from 1 or 2 to 0
 in the mean time.
 Thus, every process $v$
 has the same mode $\calX$ ($\calA$ or $\calP$)
 and satisfies
 $\lnot \xenabled(v)$,
 $v.\clr = 2$ and $v.\reset = 0$
 at configuration $\gamma_i$.
\end{proof}

 \begin{table*}
 \center
 \refstepcounter{tbl}
 \label{tbl:variables}
  \begin{tabular}{r}
   \toprule
   \multicolumn{1}{c}{
   \textbf{Table \ref{tbl:variables}:
   Non-copy Variables used in $\init$ and $\merge$}}\\
   \midrule
   \multicolumn{1}{l}{\hspace{-0.25cm}
   \textbf{[Non-array variables]}}\\
    ${\begin{aligned}
       v.\domain \in 2^{\ID},\\
       v.\height \in [0,\lfloor k/2 \rfloor],
       v.\initgroup,
       v.\group \in {\ID}
   \end{aligned}}$\\ \\
   \multicolumn{1}{l}{\hspace{-0.25cm}   
   \textbf{[Array variables ($u \in v.\domain$)]}}\\
      ${\begin{aligned}
v.\dist[u], v.\groupdist[u], v.\mergedist[u],
v.\stampdist[u] \in [0,2k]\cup\{\bot\},\\ 
v.\border[u], v.\far[u], v.\target[u], v.\stampone[u], 
v.\stamptwo[u], v.\groups[u] \in \ID\cup\{\bot\},\\
v.\merging[u], v.\stampon[u], v.\prior[u] \in \{\fl,\tr\}	  
	\end{aligned}}$\\
   \bottomrule
  \end{tabular}
 \end{table*}

We define $\safep$ to be the set of configurations
in $\calR_{0} \cap \calC_{\{0,1,2\}}$ where
every process has mode $P$,
and define $\safea$ to be the set of configurations
in $\calR_{0} \cap \calC_{\{0,1,2\}}$ and $\lnot \errorA$ such that
every process has mode $A$.

\begin{lemma}
 \label{lem:to_safea_or_safep}
 Let $\varrho = \gamma_0,\gamma_1,\dots$ be
 a maximal execution of $\aep$ starting from 
 $\gamma_0 \in \calR_{0}$.
 Then, $\varrho$ reaches a configuration in $\safea$ or $\safep$
 within $O(\diam)$ rounds.  
\end{lemma}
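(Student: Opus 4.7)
The plan is to show that starting from $\gamma_0 \in \calR_0$, the execution reaches a configuration in $\safea \cup \safep$ within $O(\diam)$ rounds. I will establish three properties simultaneously, all while $r$ maintains color $0$: (a) every process has color in $\{0,1,2\}$, (b) all processes share a single mode, and (c) if that mode is $A$ then $\lnot\errorA$ holds.

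For (a), since $r$ carries color $0$, the value $0$ cascades down the BFS tree: a non-root process with color $1$ or $2$ whose parent has color $0$ is reset by $\refl{color_init}$, a process with color $3$ or $4$ whose parent has color $0$ is reset by $\refl{color_init34}$, and any parent--child pair in $\ipair$ is fixed by $\refl{illegal}$. After at most $O(\diam)$ rounds every process has color in $\{0,1,2\}$, and this invariant is preserved so long as $r$ remains at color $0$: colors $3$ and $4$ can only reappear via a color-$3$-wave from $r$, which requires $r$ to traverse $0 \to 1 \to 2 \to 3$, taking $\Omega(\diam)$ rounds of waves along the tree.

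For (b) and (c), in parallel with the color cleanup, $\refl{afindsp}$ propagates mode $P$ outward from any existing $P$-mode process---blocked only by a color-$4$ neighbor, which disappears by (a)---and $\refl{error}$ seeds mode $P$ at any $A$-mode process $v$ with $\errA(v) = \tr$ whose $1$-neighborhood is color-$4$-free. Once colors are in $\{0,1,2\}$, the only rule reverting $P$ to $A$, namely $\refl{to4_P}$, is disabled (it requires color $3$), so mode transitions become monotone. Within $O(\diam)$ further rounds the mode is uniformly $P$ (yielding $\safep$) or uniformly $A$; in the latter case $\refl{error}$ must be disabled everywhere, so $\lnot \errorA$ holds, yielding $\safea$.

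The main obstacle, where the argument requires the most care, is that $r$ may transition off color $0$ prematurely via $\refl{down}$ (enabled when every child of $r$ has color $0$ and $r.\reset = 0$) before the cleanup in deeper subtrees finishes. Each of $\refl{color_init34}$, $\refl{illegal}$, $\refl{error}$, $\refl{afindsp}$, $\refl{aenabled}$, $\refl{penabled}$ raises a reset flag, which propagates up to $r$ via $\refl{propagate_reset}$ within $O(\diam)$ rounds and then blocks $\refl{down}$ (whose guard demands $r.\reset = 0$). In the worst case $r$ nevertheless escapes to color $1$ before the raised flag arrives; then by Lemma \ref{lem:from1} the execution returns to $\calR_0$ within $O(\diam)$ additional rounds, during which the cleanup and mode unification run to completion. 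Thus within $O(\diam)$ total rounds the network reaches a configuration in $\calR_0$ where (a)--(c) all hold, which is exactly $\safea \cup \safep$.
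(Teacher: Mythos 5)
Your proof is correct and follows essentially the same route as the paper's: drive every color into $\{0,1,2\}$ by the top-down reset ($\refl{color_init}$, $\refl{color_init34}$, $\refl{illegal}$) while reset flags hold the root at color $0$, then flood mode $P$ via $\refl{error}$/$\refl{afindsp}$, or conclude $\lnot\errorA$ when no process ever switches. You are more explicit than the paper about the race in which $r$ escapes to color $1$ before a flag arrives; just note that Lemma~\ref{lem:from1} alone only guarantees reaching $\calR_0$ \emph{or} $\calR_2$, so the return to $\calR_0$ there really rests on the raised reset flag (via $\refl{color_reset}$), which is what you implicitly use.
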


\begin{proof}
 Since $\gamma_0(r).\clr = 0$,
 every process with color 3 or 4
 changes its color to 0 within $O(\diam)$ rounds.
 At this time, each such process raises a reset flag
 by $\refl{color_init34}$--$\refl{propagate_reset}$,
 which will change $r$'s color to 0.
 Thus, the network reaches a configuration $\gamma$
 in $\calR_{0} \cap \calC_{\{0,1,2\}}$ within $O(\diam)$ rounds.
 If two processes have different modes
 or some process $v$ has $(v.\mode = A) \wedge (\errA(v) = \tr)$
 in $\gamma$,
 then all processes with mode $A$ change their modes
 to $P$ by $\refl{error}$ and $\refl{afindsp}$ within $O(\diam)$ rounds,
 keeping $r$'s color 0
 with a reset flag,
 which proves the lemma.
\end{proof}

\begin{lemma}
 \label{lem:from_safep}
 Let $\varrho= \gamma_0,\gamma_1,\dots$
 be a maximal execution of $\aep$ starting from $\gamma_0 \in \safep$.
 Then, $\varrho$ reaches
 a configuration in $\safea$ within $O(\tp + \diam)$ rounds.
\end{lemma}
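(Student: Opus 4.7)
The plan is to partition the execution from $\gamma_0\in\safep$ into three consecutive segments whose durations sum to $O(\tp+\diam)$.

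Segment 1 (at most $O(\tp)$ rounds, running $\calP$ to completion). In $\safep$ every process has mode $P$ and color in $\{0,1,2\}$, so the neighborhood guards attached to $\refl{penabled}$ reduce to $\penabled(v)$ alone. Simultaneously, $\refl{error}$, $\refl{afindsp}$, and $\refl{aenabled}$ cannot fire, as each requires mode $A$, so no process spuriously switches its mode away from $P$. Hence the only algorithm-level activity in this segment is $\calP$, and by the assumed termination bound on $\calP$ the network reaches a configuration in $\lnot\errorA$ where $\penabled(v)$ is false for every $v$, within $\tp$ rounds. Because $\inputA$ and $\outputA$ are never touched here, $\lnot\errorA$ persists through all later segments as well.

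Segment 2 (at most $O(\diam)$ rounds, clearing colors and reset flags). Once $\calP$ is quiescent, the only non-color actions that can still raise reset flags are $\refl{color_init34}$ and $\refl{illegal}$, both triggered by color incoherence; these, together with $\refl{color_init}$ and $\refl{color_reset}$, push every color above $0$ back to $0$. Crucially, $\ipair$ contains no pair whose first coordinate is $0$, so $r$ itself never leaves color $0$ during this cleanup. Reset flags subsequently drain bottom-up through $\refl{del_reset}$ in $O(\diam)$ rounds. The segment ends in a configuration in $\calR_0\cap\calC_{\{0\}}$ with every reset flag down and every process still in mode $P$.

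Segment 3 (at most $O(\diam)$ rounds, a full color cycle that flips the mode). From this clean configuration $r$ starts a top-down color-$1$-wave via $\refl{down}$, followed by a bottom-up color-$2$-wave via $\refl{to2}$. By Lemma \ref{lem:at_root2}, when $r$ first reaches color $2$ every process simultaneously satisfies mode $P$, color $2$, $v.\reset=0$, and has no $\calP$-enabled action. Then $r$ launches a top-down color-$3$-wave (again $\refl{down}$), after which $\refl{to4_P}$ fires bottom-up, each firing atomically advancing a process to color $4$ and flipping its mode from $P$ to $A$. Finally, a bottom-up color-$0$-wave via $\refl{to0}$ returns every process to color $0$. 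At this moment all processes have mode $A$ and color $0$, and because nothing in Segments 2--3 has touched $\inputA$ or $\outputP$, $\lnot\errorA$ still holds; thus the configuration lies in $\safea$.

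The subtlest point, and the main obstacle, is Segment 3, where mode $P$ and mode $A$ coexist while the color-$3$-to-$4$ wave is in flight. I would rely on two structural facts to rule out any mode reversal or spurious reset. First, both $\refl{error}$ and $\refl{afindsp}$ require $v.\clr\neq 4$, so a freshly promoted mode-$A$ color-$4$ process cannot be pulled back to $P$ by a still-mode-$P$ neighbor. Second, no pair of the form $(4,c)$ with $c\in\{0,3,4\}$ appears in $\ipair$, so $\refl{illegal}$ does not impede the color-$4$-wave's progress. Together these make the color-$4$ layer a stable buffer between the mode-$P$ region and the mode-$A$ region, and the cycle completes cleanly.
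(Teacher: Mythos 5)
Your proposal is correct and follows essentially the same route as the paper's proof: run $\calP$ to quiescence in $O(\tp)$ rounds (using Lemma \ref{lem:at_root2} to ensure the root does not prematurely reach color 2), drain the reset flags in $O(\diam)$ rounds, then let one full color cycle flip every mode to $A$ via $\refl{to4_P}$ and land in $\safea$. The additional detail you give on why the color-4 layer buffers the mode-$P$/mode-$A$ boundary is left implicit in the paper but is consistent with its argument.
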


\begin{proof}
 Every $\calP$--Enabled process performs $\calP$'s action by
 $\refl{penabled}$ at least once with every one or two rounds
 until $r$'s color changes to $2$.
 Lemma \ref{lem:at_root2} guarantees that
 $r$'s color never changes to $0$
 until an execution of $\calP$ terminates.
 Hence, the network reaches within $O(\tp)$ rounds
 a configuration in $\lnot \errorA$ where
 no process is $\calP$--Enabled.
 Thereafter, all reset flags are dropped within $O(\diam)$ rounds
 by $\refl{del_reset}$,
 and $r$'s color is 0 at the resulting configuration.
 The network reaches a configuration in $\calC_{\{2\}}$
 and again reaches a configuration $\gamma \in \calR_0$
 within the next $O(\diam)$ rounds
 by $\refl{color_init}$,
 $\refl{down}$ (a color-1-wave, a color-3-wave),
 $\refl{to2}$ (a color-2-wave),
 $\refl{to4_P}$ (a color-4-wave),
 and $\refl{to0}$ (a color-0-wave).
 Lemma \ref{lem:at_root2} guarantees $\gamma \in \safea$.
\end{proof}

\begin{lemma}
 \label{lem:from_safea}
 Let $\varrho= \gamma_0,\gamma_1,\dots$ be
 a maximal execution of $\aep$
 starting from $\gamma_0 \in \safea$.
 Then, $\varrho$ reaches
 a configuration in $\cfinal$
 within $O(\ra + \loopa \diam)$ rounds.
%
 \end{lemma}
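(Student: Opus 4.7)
The plan is to match the execution of $\aep$ against the infinite sequence of base-algorithm executions $\varrho_0,\varrho_1,\dots$ posited in the Loop Convergence hypothesis and then read off the round bound from that hypothesis. Starting from $\gamma_0 \in \safea$, I would decompose the computation into \emph{iterations}: each iteration consists of an $\calA$-phase (processes execute $\calA$ actions until none are enabled) followed by a color-wave/copy phase (color-1 down-wave, color-2 up-wave, color-3 down-wave, and, if some process has $v.x \neq v.\invar{x}$, an additional color-4 flood and color-0 up-wave).

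For a single iteration I would argue as follows. During the $\calA$-phase, every $\calA$ step sets $\reset = 1$ by $\refl{aenabled}$; by $\refl{propagate_reset}$ these flags pin the root at color $0$, and the $v.\reset=0$ guard of $\refl{down}$ prevents any process from completing a color-$1$ transition ahead of its own $\calA$-actions. Shiftable Convergence together with $\gamma_0 \in \lnot\errorA$ implies $\lnot\errorA$ is preserved throughout the $\calA$-phase and at its termination, so no process ever fires $\refl{error}$ or $\refl{afindsp}$, and mode $A$ is preserved globally. Consequently the subsequence of $\calA$-steps forms a maximal execution $\varrho_i$ of $\calA$, taking $\round(\varrho_i)$ rounds. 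Once $\calA$ terminates, all reset flags drop within $O(\diam)$ rounds via $\refl{del_reset}$, and then the color-$1$, color-$2$, color-$3$, optional color-$4$, and color-$0$ waves each sweep the BFS tree in $O(\diam)$ rounds by the arguments of Lemmas \ref{lem:from1}--\ref{lem:from0or4}; Lemma \ref{lem:at_root2} guarantees that at the moment $r$ receives color $2$, every process is $\calA$-unenabled with $\reset = 0$. The end-of-iteration configuration lies in $\cfinal$ exactly when no output-to-input copy was needed (no process transitions to color $4$ via $\refl{to4_A}$), which by the definition of $\cgoal$ is the case iff $\gamma_{i,s_i} \in \cgoal$; otherwise it lies in $\safea$ with underlying $\calA$-state equal to $\gamma_{i,s_i}^\rmcopy$, matching the Loop Convergence setup for iteration $i+1$.

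Summing, Loop Convergence supplies some $j < \loopa$ with $\gamma_{j,s_j} \in \cgoal$ and $\sum_{i \le j}\round(\varrho_i) < \ra$, so iteration $j$ terminates in $\cfinal$. The total cost is at most $\ra$ rounds of $\calA$ work plus $O(\loopa \diam)$ color-wave and reset overhead, i.e.\ $O(\ra + \loopa \diam)$ rounds.

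The main obstacle is the single-iteration analysis: I must verify that the heavy interleaving of $\calA$-steps with reset propagation, reset drops, and partial color-$1$ progressions does not inflate the round count beyond $\round(\varrho_i) + O(\diam)$, and that the extracted subsequence of $\calA$-steps really is a valid maximal execution of $\calA$ on the underlying $\calA$-state. The delicate subpoint is bookkeeping after $\calA$ terminates: once no new $\calA$ action occurs, no new reset flags are raised, and I would show that the partially-propagated color-$1$-wave and in-flight reset flags settle within $O(\diam)$ rounds into a configuration to which Lemma \ref{lem:from1}'s argument applies cleanly, and that a process whose $\calA$-state has not changed during the phase never needs to transition via $\refl{to4_A}$.
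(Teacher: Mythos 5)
Your proposal is correct and follows essentially the same route as the paper: run $\calA$ to quiescence (the paper invokes the argument of Lemma \ref{lem:from_safep} and Lemma \ref{lem:at_root2} for this), sweep the color waves in $O(\diam)$ rounds, branch on whether $v.\invar{x}=v.x$ everywhere to either land in $\cfinal$ or return to $\safea$ via $\refl{to4_A}$ and $\refl{to0}$, and bound the iterations by $\loopa$ and the total $\calA$-work by $\ra$ using Loop Convergence. One small note: the preservation of $\lnot\errorA$ during the $\calA$-phase follows not from Shiftable Convergence but from the fact that $\errA$ depends only on variables in $\inputA\cup\outputP$, which $\calA$ never writes.
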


\begin{proof}
 We can prove, in a similar way as the proof
 of Lemma \ref{lem:from_safep},
 that execution $\varrho$ reaches within $O(\ta+\diam)$ rounds
 a configuration in $C_{\{2\}}$
 where no process is $\calA$--Enabled.
 If $v.\invar{x} = v.x$ holds
 for all $v\in V$ and $x \in \outputA$,
 the network reaches a configuration in $\cfinal$
 by $\refl{down}$ within the next $O(\diam)$ rounds.
 Otherwise, all processes $v$ perform $v.\invar{x} \la v.x$
 by $\refl{to4_A}$,
 and the network returns to a configuration in
 $\safea$ by $\refl{to0}$ within $O(\diam)$ rounds.
 The above execution is repeated
 until reaching a configuration in $\cfinal$,
 and
 the number of iterations is bounded by $\loopa$.
 Since the total number of rounds in those
 iterated executions of $\calA$ is at most
 $\ra$, execution $\rho$ reaches a configuration
 in $\cfinal$
 within $O(\ra+\loopa \diam)$ rounds.
\end{proof}

\begin{theorem}
 \label{theorem:aep}
 $\aep$ is silent and self-stabilizing
 for $\calL$. The execution of $\aep$ 
 terminates within
 $O(n+\tp + \ra + \loopa \diam)$
 rounds.
 The space complexity per process
 is $O(\log n)$ bits.
\end{theorem}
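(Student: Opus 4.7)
The plan is to assemble the theorem directly from the chain of lemmas already established in Section \ref{sec:composition}. The structure is essentially: after BFS stabilizes, a sequence of "waypoint" predicates $\calR_0 \rightarrow (\safea \text{ or } \safep) \rightarrow \safea \rightarrow \cfinal$ must be traversed, and each transition has a known round bound. I will compose these bounds and verify that $\cfinal$ is genuinely terminal and contained in $\calL$.

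First I would argue that the BFS subroutine $\rmL_{\ref{aep:bfs}}$ stabilizes independently of all other variables, since its guard and statement touch only $\varparent$. By the result of \cite{Datta11} the network reaches $\legBFS$ within $O(n)$ rounds, and no action of $\aep$ can re-enable BFS thereafter. From this point on every subsequent configuration lies in $\legBFS$, so the remaining lemmas apply. I then chain them: Lemma \ref{lem:to0orfinal} gets us to $\calR_0 \cup \cfinal$ in $O(\diam)$ rounds; if we are not already in $\cfinal$, Lemma \ref{lem:to_safea_or_safep} gets us to $\safea \cup \safep$ in another $O(\diam)$ rounds; Lemma \ref{lem:from_safep} carries $\safep$ to $\safea$ in $O(\tp+\diam)$ rounds; and finally Lemma \ref{lem:from_safea} drives $\safea$ into $\cfinal$ in $O(\ra + \loopa \diam)$ rounds. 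Summing yields $O(n + \tp + \ra + \loopa\diam)$ as claimed.

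Next I would verify silence and correctness by checking that no action of $\aep$ is enabled at any $\gamma \in \cfinal$. This is a mechanical case analysis over the sixteen actions in Table \ref{tbl:loop6}: each of $\refl{color_reset}$, $\refl{color_init}$, $\refl{down}$, $\refl{to2}$, $\refl{to0}$ is ruled out by $v.\clr = 3$; $\refl{color_init34}$ by the fact that every parent also has $\clr = 3$; $\refl{illegal}$ by $(3,3)\notin\ipair$; $\refl{error}$, $\refl{afindsp}$, $\refl{aenabled}$ by $\errA(v)=\fl$, $\aenabled(v)=\fl$ and all modes being $A$; $\refl{penabled}$ and $\refl{to4_P}$ by $v.\mode = A$; $\refl{to4_A}$ by $v.x = v.\invar{x}$ for all $x\in\outputA$ together with no neighbor having color 4; and $\refl{propagate_reset}$, $\refl{del_reset}$ by $v.\reset = 0$ everywhere. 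Thus $\cfinal$ is a set of terminal configurations, so silence follows, and because $\cfinal \subseteq \cgoal \subseteq \calL$ by the Correctness assumption on $\calA$, self-stabilization for $\calL$ follows. The space bound is immediate: aside from BFS's $O(\log n)$-bit $\varparent$ field, $\aep$ introduces only $\clr \in \{0,\dots,4\}$, $\mode \in \{A,P\}$, and $\reset \in \{0,1\}$, a constant number of bits per process.

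The main obstacle I anticipate is not novel content but careful accounting: the lemmas assume entry points in specific waypoint sets, so I must argue that each "or" in the chain is handled, that the BFS round count composes additively with the subsequent $O(\diam)$ phases (which only begin making sense once $\legBFS$ holds), and that the $\loopa \diam$ term from Lemma \ref{lem:from_safea} already absorbs the interleaved restarts of color waves between successive iterations of $\calA$. Once those bookkeeping details are stated, the theorem is a one-line composition of the preceding lemmas.
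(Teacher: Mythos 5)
Your proposal is correct and follows essentially the same route as the paper: the paper's proof of Theorem \ref{theorem:aep} likewise assembles the round bound by chaining Lemmas \ref{lem:to0orfinal}, \ref{lem:to_safea_or_safep}, \ref{lem:from_safep}, and \ref{lem:from_safea}, and disposes of the space bound by noting that everything outside the BFS module uses constant space. Your explicit action-by-action check that no action is enabled in $\cfinal$ is a detail the paper leaves implicit, but it does not change the argument.
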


\begin{proof}
 The correctness and the time complexity is obtained from Lemmas 
 \ref{lem:to0orfinal},
 \ref{lem:to_safea_or_safep},
 \ref{lem:from_safep},
 and \ref{lem:from_safea}.
 The variables used by all actions except for $\rmL_1$
 require only constant space per process,
 however, the module BFS ($\rmL_1$) requires $O(\log n)$ bits
 of space per process.
\end{proof}

One may think that
a classical self-stabilizing reset algorithm
(such as the one described in \cite{Dolev00})
can be used to simplify $\aep$
instead of using the five-color waves.
However, a self-stabilizing reset algorithm
cannot be used directly to implement $\aep$.
A reset algorithm aims to reset or initialize the
network to a legal initial configuration (after the illegal
configuration is detected)
while $\aep$ aims to execute the base algorithm $\calA$
repeatedly.
Unlike reset algorithms, a reset signal in Loop is not used to
initialize the network;
a reset signal itself does not trigger
an initialization algorithm $\calP$.
Instead, a reset signal is used just
to change the color of the root to zero (i.e. to initiate the color waves).
Five kinds of color waves described above
guarantee that the $i+1$\tH execution of the base algorithm starts only
after the $i$\tH execution terminates.

\section{Minimal $\dmax$-Grouping Algorithm}
\label{sec:algorithm}

\begin{table*}
 \refstepcounter{tbl}
 \label{tbl:init}
 \center
 \begin{tabular}{cc}
  \toprule
  \multicolumn{2}{c}
  {\textbf{Table \ref{tbl:init}: $\init$}}\\
  \midrule
  \multicolumn{2}{l}{\hspace{-0.25cm} \textbf{[Actions of process $v$]}}\\
  \hspace{-0.5cm}
  \begin{minipage}[c]{0.4\hsize}
  \begin{tabular}{llll}
   \isubstitute{init:domain}
   {$v.\domain$}
   {$\Domain(v)$}
   \isubstitute{init:dist}
   {$v.\dist[u]$}
   {$\Dist(v,u)$}
   \isubstitute{init:height}
   {$v.\height$}
   {$\Height(v)$}
   \isubstitute{init:initgroup}
   {$v.\initgroup$}
   {$\Initgroup(v)$}
   \isubstitute{init:group}
   {$v.\invar{\group}$}
   {$\Initgroup(v)$}   
  \end{tabular}
  \end{minipage}
  &
  \begin{minipage}[c]{0.55\hsize}
  \begin{tabular}{llll}
   \isubstitute{init:groups}
   {$v.\invar{\groups}[u]$}
   {$\Share(v,u,\invar{\groups},v.\invar{\group})$}
   \isubstitute{init:groupdist}
   {$v.\invar{\groupdist}[u]$}
   {$\Distance(v,u,\invar{\groupdist},\same{v})$}
   \isubstitute{init:stampon}
   {$v.\invar{\stampon}[u]$}
   {$\fl$}
   \isubstitute{init:prior}
   {$v.\invar{\prior}[u]$}
   {$\fl$}
  \end{tabular}
  \end{minipage}\\
  \bottomrule
 \end{tabular}
 \vspace{-0.2cm}
\end{table*}

 \begin{table*}
 \center
 \refstepcounter{tbl}
 \label{tbl:initfunctions}
  \begin{tabular}{c}
   \toprule
       \textbf{Table \ref{tbl:initfunctions}: Functions used to describe
   $\init$}\\
   \midrule
    ${\begin{aligned}
 \Dist(v,u) &=
 \Distance(v,u,\dist,\{w \in N_v \mid u \in w.\domain\}),\\
 \Domain(v) &= \{u \mid \Dist(v,u) \neq \bot,\ \Dist(v,u) \le k+1\},\\
 \Height(v) &=
 \begin{cases}
  0 & \textif |\child(v)| = 0\\
  \max \{u.\height + 1 \bmod \lfloor \dmax /2  + 1\rfloor
  \mid u \in \child(v)\}& \otherwise,
 \end{cases}\\
 \Initgroup(v) &=
 \begin{cases}
  v & \textif |\parent(v)|=0
  \vee v.\height = \lfloor \dmax /2 \rfloor \hspace{-7pt}\\
  (v.\varparent).\initgroup & \otherwise.
 \end{cases}\\
      \end{aligned}}$\\
   \bottomrule
  \end{tabular}
 \end{table*}
 
 \begin{table*}[t]
 \center
 \refstepcounter{tbl}
 \label{tbl:error}
  \begin{tabular}{c}
   \toprule
       \textbf{Table \ref{tbl:error}: Error-detecting variable $\errmerge(v)$}\\
   \midrule
    $\begin{aligned}
  \errmerge(v)
 &\equiv \lnot (
 (v.\domain = \Domain(v)) \wedge (\forall u \in v.\domain: v.\dist[u] = \Dist(v,u))\\
 & \hspace{0.8cm}
 \wedge (v.\height = \Height(v))
 \wedge (v.\initgroup = \Initgroup(v))\\
 & \hspace{0.8cm}
 \wedge \Groupok(v) \wedge \Groupsok(v) \wedge \Groupdistok(v)\\
 &\hspace{0.8cm}
 \wedge \forall u \in v.\domain:v.\invar{\stampon}[u]\Rightarrow
 \Stampok(v,u)\\
 &\hspace{0.8cm} \wedge
 (\forall u \in N_v, \forall w \in v.\domain \cap u.\domain:
 v.\invar{\prior}[w]=u.\invar{\prior}[w])),\\
 \Groupok(v) &\equiv
 \forall u \in N_v:
 u.\initgroup = v.\initgroup
 \Rightarrow u.\invar{\group} = v.\invar{\group},\\
 \Groupsok(v) &\equiv
 \forall u \in v.\domain:
 v.\invar{\groups}[u] = \Share(v,u,\invar{\groups},v.\invar{\group})
 \wedge l_v \in g_v(l_v),\\ 
 \Groupdistok(v) &\equiv
 \forall u \in g_v(l_v):
 v.\invar{\groupdist}[u]      
 = \Distance(v,u,\invar{\groupdist},\same{v})
 \neq k+1,\\
  \Stampok(v,u) &\equiv
 \forall w \in S_v: w.\invar{\stampon}[u] \wedge
 \forall w \in N_v(u): w.\invar{\stampon}[l_v]\\
 &\hspace{13pt}
 \wedge
 (s_1 \neq \bot \vee s_2 \neq \bot) \wedge s_D \neq \bot
 \wedge (s_2= \bot \Rightarrow \forall w \in N_v(u):
 w.\invar{\stamptwo}[l_v] \neq \bot)\\
 &\hspace{13pt} 
 \wedge \forall w \in S_v:
 (s_1,s_2) = (w.\invar{\stampone}[u],w.\invar{\stamptwo}[u])\\
 & \hspace{13pt}  \wedge (s_2 = v \Rightarrow s_D=k+1)
 \wedge \forall i \in \{1,2\}: (s_i \neq \bot
 \Rightarrow s_i \in g_v(l_v))\\
 &\hspace{13pt} \wedge (
 (s_1,s_2,s_D) \neq (v,\bot,0) \Rightarrow \\
 &\hspace{1.5cm}
 s_D = 1+\min(\{w.\invar{\stampdist}[u]
 \mid
 w \in S_v\}
 \cup \{w.\invar{\stampdist}[l_v] \mid w \in N_v(u)\}))
       \end{aligned}$\\
   \hspace{-1cm}
   where $(s_1,s_2,s_D) = (v.\invar{\stamptwo}[u], v.\invar{\stamptwo}[u], v.\invar{\stampdist}[u])$.\\
   \bottomrule
  \end{tabular}
 \end{table*}       

In this section, we give a silent self-stabilizing algorithm
for minimal $\dmax$-grouping using
the loop composition method described in the previous section.
In the following, we call a set of processes a \emph{group}.
Two distinct groups $g_1, g_2 \subseteq V$ are said to be
\emph{mergeable} if $\diam(\gsub(g_1 \cup g_2)) \le k$.
Two distinct groups $g_1, g_2 \subseteq V$
are \emph{near} if
$\{v_1,v_2\}\in E$ holds for some $v_1 \in g_1$ and $v_2 \in g_2$
and 
every $u_1 \in g_1$ and $u_2 \in g_2$
are within $\dmax$ hops
in $G$ (not necessarily in $\gsub(g_1 \cup g_2)$). 
Note that two groups are mergeable
only if they are near, 
but two groups are not necessarily mergeable
even if they are near.

\subsection{Overview}
The proposed algorithm $\mei$ consists of
two algorithms $\init$ and $\merge$.
Roughly speaking,
$\init$ makes an initial partition $g_1,\dots,g_s$ 
where $s \le 2n/k + 1$,
and $\merge$ merges two groups
as long as two mergeable groups exist.
Specifically, in an execution of $\merge$,
all processes of each group $g_i$ first agree
on choosing one of $g_i$'s near groups,
say $g_j$, as the target, and
next check whether $g_i$ and $g_j$ are mergeable.
If $g_i$ and $g_j$ are mergeable
and $g_j$ also targets $g_i$,
then they are merged;
if they are not mergeable,
the \emph{stamp} indicating they are not mergeable
is generated on the memories of all the processes of $g_i$ and $g_j$.
This stamp removes $g_j$ from the set of target candidates of $g_i$
in the following executions of $\merge$. 
However, this stamp is not permanent;
it is removed
when $g_i$ and/or $g_j$ merges with another group.
This removal is needed because
the two groups stamped as unmergeable may become mergeable when one of
them is merged to another group. 
After a finite number of executions of $\merge$,
the network reaches a configuration in $\cgoal(\merge,\errmerge)$,
in which all the groups have stamps for all their near groups.
As we shall see later, we assign \emph{prior} labels
to each group so that prior groups
have higher priority of becoming targets than
non-prior groups.
This strategy is to ensure
that the number of executions of $\merge$
is bounded by $O(n/k)$.

\subsection{Preliminaries}
Actions of $\init$ and $\merge$
are given in Tables \ref{tbl:init} and \ref{tbl:merge}
respectively.
By the definition of the composition algorithm,
we can use $v.\varparent$ (thus $\parent(v)$ and $\child(v)$)
obtained from algorithm $\BFS$ in $\init$ and $\merge$,
and assume that the network remains in $\legBFS$
in what follows.
Except for copying variables,
each process $v$ has four non-array variables
and the following thirteen array variables,
as shown in Table \ref{tbl:variables}.
Every copying variable $\invar{x}$ has the same range
as the corresponding variable $x \in \outputM$.
In Tables \ref{tbl:init} and \ref{tbl:merge},
``$v.x \longleftarrow \chi(v)$'' means
``$v.x \neq \chi(v) \longrightarrow v.x \la \chi(v)$'',
and ``$v.x[u] \longleftarrow \chi(v,u)$'' means
``$\exists u \in v.\domain:v.x[u] \neq
f(\chi(v,u)) \longrightarrow v.x[u] \la f(\chi(v,u))$
for all $u \in v.\domain$''
where $f(a) = a$ if $a$ belongs to the range of $v.x[u]$;
otherwise $f(a) = \bot$ for any $a$.
We define $\min \emptyset = \bot$
and $\min S \cup {\bot} = \min S$ for any set $S$.

We denote $l_v = v.\invar{\group}$,
$N_v(u) = \{w \in N_v \mid l_w = u\}$,
$\same{v} = N_v(l_v)$,
$g_v(u) = \{w \in v.\domain \mid v.\invar{\groups}[w] = u\}$,
and $g(u) = \{w \in V \mid w.\invar{\group} = u\}$.
Note that the value of these notations are based on
copying variables $\invar{\group}$ and $\invar{\groups}$,
and the values of $l_v$, $N_v(u)$, $\same{v}$, $g_v(u)$
are computable locally at process $v$
while we use $g(u)$ only in explanations and proofs.

We use three macros $\Share$, $\Minprocess$,
and $\Distance$.
For $v,u \in V$, array variable $x$,
and the function $f'$ on the variables of a process,
we define $\Share(v,u,x,f'(v)) = f'(v)$ if $v = u$;
otherwise $\Share(v,u,x,f'(v)) =
\min \{w.x \mid w \in N_v, v.\invar{\dist}[u]=w.\invar{\dist}[u]
+1\}$.
The action
``$v.x[u] \longleftarrow \Share(v,u,x,f'(v))$''
lets every process $v \in V$
stores $f'(u)$ on $v.x[u]$
for every $u \in N_v^{\dmax+1}$.
For example,
for every $v \in V$ and $u \in N_v^{\dmax+1}$,
$v$ eventually
has $v.\groups[u] = u.\group$ by
action $\refm{groups}$ in Table \ref{tbl:merge}.
For $v \in V$, variable $x \in \ID$,
and predicate $Q$ on variables of a process,
we define $\Minprocess(v,x,Q(v)) = \min\{v,w\}$
if $Q(v)$ holds;
otherwise $\Minprocess(v,x,Q(v)) = w$,
where
$w = \min\{u.x \mid u \in S_v,
v.\invar{\groupdist}[u.x] = u.\invar{\groupdist}[u.x]
\allowbreak +1\}$.
The action
``$v.x \longleftarrow \Minprocess(v,x,Q(v))$''
lets
process $v$ store on $v.x$
the minimum identifier of the processes
satisfying $Q$ in its group;
$v$ stores $\bot$ on $v.x$
if no process of the group satisfies $Q$.
For example, for every $v \in V$ and $u \in N_v^{k+1}$,
action $\refm{border}$ in Table \ref{tbl:merge}
lets $v$ store on $v.\border[u]$
the process with the minimum identifier
in group $l_v$ which
neighbors to a process in group $u$,
if such a process exists.
For $u,v \in V$, array variable $x$, and set $X \subseteq N_v$,
we define $\Distance(v,u,x,X) = 0$ if $u =v$;
otherwise $\Distance(v,u,x,X) = 1 + \min_{w \in X} w.x[u]$. 
By an abuse of notation, we define $1 + \bot = \bot$
throughout the paper.
Hence, $\Distance(v,u,x,X)=\bot$
if $v \neq u$ and $X = \emptyset$.
This macro is useful to compute
some kind of distance between $u$ and $v$.
For example, 
for every $v \in V$ and $u \in S_v$,
$v$ eventually has
$v.\invar{\groupdist}[u] = d_{S_v}(u,v)$
by action $\refi{groupdist}$
in Table \ref{tbl:init}.

\subsection{Algorithm $\init$}
Algorithm $\init$ creates an initial partition based on
the BFS-tree, and computes the values of the variables used in $\merge$,
such as $\domain$ and $\dist$.
The functions used to describe $\init$ in Table \ref{tbl:init}
are defined in Table \ref{tbl:initfunctions}.
Algorithm $\init$ first computes and stores
$N_v^{\dmax+1}$ on $v.\domain$
and 
stores $d(v,u)$ on $v.\dist[u]$ for each $u \in N_v^{\dmax+1}$
within $O(\dmax)$ rounds ($\refi{domain}$ and $\refi{dist}$).
Next, $\init$ gives
an initial partition based on heights of processes in the BFS-tree
and stores the group identifier on each $v.\initgroup$
($\refi{height}$ and  $\refi{initgroup}$).
After that, $\init$ initializes
five copying variables $\invar{\group}$,
$\invar{\groups}$, $\invar{\groupdist}$
$\invar{\stampon}$, and $\invar{\prior}$
($\refi{group}$, $\refi{groups}$,
$\refi{groupdist}$, $\refi{stampon}$, and $\refi{prior}$).
The number of initial groups is at most $2n/k + 1$
since every initial group has at least $k/2$ processes,
except for the group including the root of the BFS-tree.

We define error-detecting variable $\errmerge(v)$ in Table \ref{tbl:error}.
(Ignore $\Stampok(v,u)$ before the middle of Section \ref{sec:merge}.)
The following two lemmas hold trivially
by the definition of $\errorM = \bigvee_{v \in V}\errmerge(v)$
(in spite of the definition of $\Stampok(v,u)$).
\begin{lemma}
 \label{lem:init}
 Every maximal execution of $\init$
 terminates at a configuration
 in $\lnot \errorM$ within $O(\dmax)$ rounds.
 \end{lemma}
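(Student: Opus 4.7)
The plan is to split the argument into two parts: correctness, that every terminal configuration of $\init$ lies in $\lnot \errorM$, and a bound on the number of rounds required to reach termination.

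Correctness is essentially a mechanical matching, because every action of $\init$ is a pure substitution $v.x \leftarrow \chi(v)$ that is enabled exactly when $v.x \neq \chi(v)$. Consequently, at any terminal configuration every $v.x$ coincides with its target $\chi(v)$, and I would verify each conjunct of $\errmerge(v)$ in Table \ref{tbl:error} against the corresponding substitution: the four equalities $v.\domain = \Domain(v)$, $v.\dist[u]=\Dist(v,u)$, $v.\height=\Height(v)$, $v.\initgroup=\Initgroup(v)$ come directly from $\refi{domain}$--$\refi{initgroup}$; $\Groupok(v)$ holds because $\refi{group}$ copies $v.\initgroup$ into $v.\invar{\group}$ at every $v$, so equal $\initgroup$ values propagate to equal $\invar{\group}$ values; $\Groupsok(v)$ and $\Groupdistok(v)$ follow from $\refi{groups}$ and $\refi{groupdist}$ together with the base cases $v \in v.\domain$ and $v.\invar{\groups}[v] = l_v$; the $\Stampok(v,u)$ clause is vacuous because $\refi{stampon}$ forces $v.\invar{\stampon}[u] = \fl$; and agreement of $\invar{\prior}$ among neighbours is immediate from $\refi{prior}$ uniformly setting $v.\invar{\prior}[u] = \fl$.

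For the round bound I would examine the dependency DAG among the substituted quantities, relying on the fact that within $\aep$ the algorithm $\init$ is launched only in configurations of $\legBFS$, so the BFS-tree pointers are already stable. The pair $(\dist,\domain)$ behaves as a self-stabilizing $(\dmax{+}1)$-bounded BFS: an induction on distance $t$ shows that after $O(t)$ rounds every entry $v.\dist[u]$ and the corresponding membership of $u$ in $v.\domain$ has taken its final value for all $(v,u)$ with $d(v,u) \le t$, yielding stabilization in $O(\dmax)$ rounds. Then $\refi{height}$ settles bottom-up on the BFS tree --- leaves fix to $0$ in one round and by induction on subtree depth every internal process obtains its correct $\height$ within $O(\diam)$ rounds --- after which $\refi{initgroup}$ propagates top-down in $O(\diam)$ further rounds, since each non-root process merely copies either its own identifier or its parent's now-stable $\initgroup$. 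The remaining copying actions $\refi{group}$, $\refi{groups}$, $\refi{groupdist}$, $\refi{stampon}$, $\refi{prior}$ depend only on already-stable quantities, with $\invar{\groups}$ and $\invar{\groupdist}$ propagating through neighbourhoods of radius at most $\dmax+1$ via the $\Share$ and $\Distance$ macros, and thus stabilize in a further $O(\dmax)$ rounds. Summing yields the claimed bound, which is precisely what Theorem \ref{theorem:aep} will absorb as $\tp$.

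The main obstacle I anticipate is the apparent circular dependence between $\dist$ and $\domain$: $\Dist(v,u)$ restricts its minimum to neighbours $w$ with $u \in w.\domain$, while $\Domain(w)$ is itself computed from $w.\dist$. My plan is to carry the joint induction on the pair $(\dist,\domain)$, using the bounded range $[0,2\dmax] \cup \{\bot\}$ to rule out pathological initial values that could otherwise cycle: once every $w$ with $d(v,w) < t$ has joint-stabilized, a single firing of $\refi{dist}$ followed by $\refi{domain}$ at $v$ fixes all entries at distance $t$, after which the substitution guards keep both actions permanently disabled at those entries. A minor check, easily dispatched, is that a substitution cannot re-enable itself on a stabilized variable, since its guard $v.x \neq \chi(v)$ cannot spontaneously become true again without some $\init$ action altering $v.x$.
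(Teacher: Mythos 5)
The paper gives no real proof of this lemma --- it simply declares that it ``holds trivially'' from the definition of $\errorM$ --- so your correctness half, which checks each conjunct of $\errmerge(v)$ against the corresponding substitution of $\init$, is a legitimate (and more careful) elaboration of what the authors left implicit. That half is essentially right, modulo two glossed details you should still discharge: the clause $l_v \in g_v(l_v)$ in $\Groupsok$ needs the observation that $\Initgroup(\Initgroup(v)) = \Initgroup(v)$ and that the leader lies in $v.\domain$, and the clause $v.\invar{\groupdist}[u] \neq \dmax+1$ in $\Groupdistok$ needs the geometric fact that every initial group is connected along tree paths to its leader and hence has internal diameter at most $2\lfloor \dmax/2 \rfloor \le \dmax$.

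The genuine gap is in your round count: your own ledger reads $O(\dmax) + O(\diam) + O(\diam) + O(\dmax)$, and ``summing yields the claimed bound'' does not follow, since the claimed bound is $O(\dmax)$ and $\diam$ is not $O(\dmax)$ in general. The $\initgroup$ term is repairable --- once heights are stable, every process is within $\lfloor \dmax/2 \rfloor$ tree-hops of its group leader, so $\refi{initgroup}$ settles in $O(\dmax)$ rounds, not $O(\diam)$. The $\height$ term is not: $\Height$ is a bottom-up computation over the entire BFS tree, and from an adversarial initial configuration (e.g., a path whose heights are mutually consistent everywhere except at the leaf) the correction propagates one level per round, so stabilization genuinely costs $\Theta(\diam)$ rounds; the modular arithmetic in $\Height$ does not shortcut this. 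You therefore need either a new idea for bounding the height computation by $O(\dmax)$ (I do not see one), or to weaken the conclusion to $O(\dmax+\diam)$ rounds --- which, it is worth noting, would still suffice for Theorem \ref{therem:mei}, since $\tp$ enters the bound of Theorem \ref{theorem:aep} additively alongside an $O(n)$ term and $\diam \le n$. As written, the step ``summing yields the claimed bound'' would fail.
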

 \begin{lemma}
 \label{lem:numgroups}
 The number of groups,
 \ie $|\{l_v \mid v \in V\}|$,
 is no more than $2n/k + 1$
 at a configuration in $\lnot \errorM$.
 \end{lemma}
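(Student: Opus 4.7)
Set $h = \lfloor k/2 \rfloor$ and fix any $\gamma \in \lnot \errorM$. The plan is to bound $|\{l_v \mid v \in V\}|$ by first counting the distinct values of $\initgroup$ at $\gamma$, and then using the $\Groupok$ conjunct of $\errmerge$ to show that the number of distinct $\invar{\group}$ values cannot exceed this count.

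Because $\errmerge(v)=\fl$ for every $v$, we have $v.\height = \Height(v)$, $v.\initgroup = \Initgroup(v)$, and $\Groupok(v)$ at $\gamma$. Call a process $\ell$ a \emph{leader} if $|\parent(\ell)|=0$ or $\ell.\height = h$; these are exactly the processes with $\Initgroup(\ell) = \ell$. For each leader $\ell$ let $V(\ell) = \{v \in V \mid v.\initgroup = \ell\}$. Because every non-leader $v$ satisfies $v.\initgroup = (v.\varparent).\initgroup$, iterating $\varparent$ from any $v$ reaches a unique nearest-leader ancestor, so the $V(\ell)$ partition $V$ and each $V(\ell)$ is a connected subtree of the BFS tree rooted at $\ell$. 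Every BFS-tree edge is an edge of $G$, so $V(\ell)$ is also connected in $G$; repeatedly applying $\Groupok$ (which forces neighbors with equal $\initgroup$ to have equal $\invar{\group}$) along a spanning tree of $V(\ell)$ then shows that all processes in $V(\ell)$ share a single $\invar{\group}$. Hence $|\{l_v \mid v \in V\}| \le \#\{\text{leaders}\}$.

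Next, I show $|V(\ell)| \ge h+1$ for every non-root leader $\ell$. If $h = 0$ the bound $2n/k+1$ is immediate, so assume $h \ge 1$. Since $\ell.\height = h$ and heights lie in $\{0,\ldots,h\}$, the definition of $\Height$ forces some $u \in \child(\ell)$ to satisfy $(u.\height+1)\bmod(h+1) = h$; in this range only $u.\height = h-1$ achieves the value $h$ (the wrap-around $u.\height = h$ yields $0$, and $u.\height < h-1$ yields at most $h-1$). Iterating, I build a descending chain $\ell = v_0, v_1, \ldots, v_h$ in the BFS tree with $v_i.\height = h-i$. For each $i \ge 1$, $v_i.\height \ne h$ and $v_i$ is not the root, so $v_i.\initgroup = (v_i.\varparent).\initgroup$; by induction on $i$, $v_i.\initgroup = \ell$, and the $h+1$ distinct nodes $v_0,\ldots,v_h$ all lie in $V(\ell)$.

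Summing over leaders, $n = \sum_{\ell} |V(\ell)| \ge \#\{\text{non-root leaders}\}\cdot (h+1)$, and since $h+1 = \lfloor k/2 \rfloor + 1 \ge k/2$ for every $k \ge 1$ (check both parities), we get $\#\{\text{non-root leaders}\} \le 2n/k$. Adding $1$ for the root leader gives at most $2n/k+1$ leaders, and hence the same bound on $|\{l_v \mid v \in V\}|$. The main obstacle is the chain construction in the third paragraph: the modular wrap-around in $\Height$ could a priori manufacture the value $h$ from a child of unexpected height, but the small case analysis above rules this out and makes the strictly decreasing chain of heights $h, h-1, \ldots, 0$ genuinely exist inside every non-root leader's initial group.
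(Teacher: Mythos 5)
Your proof is correct and takes essentially the same route as the paper, which states the key counting fact informally (``every initial group has at least $k/2$ processes, except for the group including the root of the BFS-tree'') and then declares the lemma to hold trivially from the definition of $\errorM$. You simply fill in the details the paper omits --- the descending height-chain showing $|V(\ell)| \ge \lfloor k/2\rfloor + 1 \ge k/2$ for non-root leaders, and the use of $\Groupok$ to pass from $\initgroup$-classes to distinct $\invar{\group}$ values --- so this is a fleshed-out version of the paper's argument, not a different one.
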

       
\subsection{Algorithm $\merge$}
\label{sec:merge}
Assuming the network remains at
configurations in $\lnot \errorM$,
$\merge$ merges two near groups
if mergeable;
Otherwise, it stamps
them 
to memorize that they are unmergeable
by using three variables
$\stampone$, $\stamptwo$, and $\stampdist$.
The functions used to describe $\merge$ in Table \ref{tbl:merge}
are defined in Table \ref{tbl:mergefunctions}.

 A group $g(i)$ is called $g(j)$'s {\em candidate\/} if
 $g(i)$ is near to $g(j)$
 and $j.\invar{\stampon}(i)$ does not hold,
 and $g(i)$ is said to be \emph{prior}
 if $i.\invar{\prior}[i]$ holds;
 otherwise $g(i)$ is \emph{non-prior}.
 Note that, if $g(i)$ is near to $g(j)$,
 all nodes $v \in g(j)$
 agree on their
 $\invar{\stampon}[i]$ and $\invar{\prior}[i]$,
 and $v.\invar{\prior}[i] = i.\invar{\prior}[i]$
 since we assume that the network stays in $\lnot \errorM$.
%
 
 In actions $\refm{border}, \dots, \refm{stamptwo}$,
 each group chooses one of its candidates
 and checks mergeability with it.
 Specifically, all processes of each group $g(i)$
 agree to choose the same group $g(j)$ as their target;
 $g(j)$ is the prior candidate that has the minimum $j$
 if such a candidate exists;
 otherwise $g(j)$ is the candidate that has the minimum $j$
 ($\refm{border}$, $\refm{far}$, and $\refm{target}$).
 After actions
 $\refm{border},\dots,\refm{target}$ converge\footnotemark{},
 either $\Detector(j,i)$ or $\Detector(i,j)$ holds
 since $j = \Target(i)$. In what follows,
 we assume $\Detector(j,i)$ without loss of generality.
 \footnotetext{We say that actions
 $A_1,A_2,\dots,A_s$ \emph{converge}
 if none of them is enabled at any process.}
 Then, action $\refm{mergedist}$ leads to $v.\mergedist[u] = d_{g(i)\cup g(j)}(v,u)$
 for all $v \in g(j)$ and $u \in g(i)$ (Lemma \ref{lem:mergedist}).
 Let $\vfind = \min\{v \in g(j) \mid \exists u \in g(i): d_{g(i)\cup g(j)}(v,u) = \dmax + 1\}$ and
 $\vfound = \min\{u \in g(i) \mid d_{g(i)\cup g(j)}(u,\vfind)=k+1\}$.
 If $g(i)$ and $g(j)$ are unmergeable,
 all processes $v \in g(j)$ compute $\vfind$ using $\mergedist$,
 and make the stamp such that
 $(v.\stampone[i],v.\stamptwo[i],v.\stampdist[i])=
 (\vfind,\bot,d_{g(i) \cup g(j)}(v,\vfind))$,
 and all processes $u \in g(i)$
 make the stamp such that
 $(u.\stampone[j],u.\stamptwo[j],u.\stampdist[j])=(\bot,\vfound,d_{g(i)
 \cup g(j)}(u,\vfind))$
 ($\refm{stampone},\dots,\refm{stamptwo}$).
 If $g(i)$ and $g(j)$ are mergeable,
 then $\vfind = \bot$,
 thus all processes $v \in g(j)$ and $u \in g(i)$
 store $\bot$ on $v.\stampdist[i]$ and $u.\stampdist[j]$
 ($\refm{stampone}$ and $\refm{stampdist}$).
 Old stamps between two groups $g(i')$ and $g(j')$
 remain when $\Target(i') \neq j'$ and $\Target(j') \neq i'$
 hold ($\refm{stampone},\dots,\refm{stamptwo}$).

 After actions $\refm{border},\dots,\refm{stamptwo}$ converge, 
 two groups $g(i)$ and $g(j)$ merge by $\refm{group}$
 (\ie $v.\group = \min\{i,j\}$ for all $v \in g(i)\cup g(j)$)
 if and only if
 the two groups are mergeable
 and they target each other,
 which can be easily determined from the values of $\target$ and $\stampdist$.
 When a group is not merged,
 all processes $v$ of the group remain at the same group
 (\ie we have $v.\group = v.\invar{\group}$).
 Thereafter,
 $\groups$ and $\groupdist$
 are updated by $\refm{groups}$ and $\refm{groupdist}$.
 The variable $\stampon$ represents the validity
 of the stamp.
 If a group merges with its target,
 the stamps regarding the group should be removed.
 Hence,
 $v.\stampon[u]$ is set to $\fl$
 not only when $v.\stampdist[u] = \bot$
 but also when 
 $v$ and/or $u$ merges with some group;
 otherwise $v.\stampon[u]$ is set to $\tr$
 ($\refm{merging}$ and $\refm{stampon}$).
 Variable $\prior$ is updated
 by the following simple policy ($\refm{prior}$): 
 a non-prior group becomes a prior group
 when it merges with its target,
 and a prior group becomes a non-prior group
 when its $\stampon[i]$ holds
 for every near group $g(i)$.
 Thanks to this policy, the number of iterations
 of $\merge$ is bounded by $O(n/k)$
 (Lemma \ref{lem:iteration}).

 \begin{table*}[h]
 \refstepcounter{tbl}
  \label{tbl:merge}
 \center
 \begin{tabular}{llll}
  \toprule
  \multicolumn{4}{c}
  {\textbf{Table \ref{tbl:merge}: $\merge$}} \\
  \midrule
  \multicolumn{4}{l}{\textbf{[Actions of process $v$]}}\\
  \msubstitute{merge:border}
  {$v.\border[u]$}
  {$\Minprocess(v,\border[u],N_v(u)\neq \emptyset)$}
  \msubstitute{merge:far}
  {$v.\far[u]$}
  {$\Minprocess(v,\far[u], \exists w \in g_v(u): v.\dist[w]=k+1)$}
  \msubstitute{merge:target}
  {$v.\target[u]$}
  {$\Share(v,u,\target,\Target(v))$}
  \msubstitute{merge:mergedist}
  {$v.\mergedist[u]$}
  {$\Mergedist(v,u)$}
  \msubstitute{merge:stampone}
  {$v.\stampone[u]$}
  {$\Stampone(v,u)$}
  \msubstitute{merge:stampdist}
  {$v.\stampdist[u]$}
  {$\Stampdist(v,u)$}
  \msubstitute{merge:stamptwo}
  {$v.\stamptwo[u]$}
  {$\Minprocess(v,\stamptwo[u], \Stampdist(v,u)=k+1)$}
  \msubstitute{merge:group}
  {$v.\group$}
  {$\Group(v)$}
  \msubstitute{merge:groups}
  {$v.\groups[u]$}
  {$\Share(v,u,\groups,v.\group)$}
  \msubstitute{merge:groupdist}
  {$v.\groupdist[u]$}
  {$\Distance(v,u,\groupdist,\{w \in N_v \mid w.\group = v.\group\})$}
  \msubstitute{merge:merging}
  {$v.\merging[u]$}
  {$\Share(v,u,\merging,\Merging(v))$}  
  \msubstitute{merge:stampon}
  {$v.\stampon[u]$}
  {$v.\stampdist[u]\neq \bot
  \wedge \lnot \Merging(v)
  \wedge \lnot v.\merging[u]$}
  \msubstitute{merge:prior}
  {$v.\prior[u]$}
  {$\Share(v,u,\prior,\Prior(v))$}
  \bottomrule
 \end{tabular}
 \vspace{-0.2cm}
\end{table*}

 \begin{table*}[t]
 \center
 \refstepcounter{tbl}
 \label{tbl:mergefunctions}
  \begin{tabular}{c}
   \toprule
       \textbf{Table \ref{tbl:mergefunctions}: Functions used to describe
   $\merge$}\\
   \midrule
 ${\begin{aligned}
 \Candidates(v) &=
 \{u \in v.\domain\setminus \{l_v\}: v.\border[u] \neq \bot
 \wedge v.\far[u] = \bot
 \wedge \lnot \ v.\invar{\stampon}[u]\}\\
 \Target(v) &=
 \begin{cases}
  \min \{u \in \Candidates(v) \mid v.\invar{\prior}[u]\}
  &\textif \exists u \in \Candidates(v): v.\invar{\prior}[u]\\
  \min \Candidates(v)
  &\otherwise
 \end{cases}\\
 \Mergedist(v,u) &= 
 \begin{cases}
  \Distance(v,u,\mergedist,S_v\cup N_v(\Target(v)))
  & \textif u \in g_v(l_v)\\
  \Distance(v,u,\mergedist,S_v\cup N_v(v.\invar{\groups}[u]))
  & \otherwise
 \end{cases}\\
 \Detector(v,u) &\equiv
 l_v = v.\target[u]
 \wedge (l_v \le u \vee \Target(v) \neq u)\\
 \Stampone(v,u) &=
 \begin{cases}
  \Minprocess(v,\stampone[u],
  \exists w \in g_v(u):v.\mergedist[w] = k+1)
  &  \textif \Detector(v,u)\\
  v.\invar{\stampone}[u]
  & \textif v.\invar{\stampon}[u]\\
  \bot & \otherwise
 \end{cases}\\
 \Stampdist(v,u) &=
 \begin{cases}
  0 & \textif v = \Stampone(v,u)\\
  1+\min\left(
  \begin{aligned}
   &\{w.\stampdist[u]\mid w \in S_v\}\\
   &\cup\{w.\stampdist[l_v]\mid w \in N_v(u)\}
  \end{aligned}
  \right)
  & \otherwise
 \end{cases}\\
 \Merging(v) &\equiv
 l_v = v.\target[\Target(v)]
 \wedge v.\stampdist[\Target(v)]=\bot\\ 
 \Group(v) &=
 \begin{cases}
  \min\{g_v(v),\Target(v)\}&\textif \Merging(v)\\
  g_v(v) & \otherwise
 \end{cases}\\
 \Saturated(v)
 &\equiv \forall u \in v.\domain: (v.\border[u] \neq \bot
 \wedge v.\far[u] = \bot)
 \Rightarrow v.\stampon[u]\\
 \Prior(v) &\equiv \Merging(v) \vee (v.\invar{\prior}[v]
 \wedge \lnot \Saturated(v))
   \end{aligned}}$\\
       \bottomrule
  \end{tabular}
 \end{table*}
 
Now, let us see the definition of $\Stampok(v,u)$ in Table \ref{tbl:error}.
Suppose that an execution of $\merge$ terminates
at some configuration $\gamma$.
Then, it is easily shown that
$v.\invar{\stampon}[u] \Rightarrow \Stampok(v,u)$
holds for all $v \in V$ and $u \in v.\domain$
in $\gamma^{\rmcopy}$.
(See Lemma \ref{lem:stampok}.)

\begin{lemma}
 \label{lem:stampon}
 Consider two groups $g(i)$ and $g(j)$ are near
 in a configuration $\gamma \in \lnot \errorM$.
 Then, $g(i)$ and $g(j)$ are not mergeable
 if $v.\invar{\stampon}[j]$ holds for some $v \in g(i)$.
\end{lemma}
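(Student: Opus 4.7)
The plan is to extract from the stamp data at $v$ a pair of witness processes $\vfind \in g(j)$ and $\vfound \in g(i)$ with $d_{g(i)\cup g(j)}(\vfound,\vfind) = \dmax+1$, which directly certifies $\diam(\gsub(g(i)\cup g(j))) > \dmax$ and hence the unmergeability of the two groups.

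First, since $\gamma \in \lnot\errorM$, the conjunct $v.\invar{\stampon}[u] \Rightarrow \Stampok(v,u)$ inside $\errmerge(v)$ forces $\Stampok(v,j)$. Using the two propagation clauses of $\Stampok$ --- namely $\forall w \in S_v: w.\invar{\stampon}[j]$ and $\forall w \in N_v(j): w.\invar{\stampon}[i]$ --- together with $\Groupsok$ and $\Groupdistok$ (which give $g(i)\cup g(j) \subseteq v.\domain$ and internal connectivity of each group within $\dmax$ hops), I would run a BFS-style induction over $\gsub(g(i) \cup g(j))$ to conclude that every $w \in g(i)$ has $w.\invar{\stampon}[j]$ and every $w \in g(j)$ has $w.\invar{\stampon}[i]$, so $\Stampok$ applies uniformly throughout both groups.

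Next, I would identify $\vfound \in g(i)$ as the process named by $v.\invar{\stamptwo}[j]$; the consistency clause $\forall w \in S_v : (s_1,s_2) = (w.\invar{\stampone}[j], w.\invar{\stamptwo}[j])$ propagates this identifier across all of $g(i)$, so at $\vfound$ itself the clause $s_2 = v \Rightarrow s_D = \dmax+1$ yields $\vfound.\invar{\stampdist}[j] = \dmax+1$. Symmetrically, iteratively chasing a neighbor that realizes the minimum in the recursive clause $s_D = 1 + \min(\{w.\invar{\stampdist}[j] \mid w \in S_v\} \cup \{w.\invar{\stampdist}[i] \mid w \in N_v(j)\})$ must terminate at the base case $(s_1,s_2,s_D)=(w,\bot,0)$, producing the source $\vfind = w \in g(j)$. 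The recursion together with the base case at $\vfind$ is precisely the Bellman equation for shortest-path distance in $\gsub(g(i)\cup g(j))$ with source $\vfind$; a standard argument (walking toward the minimizing neighbor both realizes a length-$s_D$ path and, inductively, rules out any shorter one) shows that this equation has a unique finite solution, so $w.\invar{\stampdist}[\cdot] = d_{g(i)\cup g(j)}(w,\vfind)$ for every $w$. Specialising to $w = \vfound$ delivers the required $d_{g(i)\cup g(j)}(\vfound,\vfind) = \dmax+1$.

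The main obstacle will be step two --- globally propagating the stamp bit and the identity of $\vfound$ across both groups --- since it requires carefully combining the local propagation clauses of $\Stampok$ with the several side-conditions of $\lnot\errorM$ (notably $\Groupok$, $\Groupsok$, and $\Groupdistok$) to ensure the induction actually covers all of $g(i)\cup g(j)$, and then confirming that the source-chasing walk and the Bellman-uniqueness argument are honest about staying inside $\gsub(g(i)\cup g(j))$ rather than wandering in the ambient graph $G$.
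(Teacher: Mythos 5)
Your proposal follows essentially the same route as the paper's proof: extract the witnesses $\vfind$ and $\vfound$ from the stamp variables made consistent by $\Stampok$ under $\lnot\errorM$, and use the distance clauses to conclude $d_{g(i)\cup g(j)}(\vfind,\vfound)=\dmax+1$, hence unmergeability. The paper states this as a brief two-case analysis (either $\vfind\in g(i)$ and $\vfound\in g(j)$, or vice versa) and simply asserts the distance claim, whereas your Bellman-equation argument spells out why $\stampdist$ must equal the true distance in $\gsub(g(i)\cup g(j))$. The one small gap is that you fix $\vfound$ as the process named by $v.\invar{\stamptwo}[j]$, but $\Stampok$ only guarantees $s_1\neq\bot \vee s_2\neq\bot$; when $s_2=\bot$ the clause $s_2=\bot\Rightarrow\forall w\in N_v(j): w.\invar{\stamptwo}[i]\neq\bot$ places $\vfound$ in $g(j)$ rather than $g(i)$, so you need the symmetric branch that the paper's case~(i) covers explicitly.
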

\begin{proof}
 Since $\gamma \in \lnot \errorM$ holds,
 (i) all processes $w_1 \in g(i)$ have a common non-null
 $\vfind = w_1.\invar{\stampone}[j] \in g(i)$
 and all processes $w_2 \in g(j)$ have
 a common non-null
 $\vfound = w_2.\invar{\stamptwo}[i] \in g(j)$,
 or (ii)
 all processes $w_1 \in g(i)$ have a common non-null
 $\vfound = w_1.\invar{\stamptwo}[j] \in g(i)$
 and all processes $w_2 \in g(j)$ have
 a common non-null
 $\vfind = w_2.\invar{\stampone}[i] \allowbreak \in g(j)$.
 In both cases, $\lnot \errorM$ guarantees
 $d_{g(i)\cup g(j)}(\vfind,\vfound)=k+1$,
 which means $\diam(\gsub(g(i)\cup g(j)))\ge k+1$.
\end{proof}
 
\begin{lemma}
\label{lem:mergedist}
 Let $\gamma$ be a configuration in $\lnot \errorM$
 where $\refm{border}, \dots, \refm{mergedist}$ converges
 and $j = \Target(i)$.
 In configuration $\gamma$,
 we have $v.\mergedist[u] = d_{g(i)\cup g(j)}(v,u)$
 for all $v \in g(j)$ and $u \in g(i)$.
\end{lemma}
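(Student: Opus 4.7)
My plan is to show that $v.\mergedist[u]$ satisfies exactly the standard BFS recurrence for distances in $\gsub(g(i)\cup g(j))$ from source $u$, and then conclude by induction on that distance.

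First, I would extract three consequences from the convergence hypothesis together with $\gamma\in\lnot \errorM$. Convergence of $\refm{target}$, combined with $j=\Target(i)$ and the fact that $\lnot \errorM$ forces $\border$, $\far$, $\invar{\stampon}$, $\invar{\prior}$ to agree across any single group, makes every $v\in g(i)$ compute $\Target(v)=j$. The invariant $\Groupsok$ from the definition of $\errmerge$, combined with the recursive unfolding of $\Share$, gives $v.\invar{\groups}[u]=l_u$ for every $v$ and every $u\in v.\domain$. Finally, near-ness of $g(i)$ and $g(j)$ guarantees $u\in v.\domain$ for every $v\in g(j)$ and $u\in g(i)$, so $\refm{mergedist}$ actually constrains $v.\mergedist[u]$.

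With these in hand I would unfold $\Mergedist(v,u)$ in the two relevant situations. For $v\in g(i)$ with $u\in g(i)$, the first branch applies (since $u\in g_v(l_v)=g_v(i)$) and, using $\Target(v)=j$, the neighbor set becomes $S_v\cup N_v(\Target(v))=N_v(i)\cup N_v(j)$. For $v\in g(j)$ with $u\in g(i)$, the second branch applies (since $u\notin g_v(l_v)=g_v(j)$) and, using $v.\invar{\groups}[u]=i$, the neighbor set becomes $S_v\cup N_v(i)=N_v(j)\cup N_v(i)$. Both cases yield the identical recurrence
\[
v.\mergedist[u]=\begin{cases} 0 & \text{if } v=u,\\ 1+\min_{w\in N_v(i)\cup N_v(j)} w.\mergedist[u] & \text{otherwise,} \end{cases}
\]
and $N_v(i)\cup N_v(j)$ is precisely the neighborhood of $v$ inside $\gsub(g(i)\cup g(j))$, so this is exactly the BFS recurrence for $d_{g(i)\cup g(j)}(\cdot,u)$.

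A straightforward induction on $d_{g(i)\cup g(j)}(v,u)$ over $v\in g(i)\cup g(j)$ then gives $v.\mergedist[u]=d_{g(i)\cup g(j)}(v,u)$; restricting to $v\in g(j)$ is the lemma. The only bookkeeping obstacle I foresee is the bounded storage range $[0,2k]\cup\{\bot\}$ of $\mergedist$: when the true distance exceeds $2k$ or is infinite, the action's value coercion replaces the result by $\bot$. I would handle this by strengthening the induction hypothesis to the matched disjunction ``if $d_{g(i)\cup g(j)}(v,u)\le 2k$ then $v.\mergedist[u]$ equals that distance, otherwise $v.\mergedist[u]=\bot$,'' which is enough to conclude the lemma in its stated range and which also clarifies how the recurrence gracefully truncates at the boundary.
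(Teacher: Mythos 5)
Your proposal is correct and follows essentially the same route as the paper: both arguments reduce to showing that the converged $\mergedist[\cdot]$ satisfies the BFS recurrence over the neighborhood $S_v\cup N_v(\cdot)=N_v(i)\cup N_v(j)$ of $\gsub(g(i)\cup g(j))$ and that the unique fixed point of that recurrence is the distance function (the paper via a $\ge$/$\le$ pair, you via induction on distance, which is equivalent). You are somewhat more careful than the paper in spelling out why $\Target(v)=j$ and $v.\invar{\groups}[u]=i$ make both branches of $\Mergedist$ coincide, and in flagging the $[0,2k]$ truncation, which the paper's proof silently ignores.
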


\begin{proof}
 We let $d(v) = d_{g(i)\cup g(j)}(v,u)$ 
 for any fixed $u \in g(i)$.
 Clearly, $u.\mergedist[u]=d(u)=0$ holds in $\gamma$.
 Let $w$ be a process of $g(j)\cup g(i) \setminus \{u\}$.
 We have $w.\mergedist[u] \ge d(w)$
 since $w.\mergedist[u] =
 1+\min\{w'.\mergedist[u] \mid w' \in N_w(i)\cup N_w(j)\}$
 and $w''.\mergedist[u] = 0$ only if $w''\neq u$.
 On the other hand,
 since there exists a path $w,w_1,w_2,\dots,u$ of length $d(w)$ in
 $G(g(i)\cup g(u))$,
 we have
 $w.\mergedist[u] \le 1+w_1.\mergedist[u] \le 2 + w_2.\mergedist[u] \le
 \dots \le d(w) + u.\mergedist[u] = d(w)$.
 \end{proof}

Let $x_i$ be the variable updated by action $\rmM_i$,
and $\calG_i(v)$ be the guard for action $\rmM_i$.
For example, variable $x_{\ref{merge:group}}$ denotes $\group$
and guard $\calG_{\ref{merge:prior}}(v)$ denotes
$\exists u \in v.\domain:v.\prior[u] \neq \Share(v,u,\prior,\Prior(v))$.
\begin{lemma}[Shiftable Convergence]
 \label{lem:shiftable}
 Every maximal execution $\varrho = \gamma_0, \gamma_1, \dots$
 of $\merge$ where $\gamma_0 \in \lnot \errorM$
 terminates at configuration $\gamma$
 such that $\gamma^{\rmcopy} \in \lnot \errorM$,
 within $O(\dmax)$ rounds.
\end{lemma}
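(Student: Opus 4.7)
Since $\merge$ writes only to variables in $\outputM$, the inputs -- the BFS-tree data, $\domain, \dist, \height, \initgroup$, and all copying variables $\invar{x}$ for $x \in \outputM$ -- stay constant along $\varrho$. Because $\errmerge(v)$ depends solely on these inputs, $\gamma_0 \in \lnot\errorM$ implies $\gamma_t \in \lnot\errorM$ for every $t$, so throughout $\varrho$ each group $g(i) = \{v : l_v = i\}$ has diameter at most $\dmax$ inside $G(g(i))$, $\invar{\groupdist}$ already records correct within-group distances, and the macros $\Share, \Distance, \Minprocess$ used by $\merge$ all propagate along stabilized structures of diameter $O(\dmax)$.

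For termination I would organize the thirteen actions in dependency layers according to which output variables they read: $\{\refm{border}, \refm{far}\}$, then $\{\refm{target}\}$, then $\{\refm{mergedist}\}$, then the chain $\refm{stampone} \to \refm{stampdist} \to \refm{stamptwo}$, then $\{\refm{group}, \refm{merging}\}$, then $\{\refm{groups}, \refm{groupdist}\}$, then $\{\refm{stampon}\}$, and finally $\{\refm{prior}\}$. By induction on layer I would show that once all earlier layers have stabilized, the current layer stabilizes within $O(\dmax)$ additional rounds regardless of daemon scheduling, because each of its updates either is purely local or performs a one-hop-per-round propagation of a now-fixed target value along a path of length $O(\dmax)$ -- inside $N_v^{\dmax+1}$, inside a single group of diameter $\leq \dmax$, or inside the union of two near groups of diameter $\leq 2\dmax+1$ in the case of $\mergedist$ and $\stampdist$. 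A constant number of layers, each bounded by $O(\dmax)$ rounds, then yields $O(\dmax)$-round termination.

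To establish $\gamma^\rmcopy \in \lnot\errorM$ I would verify each conjunct of $\errmerge(v)=\fl$ after the copying variables have been overwritten by the converged output values. The input-only conjuncts are inherited from $\gamma_0$; $\Groupok$ follows because two neighbors with a common $\initgroup$ share $l_v$ at $\gamma_0$ and therefore evaluate $\Target, \Merging,$ and $\Group$ identically; $\Groupsok$, $\Groupdistok$, and the $\invar{\prior}$-consistency clause follow from convergence of $\refm{groups}, \refm{groupdist}, \refm{prior}$ combined with a Lemma~\ref{lem:mergedist}-style argument identifying the propagated distances with actual graph distances inside the newly formed groups.

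The principal obstacle will be $\Stampok(v,u)$, which bundles a dozen interlocking conditions on $\vfind, \vfound,$ and $\stampdist$. My plan is a case split on the three branches of $\Stampone(v,u)$: when $\Detector(v,u)$ holds, joint convergence of $\refm{stampone}, \refm{stampdist}, \refm{stamptwo}$ together with Lemma~\ref{lem:mergedist} pins $\vfind$ down as the minimum element of $g(l_v)$ whose distance to $g(u)$ inside $g(l_v) \cup g(u)$ equals $\dmax+1$, and delivers every required equality on $\stampone, \stamptwo, \stampdist$ across both groups (the symmetric side coming from $\Detector(u,v)$ whenever $l_v > u$); in the stamp-carried branch $\Stampok$ is inherited verbatim from $\gamma_0$; and when the third branch sets $\stampdist[u] = \bot$, action $\refm{stampon}$ forces $v.\invar{\stampon}[u]$ to be false in $\gamma^\rmcopy$, making the implication vacuous. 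The delicate point is synchronizing the stamp values across both groups, but this is fully controlled by the symmetry of $\Detector$ and the already-stable $\invar{\groupdist}$.
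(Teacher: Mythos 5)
Your proposal is correct and follows essentially the same route as the paper: termination is obtained from the label-ordered dependency of the guards (each action's guard ignoring later-written variables, so the actions converge layer by layer in $O(\dmax)$ rounds, which is the paper's Lemma~\ref{lem:mergeterminate}), and $\gamma^{\rmcopy}\in\lnot\errorM$ is then verified conjunct by conjunct, with separate arguments for $\Groupok$, $\Groupsok$, $\Groupdistok$, the $\invar{\prior}$ clause, and a case analysis for $\Stampok$ that mirrors the paper's Lemmas~\ref{lem:groupsok}--\ref{lem:stampok}. Your explicit observation that $\errmerge$ reads only variables $\merge$ never writes, so $\lnot\errorM$ is invariant along $\varrho$, is a point the paper uses only implicitly, but it does not change the argument.
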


\begin{proof}
 It is guaranteed by Lemma \ref{lem:mergeterminate},
 as shown below,
 that $\varrho$ terminates
 within $O(\dmax)$ rounds.
 Hence, it suffices to show
 $\gamma^{\rmcopy} \in \lnot \errorM$
 where $\gamma$ is the configuration at which $\varrho$ terminates.
 Recall that $\errorM \equiv \bigvee_{v \in V} \errmerge(v)$,
 and $\errmerge(v)$ is defined as follows;
 Since $\gamma_0 \in \lnot \errorM$
 and execution $\varrho$ never updates
 variables $\domain$, $\dist$, $\height$, and $\initgroup$,
 configuration $\gamma^{\rmcopy}$ satisfies
 $(v.\domain =
\Domain(v)) \wedge (\forall u \in v.\domain:v.\dist[u] = \Dist(v,u))
 \wedge (v.\height = \Height(v)) \allowbreak \wedge (v.\initgroup = \Initgroup(v))$
 for any $v \in V$.
 Since $\varrho$ never
 assigns different groups to distinct processes in
 the same group at $\gamma_0$,
 $\gamma^{\rmcopy}$ satisfies $\Groupok(v)$ for any $v \in V$.
 By Lemmas \ref{lem:groupsok}, \ref{lem:groupdistok},
 and \ref{lem:stampok} shown below,
 in configuration $\gamma^{\rmcopy}$, we have
 $\Groupsok(v)$, $\Groupdistok(v)$,
 and $\forall u \in v.\domain: \allowbreak v.\stampon[u] \Rightarrow \Stampok(v,u)$
 for any $v \in V$.
 We also have $\forall u \in N_v,
 \forall w \in v.\domain \cap u.\domain:\allowbreak
 v.\invar{\prior}[w]=u.\invar{\prior}[w]$
 for any $v \in V$ in $\gamma^{\rmcopy}$,
 since $\gamma$ satisfies $v.\prior[u] = \Share(v,u,\prior, \Prior(v))$
 for all $v \in V$ and $u \in v.\domain$.
\end{proof}

\begin{lemma}
 \label{lem:mergeterminate}
 Every maximal execution $\varrho = \gamma_0, \gamma_1, \dots$
 of $\merge$ where $\gamma_0 \in \lnot \errorM$
 terminates within $O(\dmax)$ rounds.
\end{lemma}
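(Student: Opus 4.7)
The plan is to exploit the fact that $\gamma_0 \in \lnot \errorM$ pins down the variables $\domain$, $\dist$, $\height$, $\initgroup$ (which $\merge$ never writes) as well as every copying variable $\invar{x}$ (which lies in the input set of $\merge$ and is likewise not modified during $\varrho$). Consequently, throughout $\varrho$ each action $\rmM_i$ of $\merge$ is a pure assignment whose right-hand side depends only on local state and on neighbors' variables that are either constant or written by actions lying earlier in a natural dependency order. My first step is to make this order explicit: $\{\refm{border},\refm{far}\} \prec \refm{target} \prec \refm{mergedist} \prec \refm{stampone} \prec \refm{stampdist} \prec \refm{stamptwo} \prec \refm{group} \prec \{\refm{groups},\refm{groupdist},\refm{merging}\} \prec \refm{stampon} \prec \refm{prior}$, and to verify, by inspection of Tables \ref{tbl:merge} and \ref{tbl:mergefunctions}, that the right-hand side of $\rmM_i$ depends only on variables written by actions $\preceq \rmM_i$ together with input variables.

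Once the ordering is established, I would argue that each individual action stabilizes in $O(\dmax)$ rounds once all the actions preceding it have converged. Every right-hand side is built from the macros $\Share$, $\Minprocess$, or $\Distance$, which aggregate values from processes lying within a subgraph of diameter at most $\dmax$: either $v.\domain = N_v^{\dmax+1}$, the local group $g_v(l_v)$, or (in the case of $\Mergedist$, $\Stampone$, $\Stampdist$) the union $g(i)\cup g(j)$ consisting of a group and its target. Once its inputs are stable, a $\Share$- or $\Distance$-style action becomes a pure pipelined computation over such a subgraph and converges in $O(\dmax)$ rounds by induction on distance from the source; a $\Minprocess$-style action converges in the same time by an identical argument applied to the aggregating variable ($\invar{\groupdist}$ or $\stampdist$). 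Chaining a constant number of $O(\dmax)$-round phases then gives the $O(\dmax)$ overall bound.

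The main obstacle I expect is the stamping sub-pipeline $\refm{stampone}$--$\refm{stampdist}$--$\refm{stamptwo}$, because $\Stampdist$ is defined recursively via a $\Distance$-style minimum over $g(i)\cup g(j)$, the branch taken inside $\Stampone$ switches on $\Detector(v,u)$, and $\Stampone$ itself hides a $\Minprocess$ over $\mergedist$. To handle this cleanly I would first invoke Lemma \ref{lem:mergedist} to freeze $\mergedist$ at its correct values $d_{g(i)\cup g(j)}(v,u)$ after $\refm{mergedist}$ has run for $O(\dmax)$ rounds; then observe that $\Stampone$ depends only on $\mergedist$, $\target$, and input variables, so it stabilizes in $O(\dmax)$ rounds as a $\Minprocess$ over the group; finally $\Stampdist$ becomes a pure BFS rooted at $\vfind$ over $g(i)\cup g(j)$ and converges in $O(\dmax)$ further rounds, after which $\stamptwo$ settles in $O(\dmax)$ more rounds as another $\Minprocess$ over the group. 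Summing the constant number of $O(\dmax)$-round phases yields the round bound claimed by the lemma.
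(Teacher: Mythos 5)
Your proposal is correct and follows essentially the same route as the paper: the paper's proof likewise observes that the guard of each action $\rmM_i$ is independent of the variables written by later actions, so the actions converge one after another in label order, each within $O(\dmax)$ rounds because its computation is a pipelined aggregation over a region of diameter $O(\dmax)$. Your treatment is simply a more detailed unpacking of the paper's two-sentence argument, including the explicit handling of the $\refm{stampone}$--$\refm{stampdist}$--$\refm{stamptwo}$ sub-pipeline.
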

\begin{proof}
 Guard $\calG_i(v)$ is independent from the value of $u.x_j$
 for any $u \in N_v^{1}$ and $j > i$.
 Hence, in an execution of $\merge$,
 $v.x_j$ is never updated after actions $M_0,\dots,M_j$ converged
 (\ie $\forall u \in V, i \in [0,j]: G_i(u) = \fl$).
 Therefore, by the description of $\merge$,
 any action $M_j$ converges within $O(\dmax)$ rounds
 after actions $M_0,\dots,M_{j-1}$.
 This proves the lemma.
\end{proof}

\begin{lemma}
 \label{lem:groupsok}
 If a maximal execution $\varrho = \gamma_0, \gamma_1, \dots$
 of $\merge$ where $\gamma_0 \in \lnot \errorM$
 terminates at configuration $\gamma$,
 then $\gamma^{\rmcopy}$ satisfies
 $\Groupsok(v)$ for all $v \in V$.
\end{lemma}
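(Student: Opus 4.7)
The first conjunct of $\Groupsok(v)$, namely $v.\invar{\groups}[u] = \Share(v,u,\invar{\groups},v.\invar{\group})$ for every $u \in v.\domain$, should drop out immediately from the fact that $\varrho$ terminates. Since no process is enabled at $\gamma$, action $\refm{groups}$ is not enabled, so $v.\groups[u] = \Share(v,u,\groups,v.\group)$ in $\gamma$. Passing to $\gamma^{\rmcopy}$ merely renames $\group$ to $\invar{\group}$ and $\groups$ to $\invar{\groups}$ without changing their values, giving the required equation.

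For the second conjunct $l_v \in g_v(l_v)$, I would unfold the definitions: in $\gamma^{\rmcopy}$ this is equivalent to proving, at $\gamma$, that $v.\group \in v.\domain$ and $v.\groups[v.\group] = v.\group$. The first sub-claim is easy: by the formula for $\Group(v)$ we have $v.\group \in \{l_v, \Target(v)\}$; since $\merge$ updates neither $\domain$ nor $\invar{\group}$, the invariant $l_v \in v.\domain$ carried over from $\gamma_0 \in \lnot \errorM$ still holds, and $\Target(v) \in \Candidates(v) \subseteq v.\domain$ by definition.

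For $v.\groups[v.\group] = v.\group$, I would first prove an auxiliary claim: at $\gamma$, $v.\groups[u] = u.\group$ for every $v \in V$ and every $u \in v.\domain$. This follows by induction on $v.\dist[u]$, using non-enabledness of $\refm{groups}$: the base case $v=u$ is immediate from $\Share(v,v,\groups,v.\group) = v.\group$, and for the step, the minimization in $\Share$ over neighbors $w$ with $w.\dist[u]=v.\dist[u]-1$ collapses to $u.\group$ by the inductive hypothesis. Applied at $u = v.\group$, this reduces the target equality to $(v.\group).\group = v.\group$.

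The genuine obstacle lies in this last step, which expresses the consistency property that the minimum identifier chosen as a new group's name must itself lie in that new group and compute the same $\group$ value. I plan to argue it by cases on $\Merging(v)$: in the non-merging case $v.\group = l_v$, and I must show that $l_v$ also does not merge and hence satisfies $l_v.\group = l_v$; in the merging case $v.\group = \min\{l_v, \Target(v)\}$, and I must show that this minimizer is itself a representative of one of the two merging groups and agrees with the symmetric partner about merging, so that its own $\Group$ computation returns the same minimum. Both arguments hinge on invariants propagated through $\refm{target}$ and $\refm{group}$ (agreement of $\target$ within a group via $\Share$, and the fact that at $\gamma_0$ every group identifier $i$ satisfies $i.\invar{\group} = i$, which is preserved because $\invar{\group}$ is not modified by $\merge$). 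Spelling these invariants out cleanly, rather than the induction or the domain argument, is where the bulk of the work will be.
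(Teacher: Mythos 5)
Your proposal is correct and follows essentially the same route as the paper's proof: the first conjunct falls out because the guard of $\refm{groups}$ is disabled at $\gamma$, and the second reduces to $v.\groups[v.\group] = v.\group$, handled by the propagation property of $\Share$ together with a case split on whether $g(l_v)$ merged, using that the chosen minimum identifier names itself. The paper's version is just a terser rendering of the same argument (it asserts $v.\group = \min\{l_v,u\}.\group = \min\{l_v,u\}$ directly where you spell out the distance induction and the $\Merging$ agreement invariants).
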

\begin{proof}
 Recall the definition of  $\Groupsok(v)$ in Table \ref{tbl:error}.
 By that definition, it suffices to show that,
 in configuration $\gamma$, we have
 $v.\groups[u] = \Share(v,u,\groups,v.\group)$ for all $u \in v.\domain$
 and $v.\groups[v.\group] = v.\group$.
 The former condition holds since $\calG_{\ref{merge:groups}}$
 holds in $\gamma$.
 The latter condition holds
 if $g(l_v)$ is not merged in execution $\varrho$,
 since $v.\group = l.\group = l$ holds in $\gamma$.
 Even if $g(l_v)$ merges with $g(u)$
 in $\varrho$,
 the latter condition holds
 since $\calG_{\ref{merge:group}}$ guarantees 
 $v.\group = \min\{l_v,u\}.\group = \min\{l_v,u\}$ in $\gamma$.
\end{proof}

\begin{lemma}
 \label{lem:groupdistok}
 If a maximal execution $\varrho = \gamma_0, \gamma_1, \dots$
 of $\merge$ where $\gamma_0 \in \lnot \errorM$
 terminates at configuration $\gamma$,
 then $\gamma^{\rmcopy}$ satisfies
 $\Groupdistok(v)$ for all $v \in V$.
\end{lemma}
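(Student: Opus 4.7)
The plan is to verify both conjuncts of $\Groupdistok(v)$ in $\gamma^\rmcopy$ for every $v \in V$ and every $u \in g_v(l_v)$: the self-consistency equation $v.\invar{\groupdist}[u] = \Distance(v,u,\invar{\groupdist},\same{v})$, and the inequality $v.\invar{\groupdist}[u] \neq k+1$.

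For the equation, I would invoke termination of $\varrho$: since $\calG_{\ref{merge:groupdist}}(v)$ must be false at every process in $\gamma$, we get $v.\groupdist[u] = \Distance(v,u,\groupdist,\{w \in N_v \mid w.\group = v.\group\})$ for every $u \in v.\domain$. The copy step sets $v.\invar{\groupdist} \la v.\groupdist$ and $v.\invar{\group} \la v.\group$ uniformly across processes, so in $\gamma^\rmcopy$ the neighbor set $\{w \in N_v \mid w.\group = v.\group\}$ coincides with $\same{v}$ and the equation transfers verbatim under the bar decoration.

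For the inequality, I would combine the recurrence with the automatic base case $v.\invar{\groupdist}[v] = 0$ (which follows from $\Distance(v,v,\cdot,\cdot) = 0$) and run a Bellman--Ford argument in the spirit of Lemma~\ref{lem:mergedist} to deduce $v.\invar{\groupdist}[u] = d_{g(l_v)}(v,u)$ for every $u \in g_v(l_v)$. It then suffices to show $\diam(\gsub(g(l_v))) \le k$ in $\gamma^\rmcopy$. Because $\Merging(v)$ requires mutual targeting and each group has a single $\Target$, at most one merge per group is possible within a single execution of $\merge$; consequently every group in $\gamma^\rmcopy$ is either (i) an intact initial group from $\gamma_0$, or (ii) the union $g_0(i) \cup g_0(j)$ of two initial groups merged during $\varrho$. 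Case (i) is immediate from $\gamma_0 \in \lnot \errorM$, which already enforces the bound. In case (ii), the assignment by $\refm{group}$ required $\Merging(v) = \tr$, i.e., $v.\stampdist[\Target(v)] = \bot$; by $\refm{stampone}$ and $\refm{stampdist}$, no process $v' \in g_0(i)$ has any $w \in g_0(j)$ with $v'.\mergedist[w] = k+1$, and Lemma~\ref{lem:mergedist} reinterprets this as the absence of any cross-group pair at distance exactly $k+1$ in $\gsub(g_0(i) \cup g_0(j))$.

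The main obstacle is upgrading this stamp condition to the full diameter bound $\diam(\gsub(g_0(i) \cup g_0(j))) \le k$. I would argue by contradiction: if some pair in $g_0(i) \cup g_0(j)$ had distance at least $k+1$, pick a realizing shortest path $p = u_0, u_1, \dots, u_m = q$ of length $m \ge k+1$ and, without loss of generality, assume $p \in g_0(i)$. Since prefixes of shortest paths are shortest, $d_{g_0(i) \cup g_0(j)}(p, u_{k+1}) = k+1$, so $u_{k+1} \notin g_0(i)$ (otherwise $d_{g_0(i)}(p,u_{k+1}) \ge k+1$ would contradict the initial diameter bound $\diam(\gsub(g_0(i))) \le k$). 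Hence $u_{k+1} \in g_0(j)$, producing the forbidden cross-group pair at distance exactly $k+1$. The only delicate point is checking that the initial diameter bounds used here really do follow from $\gamma_0 \in \lnot \errorM$, for which one must unpack the $\Groupdistok$ conjunct of $\errmerge$ and appeal to the Bellman--Ford characterization of $\invar{\groupdist}$ in $\gamma_0$ that was also used for part one.
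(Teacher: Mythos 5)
Your proof is correct and follows the same decomposition as the paper's: the recurrence conjunct follows from the guard of $\refm{groupdist}$ being disabled at termination (and transferring verbatim under the copy), and the bound $\neq k+1$ is handled by splitting on whether $g(l_v)$ merged during $\varrho$. The only difference is one of detail: where the paper simply asserts that a merge forces $d_{g(l_v)\cup g(w)}(v,u)\in[0,k]$ (``otherwise $g(l_v)$ cannot be merged with $g(w)$''), you actually justify this via the $\stampdist=\bot$ condition in $\Merging$, Lemma~\ref{lem:mergedist}, and the shortest-path-prefix argument upgrading ``no cross-group pair at distance exactly $k+1$'' to the full diameter bound --- a worthwhile elaboration of the same route rather than a different one.
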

\begin{proof}
 Recall the definition of  $\Groupdistok(v)$ in Table \ref{tbl:error}.
 By that definition,
 it suffices to show that, in configuration $\gamma$,
 we have 
 $v.\groupdist[u] =
 \Distance(v,u,\{w.\groupdist[u] \mid w \in N_v, w.\group = v.\group\})$
 and $v.\groupdist[u] \neq k+1$
 for all $u \in \{w \in N_v \mid w.\group = v.\group\}$.
 The former condition holds since $\calG_{\ref{merge:groupdist}}$
 holds in $\gamma$.
 The latter condition holds
 if $g(l_v)$ is not merged in execution $\varrho$,
 since $v.\groupdist[u] = d_{g(l_v)}(v,u) \in [0,k]$
 holds 
 in $\gamma$.
 If $g(l_v)$ merges with some group $g(w)$ in $\varrho$,
 then $d_{g(l_v)\cup g(w)}(v,u) \in [0,k]$ holds;
 otherwise $g(l_v)$ cannot be merged with $g(w)$.
 Hence, even if $g(l_v)$ merges with $g(w)$
 in $\varrho$, the latter condition holds since
 $\calG_{\ref{merge:groupdist}}$
 guarantees
 $v.\groupdist[u] = d_{g(l_v)\cup g(w)}(v,u) \in [0,k]$ in
 $\gamma$.
 \end{proof}

\begin{lemma}
 \label{lem:stampok}
 If a maximal execution $\varrho = \gamma_0, \gamma_1, \dots$
 of $\merge$ where $\gamma_0 \in \lnot \errorM$
 terminates at configuration $\gamma$,
 then $\gamma^{\rmcopy}$ satisfies
 $v.\invar{\stampon}[u]\Rightarrow \Stampok(v,u)$
 for all $v \in V$ and $u \in v.\domain$.
\end{lemma}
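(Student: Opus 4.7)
The plan is to exploit the fact that $\gamma$ is a terminating configuration, so every action $\rmM_i$ of $\merge$ is disabled in $\gamma$ and hence each output variable already equals the right-hand side function evaluated in $\gamma$. Passing to $\gamma^{\rmcopy}$ only shifts these output values into the corresponding $\invar{\cdot}$ slots, leaving the (never-updated) input variables $\domain$, $\dist$, $\height$, $\initgroup$ coinciding with their values at $\gamma_0 \in \lnot \errorM$. Fix $v$ and $u \in v.\domain$ with $\gamma(v).\stampon[u]$ and abbreviate $i = l_v$, $j = u$. The converged guard of $\refm{stampon}$ immediately yields $\gamma(v).\stampdist[u] \neq \bot$, $\lnot \Merging(v)$, and $\lnot \gamma(v).\merging[u]$; combined with the converged $\refm{merging}$, $\refm{group}$, $\refm{groups}$, this propagates non-merging across all of $g(i)$ and $g(j)$, so neither group is merged in this iteration and the output group identifiers on both sides agree with the input ones.

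I then split on the three branches of $\Stampone(v,u)$. The $\bot$ branch is ruled out because $\gamma(v).\stampdist[u] \neq \bot$ forces the $\Stampdist$ recurrence to terminate at a base case $w = \Stampone(w,u)$, requiring a non-$\bot$ source somewhere in $g(i)$. In the \emph{fresh} case $\Detector(v,u)$, the stamp variables are freshly assigned by $\refm{stampone}$--$\refm{stamptwo}$, and the conjuncts of $\Stampok(v,u)$ can be read off directly: the $\Minprocess$ macro yields group-wide agreement on $s_1, s_2$ and confines them to $g_v(l_v)$; Lemma \ref{lem:mergedist} identifies $\mergedist$ with the $g(i)\cup g(j)$-distance, so $\Stampdist$ realizes the claimed distance recurrence and the guard $\Stampdist(v,u)=k+1$ in $\refm{stamptwo}$ yields $s_2 = v \Rightarrow s_D = k+1$. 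In the \emph{preserved} case, where $\Stampone$ takes the $v.\invar{\stampon}[u]$ branch, I invoke $\gamma_0 \in \lnot \errorM$ directly: it gives $\Stampok(v,u)$ at $\gamma_0$, and since neither $g(i)$ nor $g(j)$ is merged during $\varrho$, the group memberships, $\Share$-based agreements, and the entire chain of $\stampone, \stamptwo, \stampdist$ values are carried through unchanged from $\gamma_0$ to $\gamma$.

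The main obstacle is the pair of cross-group conjuncts $\forall w \in N_v(u): w.\invar{\stampon}[l_v]$ and $s_2 = \bot \Rightarrow \forall w \in N_v(u): w.\invar{\stamptwo}[l_v] \neq \bot$, because they mix local state at $v$ with state inside $g(j)$. In the fresh case this requires observing that $\Detector(v,u)$ and $\Detector(u,v)$ are designed so that exactly one holds (thanks to the tie-breaking $l_v \le u$ clause), and that when $\Detector(v,u)$ fires for the "$\vfind$-side" the twin actions on the $g(j)$-side simultaneously fire for the "$\vfound$-side", yielding the paired stamps $(s_1 = \vfind, s_2 = \bot)$ at $v$ and $(s_1 = \bot, s_2 = \vfound)$ at every $w \in N_v(u)$. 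In the preserved case it reduces to the same pairing inherited from the $\Stampok$ guarantee at $\gamma_0$, together with the observation that $\lnot\Merging$ on both sides prevents either half of the paired stamp from being torn down.
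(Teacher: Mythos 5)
Your overall strategy matches the paper's: use the fact that every action of $\merge$ is disabled at $\gamma$, so each stamp variable equals its defining function; identify the pair $(\vfind,\vfound)$ witnessing unmergeability; and read off the conjuncts of $\Stampok(v,u)$ from the converged values together with the non-merging of both groups. The preserved-stamp case via $\gamma_0\in\lnot\errorM$ and the fresh case via $\Detector$ and Lemma~\ref{lem:mergedist} are both in the right spirit, and your handling of the cross-group conjuncts is essentially the paper's observation that one group carries $(\stampone,\stamptwo)=(\vfind,\bot)$ while the other carries $(\bot,\vfound)$.

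However, your case analysis has a hole: the claim that the $\bot$ branch of $\Stampone(v,u)$ is ruled out is false. The recurrence defining $\Stampdist(v,u)$ takes its minimum not only over $\{w.\stampdist[u]\mid w\in S_v\}$ but also over $\{w.\stampdist[l_v]\mid w\in N_v(u)\}$, so the chain may terminate at a base case $w=\Stampone(w,l_v)$ with $w$ in the \emph{other} group $g(u)$ — not ``somewhere in $g(i)$'' as you assert. This is exactly what happens at every process $v$ of the non-detector group of a freshly created stamp: there $\Detector(v,u)$ is false and $v.\invar{\stampon}[u]$ was false at $\gamma_0$ (otherwise $u$ would not have been a candidate), so $\Stampone(v,u)=\bot$, yet $v.\stampdist[u]\neq\bot$ and hence $v.\stampon[u]$ becomes true; such $v$ satisfies the lemma's hypothesis and must be handled. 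Your third paragraph contains the needed material (the paired stamp $(\bot,\vfound)$ on that side), but as written those processes fall into a case you have declared impossible, so the conjuncts of $\Stampok(v,u)$ — in particular $s_1\neq\bot\vee s_2\neq\bot$ and $s_2=v\Rightarrow s_D=k+1$ — are never verified for them. The repair is to organize the fresh case around the group pair rather than around which branch of $\Stampone(v,u)$ the particular $v$ takes, as the paper does by asserting a single source $\vfind\in g(l_v)\cup g(u)$ and describing both sides at once.
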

 \begin{proof}
 Recall the definition of  $\Stampok(v)$ in
 Table \ref{tbl:error}.
 Assuming that $v.\invar{\stampon}[u]$ holds
 in $\gamma^{\rmcopy}$,
 we will prove that $\Stampok(v,u)$ holds in $\gamma^{\rmcopy}$.
 Only in this proof,
 we denote $w_1.\stampone[u]$, $w_1.\stamptwo[u]$,
 and $w_1.\stampdist[u]$ by just
 $w_1.\stampone$, $w_1.\stamptwo$,
 \allowbreak and $w_1.\stampdist$ for all $w_1 \in g(l_v)$,
 and we denote $w_2.\stampone[l_v]$, $w_2.\stamptwo[l_v]$,
 and $w_2.\stampdist[l_v]$ by just $w_2.\stampone$,
 $w_2.\stamptwo$,
 and $w_2.\stampdist$ for all $w_2 \in g(u)$.
 Since $v.\stampon[u]$ holds in $\gamma$,
 either $g(l_v)$ or $g(u)$ never merges in execution $\varrho$,
 and  there exists $\vfind \in g(l_v) \cup g(u)$
 such that $w.\stampdist = d_{g(l_v)\cup g(u)}(w,\vfind)$
 holds for all $w \in g(l_v)\cup g(u)$.
 Let $\vfound = \min\{w \in g(l_v)\cup g(u)
 \mid d_{g(l_v)\cup g(u)}(w,\vfind)=k+1\}$.
 (Note that $\vfound \neq \bot$.)
 In $\gamma$, we have $(w.\stampone,w.\stamptwo) = (\vfind,\bot)$
 for all $w \in g(\vfind.\invar{\group})$,
 and $(w.\stampone,w.\stamptwo) = (\bot,\vfound)$
 for all $w \in g(\vfound.\invar{\group})$.
 All of these prove that $\Stampok(v,u)$ holds.
 \end{proof}

 In the rest of this section,
 we denote $\cgoal(\merge,\errmerge)$ simply by $\cgoal$.

\begin{lemma}[Correctness]
 \label{lem:correctness}
 Every $\gamma \in \cgoal$
 satisfies $\leg$.
\end{lemma}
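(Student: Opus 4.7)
The plan is to verify the four conditions defining $\leg$ for the partition $\{g(i)\mid i\in V,\ g(i)\neq\emptyset\}$ of $V$. Coverage and disjointness are immediate since each process holds exactly one $\group$ value. The per-group diameter bound follows from $\gamma\in\lnot\errorM$ combined with $\gamma^{\rmcopy}=\gamma$: the conjuncts $\Groupsok(v)$ and $\Groupdistok(v)$ inside $\errmerge$ force $v.\invar{\groupdist}[u]=d_{g(l_v)}(v,u)$ for every $u\in g_v(l_v)$, with value never equal to $\dmax+1$, so a short induction gives $\diam(\gsub(g(l_v)))\le\dmax$ for each group.

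For pairwise non-mergeability of distinct groups $g(i),g(j)$, the ``not near'' case is easy. Either they share no edge of $G$, in which case $\gsub(g(i)\cup g(j))$ is disconnected and of diameter $\infty$; or some $u_1\in g(i),u_2\in g(j)$ already satisfy $d(u_1,u_2)>\dmax$ in $G$, and this bound only grows when restricted to the induced subgraph. If the groups are near, the plan is to show $v.\invar{\stampon}[j]$ holds for some $v\in g(i)$ and then apply Lemma \ref{lem:stampon}.

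The main obstacle is showing that every near pair is stamped in $\cgoal$. I would argue by contradiction, considering the functional ``target graph'' whose nodes are the groups possessing at least one near unstamped neighbor and whose edges send $g(i)$ to $g(\Target(v))$ for $v\in g(i)$; this is well defined because $\refm{target}$ is disabled, making $\Target(v)$ constant on each group. Since being ``near and unstamped'' is a symmetric relation on groups, this digraph is nonempty whenever some near pair is unstamped, and a finite functional digraph must then contain a directed cycle. A 2-cycle $\Target(i)=j,\Target(j)=i$ triggers $\Merging$ on both sides: if the two groups are mergeable, then on the larger-identifier side $\Group(v)\neq v.\group$, enabling $\refm{group}$; if unmergeable, $\Detector$ holds on the smaller-identifier side, Lemma \ref{lem:mergedist} supplies the true $\mergedist$ values, and the update chain $\refm{stampone},\refm{stampdist},\refm{stampon}$ would install a stamp in $\gamma$, contradicting $\gamma\in\cgoal$. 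Cycles of length $\ell\ge 3$ are ruled out by a $\min$-selection argument: in $i_0\to i_1\to\cdots\to i_{\ell-1}\to i_0$, each $i_{j-1}\in\Candidates(i_j)$, so a case analysis across the prior and non-prior branches of $\Target$ forces $i_{j+1}\le i_{j-1}$ whenever $i_{j-1}$ and $i_{j+1}$ share the same priority class (and the priority-mismatched situations are excluded directly, since the prior branch of $\Target$ would then select a prior candidate distinct from $i_{j+1}$). Iterating around the cycle collapses all identifiers, contradicting distinctness. Hence the target graph is empty, so every near pair is stamped and, by Lemma \ref{lem:stampon}, non-mergeable.
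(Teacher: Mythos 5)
Your proposal follows the same overall route as the paper's proof: reduce minimality to the claim that in $\cgoal$ every near pair of groups carries a stamp, invoke Lemma \ref{lem:stampon} for stamped pairs, dispose of non-near pairs by the definition of mergeability, and read the diameter bound off $\Groupdistok$. The difference is one of depth rather than direction. The paper simply asserts as fact (ii) that $\Candidates(v)=\emptyset$ holds for every $v$ in $\cgoal$ and gives no argument; you identify this as the main obstacle and supply one, namely that the target map induces a functional digraph on the groups that still have candidates, that a nonempty such digraph must contain a cycle, that a $2$-cycle cannot survive in $\cgoal$ (mutual targets either merge, making $\refm{group}$ enabled or $\group\neq\invar{\group}$, or get stamped, removing each other from $\Candidates$), and that longer cycles are killed by the $\min$-selection in $\Target$. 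This is genuinely the content the paper's proof is missing, and your argument for it is essentially right. One inaccuracy: you claim both priority-mismatched transitions around a cycle are ``excluded directly,'' but only one direction is --- if $i_{j-1}$ is prior then $\Target(i_j)$ takes the prior branch and $i_{j+1}$ must be prior; the converse situation ($i_{j-1}$ non-prior, $i_{j+1}$ prior) can occur when $i_j$ has a prior candidate outside the cycle, and there you cannot conclude $i_{j+1}\le i_{j-1}$. The repair is to observe that priority propagates forward along each parity class of the cycle, so each class is uniformly prior or uniformly non-prior, after which every step in that class does satisfy the inequality and the collapse argument goes through. With that fix your write-up is a more complete proof of the lemma than the one in the paper.
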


\begin{proof}
 Let $V(i) = \{v \in V \mid v.\group = i\}$.
 In $\gamma \in \cgoal$,
 we have (i) $\diam(g(i)) \le k$ for all $i \in V \ \st\ g(i) \neq \emptyset$,
 (ii) $\Candidates(v)=\emptyset$ for all $v \in V$,
 and (iii) $g(i) = V(i)$ for all $i \in V$.
 By Lemma \ref{lem:stampon} and (ii), 
 every $g(i)$ is not mergeable with any other group in $\gamma$.
 Hence, by (i) and (iii),
 $\{V(i) \neq \emptyset \mid i \in V\}$
 is a minimal partition of $G$
 with diameter-bound $\dmax$ in $\gamma$.
 \end{proof}

\begin{lemma}[Loop Convergence]
 \label{lem:iteration}
 Let $\varrho_0, \varrho_1 \dots$ 
 be an infinite sequence of maximal executions of $\merge$
 where $\varrho_i =\gamma_{i,0},\gamma_{i,1},\dots,\gamma_{i,s_i}$,
 $\gamma_{0,0} \in \lnot \errorM$,
 and $\gamma_{i+1,0} = \gamma_{i,s_i}^\rmcopy$ for $i \ge 0$.
 Then, $\gamma_{j,s_j} \in \cgoal$
 and $\round(\varrho_0)+\round(\varrho_1)+\dots\round(\varrho_j) =O(n)$
 hold for some $j = O(n/\dmax)$.
\end{lemma}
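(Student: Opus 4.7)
The per-iteration round bound follows directly from Lemma~\ref{lem:shiftable} and Lemma~\ref{lem:mergeterminate}: every $\varrho_i$ terminates in $O(\dmax)$ rounds and leaves $\gamma_{i,s_i}^\rmcopy \in \lnot\errorM$, so the hypothesis $\gamma_{i,0} \in \lnot\errorM$ propagates through the whole sequence. Thus it suffices to show $j = O(n/\dmax)$; the total round count is then $j\cdot O(\dmax) = O(n)$.

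To bound $j$, I would introduce a potential on configurations in $\lnot\errorM$. Let $\numbergroup(\gamma) = |\{l_v : v \in V\}|$ count distinct groups and $\numberprior(\gamma)$ count those whose identifying process $i$ satisfies $i.\invar{\prior}[i] = \tr$. By Lemma~\ref{lem:numgroups}, $\numbergroup(\gamma_{0,0}) \le 2n/\dmax + 1$; by action $\refi{prior}$, every $\invar{\prior}[\cdot]$ is $\fl$ at $\gamma_{0,0}$, so $\numberprior(\gamma_{0,0}) = 0$. Hence the basic potential $\Phi(\gamma) = 2\numbergroup(\gamma) + \numberprior(\gamma)$ starts at $O(n/\dmax)$.

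The key dynamical fact, read off from the definition of $\Prior$ in Table~\ref{tbl:mergefunctions}, is that a non-prior group can turn prior only via a merge event ($\Merging = \tr$), and a prior group can turn non-prior only by becoming saturated ($\Saturated = \tr$). Consequently, any iteration with at least one merge decreases $\numbergroup$ by at least one while raising $\numberprior$ by at most one per merge, so $\Phi$ drops by at least one; any merge-free iteration containing at least one prior-to-non-prior transition strictly decreases $\numberprior$, and hence $\Phi$.

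The hardest case, and the main obstacle, is a merge-free iteration with no saturation transition that nonetheless does not reach $\cgoal$. Such an iteration is nontrivial only by creating a fresh stamp for a previously unstamped near pair whose endpoints each still carry other unstamped near groups, so the basic $\Phi$ need not strictly decrease. I would handle this by augmenting $\Phi$ with a term $\Psi(\gamma)$ counting unstamped near pairs and arguing by amortization: $\Psi$ strictly drops in every pure-stamping iteration, while it can only grow when a merge clears the stamps incident to the two newly merged groups. Bounding the total $\Psi$-growth by $O(n/\dmax)$ requires a structural argument using that the initial partition from $\init$ has $O(n/\dmax)$ groups and that each merge touches only a locally bounded near set, so the aggregate stamp-clearing over the $O(n/\dmax)$ merges is $O(n/\dmax)$. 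The refined potential then starts at $O(n/\dmax)$ and strictly decreases on every non-terminal iteration, yielding $j = O(n/\dmax)$ and, via the per-iteration $O(\dmax)$ round bound, a total round count of $O(n)$ as claimed.
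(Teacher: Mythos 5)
Your setup is fine as far as it goes: the $O(\dmax)$ rounds per iteration from Lemma~\ref{lem:shiftable}, the monotonicity of $2\numbergroup+\numberprior$, and the fact that this quantity strictly drops at every merge and at every prior-to-non-prior transition all match the paper's proof. The gap is in your ``hardest case.'' Your auxiliary potential $\Psi$, the number of unstamped near pairs, is \emph{not} $O(n/\dmax)$: a single group can be near to $\Theta(n/\dmax)$ other groups, so $\Psi$ can already be $\Theta((n/\dmax)^2)$ at $\gamma_{0,0}$, and each merge can resurrect up to $\Theta(n/\dmax)$ cleared stamps, so the aggregate growth of $\Psi$ over the $O(n/\dmax)$ merges is $\Theta((n/\dmax)^2)$ rather than $O(n/\dmax)$. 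The claim that ``each merge touches only a locally bounded near set'' has no support in the algorithm. Even granting that every non-terminal, merge-free, saturation-free iteration creates a fresh stamp (which you assert but do not prove), this potential only yields $j=O((n/\dmax)^2)$, which is weaker than the lemma.

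The idea you are missing is that stamping happens \emph{in parallel}, which is exactly what the prior mechanism buys. Let $g(j)$ be the prior group of minimum identifier (or, if no prior group exists, the group of minimum identifier with $\Candidates(j)\neq\emptyset$). Every group $g(i)$ with $j\in\Candidates(i)$ then satisfies $\Target(i)=j$, because $j$ is the minimum (prior) candidate of every such $i$. Hence in a single iteration all of $g(j)$'s unmergeable candidates are stamped simultaneously, and within one further iteration $g(j)$ either merges with a remaining (mergeable, mutually targeting) candidate or is saturated and turns non-prior. This is why the paper can use a \emph{group-counting} rather than a pair-counting potential: it adds $\numberblack$, the number of ``black'' groups (non-prior groups having a non-prior candidate), which never increases and is at most $\numbergroup=O(n/\dmax)$, and shows that $\total=2\numbergroup+\numberprior+\numberblack$ strictly decreases every two iterations outside $\cgoal$. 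Replacing your $\Psi$ by $\numberblack$, together with the two-iteration merge-or-saturate argument above, is the repair; without it the $O(n/\dmax)$ bound on $j$ does not follow.
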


\begin{proof}
 We say that a group $g(v)$ is black
 if $g(v)$ is non-prior and
 some non-prior group $g(u)$ exists
 such that $u \in \Candidates(v)$;
 otherwise $g(v)$ is white.
 Note that a black group may become white,
 but no white group becomes black.
 We denote the number of groups (\ie $|\{l_v \mid v \in V\}|$),
 the number of black groups,
 and the number of prior groups
 in configuration $\gamma \in \lnot \errorM$
 by $\numbergroup(\gamma)$, $\numberblack(\gamma)$,
 and $\numberprior(\gamma)$ respectively.
 We define $\total(\gamma)  = 2\numbergroup(\gamma)+\numberprior(\gamma)+\numberblack(\gamma)$.
 Note that
 both $2\numbergroup(\gamma)+\numberprior(\gamma)$ and $\numberblack(\gamma)$ are
 monotonically non-increasing,
 thus $\total(\gamma)$ is also
 monotonically non-increasing.
 The former $2\numbergroup(\gamma)+\numberprior(\gamma)$ decreases every time any two groups merges.
 In what follows, we show
 $\total(\gamma_{i,0}) > \total(\gamma_{i+2,0})$
 if $\gamma_{i} \notin \cgoal$.
 Consider first the case that
 there exists at least one
 prior group in $\gamma_{i,0}$.
 Let $g(j)$ be the group with the minimum $j$
 among those prior groups. 
 If $g(j)$ is also the minimum prior group
 in $\gamma_{i+1,0}$,
 then $g(j)$ merges with some group
 or becomes saturated and non-prior
 in $\varrho_i$ or $\varrho_{i+1}$,
 which gives $\total(\gamma_{i,0}) > \total(\gamma_{i+2,0})$.
 If $g(j)$ is no longer the minimum prior group
 in $\gamma_{i+1,0}$,
 some group must merge and become prior
 in $\varrho_i$,
 hence $\total(\gamma_{i,0}) > \total(\gamma_{i+2,0})$ holds.
 Consider next the case that there exists
 no prior group in $\gamma_{i,0}$.
 Let $g(j')$ be the group with the minimum $j'$
 such that $\Candidates(j') \neq \emptyset$
 in $\gamma_{i,0}$.
 By a similar discussion,
 in $\varrho_i$ or $\varrho_{i+1}$,
 $g(j)$ becomes saturated and non-prior,
 or some pair of groups merges. 
 In the former case,
 $\numberblack(\gamma_{i,0}) \le \numberblack(\gamma_{i+2,0})$
 holds.
 In the latter case,
 $2\numbergroup(\gamma_{i,0})+\numberprior(\gamma_{i,0}) \le 2\numbergroup(\gamma_{i+2,0})+\numberprior(\gamma_{i+2,0})$ holds.
Thus, in any case, 
 $\total(\gamma_{i,0}) > \total(\gamma_{i+2,0})$
 holds if $\gamma_{i} \notin \cgoal$.
 By Lemma \ref{lem:numgroups},
 $\total(\gamma) \le 4 \numbergroup(\gamma) = O(n/\dmax)$ holds.
 Hence, we have $\gamma_{j,s_j} \in \cgoal$ for some $j = O(n/\dmax)$,
from which $\round(\varrho_0)+\round(\varrho_1)+\dots\round(\varrho_j) =O(n)$ follows
because $\round(\varrho_i) = O(k)$ holds for each $i$ by Lemma \ref{lem:shiftable}.
\end{proof}

\begin{theorem}
 \label{therem:mei}
 Algorithm $\mei$ is silent self-stabilizing for $\leg$.
 Every maximal execution of the algorithm
 terminates within $O(n\diam /\dmax)$ rounds.
 The number of groups it produces,
 \ie $|\{v.\group \mid v \in V\}|$,
 is at most $2n/k + 1$.
 The space complexity per process of $\mei$
 is $O((n + \nfalse)\log n)$
 where $\nfalse$ is the number of identifiers
 stored in $v.\domain$ for some $v \in V$
 in an initial configuration,
 and which do not match the identifier of any $u \in V$.
\end{theorem}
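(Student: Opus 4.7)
The plan is to instantiate Theorem \ref{theorem:aep} with $\calA = \merge$, $\errA = \errmerge$, and $\calP = \init$. To do this, I first have to verify the three hypotheses of that theorem for these choices. Shiftable Convergence is exactly Lemma \ref{lem:shiftable}, Loop Convergence is Lemma \ref{lem:iteration} (which gives $\loopa = O(n/\dmax)$ and $\ra = O(n)$), and Correctness is Lemma \ref{lem:correctness}. The hypothesis on $\init$ (termination in $\lnot \errorM$ within $\tp$ rounds) follows from Lemma \ref{lem:init} with $\tp = O(\dmax)$. Silence and self-stabilization of $\mei$ for $\leg$ then follow immediately from Theorem \ref{theorem:aep}.

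For the time bound, I substitute $\tp = O(\dmax)$, $\ra = O(n)$, and $\loopa = O(n/\dmax)$ into the round bound $O(n + \tp + \ra + \loopa \diam)$ given by Theorem \ref{theorem:aep}. Since $\dmax \le n \le n\diam/\dmax$, this simplifies to $O(n\diam/\dmax)$.

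For the upper bound on the number of groups, the key observation is that every execution of $\mei$ terminates in $\cfinal$, so at the terminal configuration $v.\group = v.\invar{\group}$ for every $v$. By Lemma \ref{lem:numgroups} the partition produced by $\init$ has at most $2n/\dmax + 1$ distinct groups, and action $\refm{group}$ of $\merge$ never introduces a new group identifier (it either retains the current one or replaces it with the smaller one among two merging groups). Thus the number of distinct groups is monotonically non-increasing across all iterations of $\merge$ after the most recent $\init$, so the bound $2n/\dmax + 1$ carries over to the terminal configuration.

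For the space complexity, I combine three contributions: the $O(\log n)$ bits per process used by the loop wrapper and BFS subroutine (Theorem \ref{theorem:aep}), the non-array variables of Table \ref{tbl:variables} (each requiring $O(\log n)$ bits), and the array variables indexed by $v.\domain$. Each entry of every array variable holds either an identifier, a small distance value, or a Boolean, and thus fits in $O(\log n)$ bits. The crux is bounding $|v.\domain|$: in any reachable configuration after the first round, $v.\domain$ equals $\Domain(v) \subseteq N_v^{\dmax+1}$, of size at most $n$; in the initial configuration it may additionally contain up to $\nfalse$ false identifiers, so $|v.\domain| \le n + \nfalse$ at all times. Together with the copying variables (which only double storage), this gives $O((n+\nfalse)\log n)$ bits per process. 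The main delicate point to argue carefully is that the false identifiers that inflate $|v.\domain|$ cannot be amplified by the algorithm itself — only the initially present false identifiers contribute — because all other identifier-valued variables are populated by functions ($\Share$, $\Minprocess$, $\Distance$, $\Target$, $\Group$) whose outputs are drawn from the neighbors' variables and $v.\domain$, and cannot generate new false identifiers.
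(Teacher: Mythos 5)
Your proposal is correct and follows essentially the same route as the paper: the paper's own proof simply cites Theorem~\ref{theorem:aep} together with Lemmas~\ref{lem:init}, \ref{lem:numgroups}, \ref{lem:shiftable}, \ref{lem:correctness}, and \ref{lem:iteration} for the first three claims and declares the space bound trivial, and you have instantiated and assembled exactly those same ingredients (just with more of the bookkeeping spelled out, e.g.\ the monotonicity of the group count under $\refm{group}$ and the $|v.\domain|\le n+\nfalse$ bound).
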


\begin{proof}
 The first three arguments are proven
 by Lemmas \ref{theorem:aep}, \ref{lem:init},
 \ref{lem:numgroups},
 \ref{lem:shiftable}, \ref{lem:correctness},
 and \ref{lem:iteration}.
 The last argument about the space complexity
 is trivial.
\end{proof}

\section{Conclusion}
We have given
a silent self-stabilizing algorithm
for the minimal $\dmax$-grouping problem.
Given a network $G$ and a diameter-bound $\dmax$,
it guarantees that,
regardless of an initial configuration,
the network reaches a configuration
where
the diameter of every group is no more than $\dmax$
and
no two groups can be merged without violating the diameter-bound.
Its time complexity is $O(n\diam / \dmax)$
and its space complexity per process
is $O((n + \nfalse)\log n)$.
The number of groups it produces
is at most $2n/k + 1$.
A novel composition technique called
\emph{loop composition} is also given
and used in our algorithm.
 
\bibliographystyle{plain}
\bibliography{group}

\begin{thebibliography}{10}

\bibitem{Amis00}
A.~D Amis, R.~Prakash, T.~HP. Vuong, and D.~T. Huynh.
\newblock Max-min d-cluster formation in wireless ad hoc networks.
\newblock In {\em Proc.~of INFOCOM 2000}, volume~1, pages 32--41, 2000.

\bibitem{Datta00}
A.~K. Datta, S.~Gurumurthy, F.~Petit, and V.~Villain.
\newblock Self-stabilizing network orientation algorithms in arbitrary rooted
  networks.
\newblock In {\em Proc.~of ICDCS 2000}, pages 576--583, 2000.

\bibitem{Datta12}
A.~K. Datta, L.~L. Larmore, S.~Devismes, K.~Heurtefeux, and Y.~Rivierre.
\newblock Competitive self-stabilizing k-clustering.
\newblock In {\em Proc.~of ICDCS 2012}, pages 476--485, 2012.

\bibitem{Datta13}
A.~K. Datta, L.~L. Larmore, S.~Devismes, K.~Heurtefeux, and Y.~Rivierre.
\newblock Self-stabilizing small k-dominating sets.
\newblock {\em International Journal of Networking and Computing},
  3(1):116--136, 2013.

\bibitem{Datta11}
A.~K Datta, L.~L Larmore, and P.l Vemula.
\newblock An {$O(n)$-time} self-stabilizing leader election algorithm.
\newblock {\em Journal of Parallel and Distributed Computing},
  71(11):1532--1544, 2011.

\bibitem{DLMS17}
Ajoy~K Datta, Laurence~L Larmore, Toshimitsu Masuzawa, and Yuichi Sudo.
\newblock A self-stabilizing minimal k-grouping algorithm.
\newblock In {\em Proceedings of the 18th International Conference on
  Distributed Computing and Networking}, pages 3:1--3:10. ACM, 2017.

\bibitem{Deogun97}
J.~S. Deogun, D.~Kratsch, and G.~Steiner.
\newblock An approximation algorithm for clustering graphs with dominating
  diametral path.
\newblock {\em IPL}, 61(3):121--127, 1997.

\bibitem{D74}
EW~Dijkstra.
\newblock Self stabilizing systems in spite of distributed control.
\newblock {\em Communications of the ACM}, 17:643--644, 1974.

\bibitem{Dolev00}
S.~Dolev.
\newblock {\em Self-stabilization}.
\newblock MIT press, 2000.

\bibitem{Dolev99}
S.~Dolev and T.~Herman.
\newblock Parallel composition of stabilizing algorithms.
\newblock In {\em Proc.~of WSS 1999}, pages 25--32, 1999.

\bibitem{Ducourthial10}
B.~Ducourthial, S.~Khalfallah, and F.~Petit.
\newblock Best-effort group service in dynamic networks.
\newblock In {\em Proc.~of SPAA 2010}, pages 233--242, 2010.

\bibitem{Fernandess02}
Y.~Fernandess and D.~Malkhi.
\newblock K-clustering in wireless ad hoc networks.
\newblock In {\em Proceedings of the second ACM international workshop on
  Principles of mobile computing}, pages 31--37, 2002.

\bibitem{Herman92}
T.~R. Herman.
\newblock {\em Adaptivity through distributed convergence}.
\newblock PhD thesis, University of Texas at Austin, 1992.

\bibitem{Yamauchi09}
Y.~Yamauchi, S.~Kamei, F.~Ooshita, Y.~Katayama, H.~Kakugawa, and T.~Masuzawa.
\newblock Hierarchical composition of self-stabilizing protocols preserving the
  fault-containment property.
\newblock {\em IEICE transactions on information and systems}, 92(3):451--459,
  2009.

\bibitem{Yamauchi10}
Y.~Yamauchi, S.~Kamei, F.~Ooshita, Y.~Katayama, H.~Kakugawa, and T.~Masuzawa.
\newblock Timer-based composition of fault-containing self-stabilizing
  protocols.
\newblock {\em Information Sciences}, 180(10):1802--1816, 2010.

\end{thebibliography}

\end{document}